\tikzstyle{decision} = [diamond, draw, fill=blue!20, 
\tikzstyle{block} = [rectangle, draw, fill=blue!20, 
\tikzstyle{line} = [draw, -latex']
\tikzstyle{cloud} = [draw, ellipse,fill=red!20, node distance=3cm,
\DeclareRobustCommand\mysolidline{\begin{tikzpicture}[baseline=-0.5ex]\draw[line width=2] (0,0) -- (0.7,0);\node[] at (0,0) {}; \node[] at (0.75,0) {};\end{tikzpicture}}
\DeclareRobustCommand\mydashedline{\begin{tikzpicture}[baseline=-0.5ex]\draw[densely dashed, line width=2] (0,0) -- (0.7,0); \node[] at (0,0) {}; \node[] at (0.75,0) {};\end{tikzpicture}}
\DeclareRobustCommand\mydottedline{\begin{tikzpicture}[baseline=-0.5ex]\draw[densely dotted, line width=2] (0,0) -- (0.7,0); \node[] at (0,0) {}; \node[] at (0.75,0) {}; \end{tikzpicture}}
\theoremstyle{definition}
\newtheorem{definition}{Definition}
\newtheorem{theorem}{Theorem}
\newtheorem{corollary}{Corollary}[theorem]
\newtheorem{lemma}{Lemma}[theorem]
\begin{document}
	
\title{Mapping NP-Hard Problems to Restricted Adiabatic Quantum Architectures
}

\author[1, $\S$]{Gary J. Mooney}
\author[1, $\S$]{Sam U.Y. Tonetto}
\author[1]{Charles D. Hill}
\author[1, $\dag$]{Lloyd C.L. Hollenberg}

\affil[1]{School of Physics \protect\\
          University of Melbourne \protect\\
          Parkville 3020, AUSTRALIA}

\date{Received: date / Accepted: date}

\maketitle
\begin{abstract}
We introduce a framework for mapping NP-Hard problems to adiabatic quantum computing~(AQC) architectures that are heavily restricted in both connectivity and dynamic range of couplings, for which minor-embedding -- the standard problem mapping method -- cannot be directly applied. Separating the mapping into two distinct stages, we introduce problem-specific reductions for both quadratic unconstrained binary optimisation (QUBO) and satisfiability (SAT) and develop the subdivision-embedding method that is suitable for directly embedding onto these heavily restricted architectures. The theory underpinning this framework provides tools to aid in the manipulation of Ising Hamiltonians for the purposes of Ising energy minimisation, and could be used to assist in developing and optimising further problem mapping techniques. For each of the problem mapping methods presented, we examine how the physical qubit count scales with problem size on architectures of varying connectivity.

\end{abstract}

\footnotetext[2]{lloydch@unimelb.edu.au}

\footnotetext[4]{Contributed equally to this work as first authors.}

\section{Introduction}
\label{Intro}
  Quantum computer hardware is rapidly developing, and while much focus is on the standard circuit model of quantum computing~\cite{IBM, Google, Rigetti, Intel, IonQ, Xanadu}, adiabatic quantum computing (AQC), first proposed by Farhi \textit{et al.}~\cite{farhi2000quantum}, is another promising near-term alternative~\cite{DWaveSys}. It has been shown to be polynomially equivalent~\cite{aharonov2008adiabatic, mizel2007simple} to the circuit model and is believed to be more robust against control errors and certain kinds of decoherence~\cite{tiersch2007non, ashhab2006decoherence, roland2005noise, sarandy2005adiabatic, childs2001robustness}. In contrast to circuit model quantum computing, AQC utilises continuous-time Hamiltonian evolution and the adiabatic theorem of quantum mechanics. It can leverage quantum tunnelling to help explore complex energy landscapes, making it well-suited to solving a wide variety of optimisation and sampling problems in a diverse range of fields~\cite{lucas2014ising}. Examples of such fields include machine learning~\cite{pudenz2013quantum,adachi2015application}, logistics~\cite{neukart2017traffic}, biochemistry~\cite{perdomo2008construction}, and quantum chemistry~\cite{babbush2014adiabatic,streif2019solving}. Applications are commonly modelled in the form of NP-Hard quadratic unconstrained binary optimisation (QUBO) problems~\cite{kochenberger2004unified} since they are at the centre of a framework unifying a broad variety of combinatorial optimisation problems including satisfiability (SAT)~\cite{farhi2000quantum}, number partitioning~\cite{grass2016quantum}, max clique~\cite{chapuis2019finding} and max cut~\cite{hamerly2019experimental}. Most experimental investigations of~AQC have thus far been conducted on devices from \textit{D-Wave Systems}~\cite{DWaveSys}, a Canadian company which recently announced a~5000-qubit programmable quantum annealer~\cite{boothby2019next}. Although quantum speedup on their devices has not been conclusively demonstrated, they have successfully exhibited quantum behaviour such as multi-qubit tunnelling~\cite{boixo2016computational}. As the development of AQC technology progresses, it has the potential to address large-scale real-world problems in both academia and industry.
  
  The typical method for solving an optimisation problem using AQC is to first map it onto a physical implementation of an Ising spin glass with programmable coupling strengths in the form of an Ising Hamiltonian, such that the ground state of the Hamiltonian corresponds to the solution to the original problem. Problem mapping onto the architecture is normally performed using minor-embedding~\cite{choi2008minor,choi2011minor}. Qualitatively, in minor-embedding logical spins are `stretched out' to form clusters of physical spins (where the spin representation is used to describe qubits) which are coupled tightly enough to act as single logical spins for the purposes of AQC. Thus, the range of allowable coupling strengths and degree of connectivity in a physical architecture limits the variety of problems that can be directly mapped using minor-embedding, and in more heavily constrained architectures can entirely prevent minor-embedding from being directly applied.
  
  A range of approaches targeted at alleviating architectural limitations of dynamic range have been developed within the AQC literature. Quantum annealing correction has been shown to effectively increase dynamic range by allowing a quantum device to more robustly operate at lower energy scales in the presence of thermal noise~\cite{pudenz2015quantum, pudenz2014error}. Several techniques based on engineering the quantum adiabatic algorithm have been shown to improve control errors~\cite{king2014algorithm}. A method that finds representations of Ising energies with large classical energy gaps between ground and first excited states has also been shown to reduce control errors~\cite{bian2014discrete}. Alternatively, for more restricted architectures, an AQC architecture based on a triangular lattice with settable local magnetic fields and $\{0, \pm J\}$ coupling strengths has been shown to be capable of embedding max-independent-set problems~\cite{kaminsky2004scalable}.
  
  Here we develop an alternative AQC problem mapping method to minor-embedding called subdivision-embedding, which is tailored towards heavily restricted architectures. In contrast to minor-embedding where logical \emph{spins} are `stretched out' so that they are represented by clusters of physical qubits, subdivision-embedding stretches out logical couplings to chains of physical qubits. In the context of graph theory, subdivision-embedding is a topological concept related to the graph homeomorphism problem~\cite{garey2002computers}, which was shown to be solvable in polynomial time for fixed-size undirected input graphs~\cite{robertson1995graph}. 
  Subdivision-embedding on a physical architecture for AQC requires particular parameter settings, for which ferromagnetically-coupled ancilla spins have been proposed to propagate couplings~\cite{kaminsky2004scalable}. We formally combine and extend these ideas to a general subdivision-embedding method for AQC. We then introduce reduction techniques for QUBO and SAT problems and show how they can be used in conjunction with subdivision-embedding to map these problems to architectures with constraints as restrictive as follows: each spin couples to a maximum of three other spins, coupling strengths consist of only $\{0, \pm J\}$ (unit coupling strengths) and spins have no local magnetic fields. The reduction technique used for QUBO reduces the number of couplings needed for any particular spin by replacing the spin with a cluster of spins, which increases the total number of required spins. This technique is general and can be applied to problems non-specific to QUBO.
  
  Our results provide explicit methods for mapping arbitrary~QUBO or~SAT instances to AQC architectures with heavily restricted dynamic range, and is thus in principle applicable to a wider range of architectures than current methods. This increases the plausibility of such restricted architectures being useful for~AQC and could potentially develop into more efficient methods of embedding particular problems onto less constrained architectures as well. We provide a theoretical basis that could aid in the development of further architecture embedding techniques. For example, formalised concepts such as effective coupling strength and edge influence may help with analysing clusters of spins within Ising spin glasses relating to the ground state, and the complete degree reduction transformation can be used to reduce local connectivity of an Ising spin glass encoding a problem. Since qubits, in general, are difficult to implement and the qubit count is typically inversely correlated with the energy gap in the quantum adiabatic algorithm, we perform qubit scaling analyses for both~QUBO and~SAT problem embeddings. Additionally, to help inform architecture design, we compare the embedding efficiency on architectures of varying connectivity. For the special case of grid-like architectures with crossings, we demonstrate an efficient and deterministic subdivision-embedding for SAT problems which greatly reduces both time and space complexity overheads when compared to heuristic embedding algorithms.
  
  This paper is organised as follows. In Section~\ref{sec:preliminaries} we introduce a pipeline that is used to help break down the process of mapping a problem onto a physical architecture into two primary stages: the reduction stage and the embedding stage.
  The embedding stage, which utilises subdivision-embedding, is the same for both problem classes and is described in more detail in~Section~\ref{sec:subdivisionembedding}. The reduction stage is described for~QUBO problems in~Section~\ref{QUBO} and for~SAT problems in~Section~\ref{sec:SAT}. An analysis of how the mapping procedure for~QUBO and~SAT scales with respect to problem size is provided in Section~\ref{sec:scaling}. A proof for subdivision-embedding is provided in Appendix~\ref{sec:spin-chain-analysis}. A proof for the complete degree reduction used for~QUBO problems is provided in~Appendix~\ref{sec:proof-of-degree-reduction} and its complexity analysis is provided in~Appendix~\ref{sec:degree-reduction-complexity}.

\newpage
\section{Overview of mapping stages} \label{sec:preliminaries}

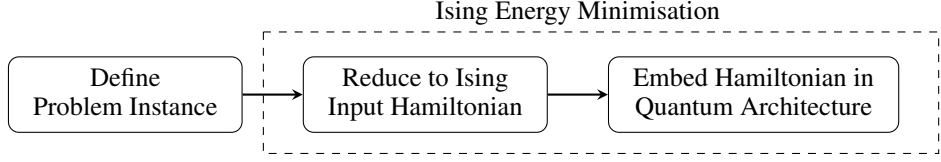
\begin{figure}[h!]
  \centering
  \begin{tikzpicture}

  \definecolor{tempcolor1}{RGB}{237,248,177}
  \definecolor{tempcolor2}{RGB}{127,205,187}
  \definecolor{tempcolor3}{RGB}{44,127,184}
  \definecolor{midcolors}{RGB}{49,130,189}

  \tikzstyle{arrow} = [thick,->,>=stealth]
  \tikzstyle{arrow2} = [dotted, thick, ->, >=stealth]
  \tikzstyle{process} = [rectangle, rounded corners, minimum width=2cm, minimum height=.5cm, text centered, draw=black, fill=midcolors!0]
  \tikzstyle{startstop} = [rectangle, rounded corners, minimum width=2cm, minimum height=.5cm,text centered, draw=black, fill=blue!0]
  \tikzstyle{sideprocess} = [rectangle, rounded corners, minimum width=2cm, minimum height=.5cm, text centered, draw = black, fill = orange!0]

  \node[sideprocess] (Problem) {\begin{tabular}{c} Define \\ Problem Instance \end{tabular}};
  \node[process, right=0.8cm of Problem] (Input) {\begin{tabular}{c} Reduce to Ising \\ Input Hamiltonian \end{tabular}};
  \node[startstop, right= 0.8cm of Input] (Embedded) {\begin{tabular}{c} Embed Hamiltonian in \\ Quantum Architecture \end{tabular}};

  \node[above left=0.2cm and 0.4cm of Input] (TopLeft) {};
  \node[below right=0.15cm and 0.4cm of Embedded] (BotRight) {};
  \draw[dashed] (TopLeft) rectangle (BotRight);

  \draw[arrow] (Problem) -- (Input);
  \draw[arrow] (Input) -- (Embedded) node[above=0.8cm, midway] {Ising Energy Minimisation};

\end{tikzpicture}
  \caption{A simplified pipeline for preparing a problem to be solved by AQC. A problem is reduced to a standardised Ising input Hamiltonian and then embedded in a quantum architecture by transforming the input Hamiltonian to an embedded Hamiltonian. In the last two stages, the problem has been reduced to the problem of Ising energy minimisation.\label{fig:SimplePipeline}}
\end{figure}

In order to solve a problem using AQC on a particular architecture, it can be pre-processed according to Figure~\ref{fig:SimplePipeline}. Broadly, this pre-processing can be separated into two stages: the reduction stage and the embedding stage. 

The {\em reduction stage} transforms a given problem (which can take various forms) into an Ising energy minimisation problem over logical Ising `spins', where the Ising Hamiltonian satisfies any constraints that may be required for an embedding algorithm to then map directly onto the architecture. We call an Ising Hamiltonian in this form the \emph{input Hamiltonian}. The Ising energy minimisation problem can be described as finding a spin configuration~$\boldsymbol{Z}:=(Z_1, \ldots Z_N)$ that minimises the cost function defined by an Ising Hamiltonian
\begin{align}
  \label{eq:IsingHamiltonian}
  H(\boldsymbol{Z}) = -\sum_{ij} J_{ij} Z_iZ_j - \sum_i h_i Z_i,
\end{align}
where $Z_i \in \{-1, +1\}$ is an eigenvalue of the Pauli $Z$ operator on spin $i$, $h_i$ are corresponding {\em local fields},~$J_{ij}$ are {\em coupling strengths}, and the first sum is over all pairs of spins~$ij$. With this sign convention,~$J_{ij} > 0$ corresponds to a ferromagnetic interaction. An Ising Hamiltonian~$H$, may be canonically represented by an Ising graph $G=(V, E)$, where each vertex represents a spin~$i\in V$ and is weighted by the local field~$h_i$, and each edge~$ij\in E$ represents a coupling and is weighted by the coupling strength~$J_{ij}$ (as shown in Figure~\ref{fig:HamilGraph}). We call the Ising graph representation for the input Hamiltonian the \emph{input graph} and for the programmable Ising Hamiltonian of the architecture the \emph{hardware graph}. For a particular vertex, the number of adjacent vertices is called its~\emph{degree} and a graph is called a \emph{cubic graph} when all vertices have degree three.

The {\em embedding stage} uses an embedding algorithm to map the input Hamiltonian directly onto a physical architecture, resulting in a modified input Hamiltonian called the \emph{embedded Hamiltonian}. This typically involves taking the logical spins in the input Hamiltonian and representing them with physical qubits in the embedded Hamiltonian, with coupling and local field strengths chosen such that ground states of the input Hamiltonian can be recovered from the ground states of the embedded Hamiltonian.

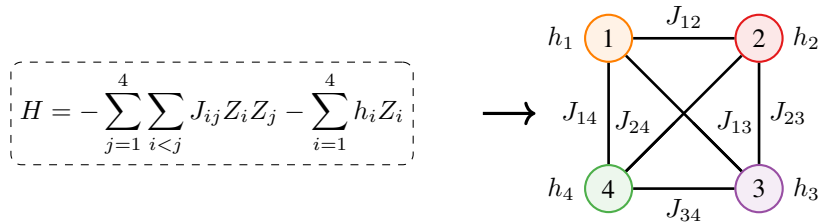
\begin{figure}[h!]
  \centering
  \begin{tikzpicture}

\definecolor{mygreen}{RGB}{77, 175, 74}
\definecolor{myyellow}{RGB}{255, 127, 0}
\definecolor{mypurple}{RGB}{152, 78, 163}
\definecolor{myred}{RGB}{228, 26, 28}
\definecolor{myblue}{RGB}{55, 126, 184}

  \node[rectangle, rounded corners, minimum width=2cm, minimum height=.5cm, text centered, draw=black, dashed] at (-1,0) {$\displaystyle H = -\sum_{j=1}^4\sum_{i < j}J_{ij}Z_iZ_j - \sum_{i=1}^4 h_iZ_i $};

  \draw[->, very thick] (2.6, 0) -- (3.3,0);

  \begin{scope}
    \pgftransformcm{1}{0}{0}{1}{\pgfpoint{150}{0}}
    \node[draw, circle, thick, label=left:$h_4$, draw=mygreen, fill=mygreen!10] (A) at (-1,-1) {4};
    \node[draw, circle, thick, label=left:$h_1$, draw=myyellow, fill=myyellow!10] (B) at (-1,1) {1};
    \node[draw, circle, thick, label=right:$h_3$, draw=mypurple, fill=mypurple!10] (C) at (1,-1) {3};
    \node[draw, circle, thick, label=right:$h_2$, draw=myred, fill=myred!10] (D) at (1,1) {2};
    \node[] (mid) at ($(A)!0.5!(D)$) {};
    \draw[line width=1] (A) -- (B) node[left, midway] {$J_{14}$};
    \draw[line width=1] (B) -- (C);
    \draw[line width=1] (mid) -- (C) node[midway, above right = 0.05cm and -0.15cm] {$J_{13}$};
    \draw[line width=1] (C) -- (D) node[right, midway] {$J_{23}$};
    \draw[line width=1] (D) -- (A);
    \draw[line width=1] (mid) -- (A) node[midway, above left = 0.05cm and -0.15cm] {$J_{24}$};
    \draw[line width=1] (A) -- (C) node[below, midway] {$J_{34}$};
    \draw[line width=1] (B) -- (D) node[above, midway] {$J_{12}$};
  \end{scope}

\end{tikzpicture}
  \caption{An Ising Hamiltonian (left) may be represented as an Ising graph (right), with weights $J_{ij}$ and $h_i$, and exhibiting the same connectivity. \label{fig:HamilGraph}}
\end{figure}
\section{Subdivision-embedding}\label{sec:subdivisionembedding}
The standard minor-embedding method of problem mapping cannot be directly applied to architectures with a small dynamic range of coupling strengths, since it assumes significantly larger strengths for pairs of qubits within logical spin clusters than pairs of qubits between the clusters. We develop subdivision-embedding as an alternative approach that requires minimal assumptions on the variability of coupling strengths. A flowchart summarising the difference between the mapping process for restricted and unrestricted architectures is shown in Figure~\ref{fig:pipeline}.

\begin{figure}[htb!]
  \begin{center}
    \pgfdeclarelayer{background}
\pgfdeclarelayer{foreground}
\pgfsetlayers{background,main,foreground}

\begin{tikzpicture}

  \definecolor{tempcolor1}{RGB}{237,248,177}
  \definecolor{tempcolor2}{RGB}{127,205,187}
  \definecolor{tempcolor3}{RGB}{44,127,184}
  \definecolor{midcolors}{RGB}{49,130,189}

  \tikzstyle{arrow} = [very thick,->,>=stealth]
  \tikzstyle{arrow2} = [dotted, thick, ->, >=stealth]
  \tikzstyle{process} = [rectangle, rounded corners, minimum width=2cm, minimum height=.5cm, text centered, draw=black, fill=midcolors!0]
  \tikzstyle{startstop} = [rectangle, rounded corners, minimum width=2cm, minimum height=.5cm,text centered, draw=black, fill=blue!0]
  \tikzstyle{sideprocess} = [rectangle, rounded corners, minimum width=2cm, minimum height=.5cm, text centered, draw=black, fill = orange!0]

  \node[sideprocess] (Problem) {\begin{tabular}{c} Define Problem \\ (e.g. SAT, QUBO) \end{tabular}};
  \node[process, below left=1cm and 0cm of Problem] (SubInput) {\begin{tabular}{c}Reduce to \\ subdivision-embeddable \\ Input $H_{\text{in}}$\end{tabular}};
  \node[process, below right=1cm and 0.3cm of Problem] (MinorInput) {\begin{tabular}{c}Reduce to \\ minor-embeddable \\ Input $H_{\text{in}}$\end{tabular}};
  \node[process, below=0.6cm of SubInput] (SubEmbed) {\begin{tabular}{c} Subdivision-embed to \\ get $H_{\text{emb}}$\end{tabular}};
  \node[process, below=0.6cm of MinorInput] (MinorEmbed) {\begin{tabular}{c} Minor-embed to \\ get $H_{\text{emb}}$\end{tabular}};

  \draw [arrow] (Problem) -| (SubInput) node[below=0.7cm, midway, fill=white, anchor=center, text=black] {Restricted Architecture};
  \draw [arrow] (Problem) -| (MinorInput) node[below=0.7cm, midway, fill=white, anchor=center, text=black] {Unrestricted Architecture};

  \draw [arrow] (SubInput) -- (SubEmbed);
  \draw [arrow] (MinorInput) -- (MinorEmbed);                                                          

\begin{pgfonlayer}{background}
  \node[above right=0.1cm and 0.3cm of MinorInput] (TopRight) {};
  \node[below left=0.1cm and 0.3cm of SubEmbed] (BotLeft) {};  
  \draw[dashed, fill=black!5] (BotLeft) rectangle (TopRight);
\end{pgfonlayer}

  \node (Stage1) at ($(SubInput)!0.52!(MinorInput)$) {\begin{tabular}{c} \textbf{Step 1:} \\ Reduction\end{tabular}};
  \node (Stage2) at ($(SubEmbed)!0.52!(MinorEmbed)$) {\begin{tabular}{c} \textbf{Step 2:} \\ Embedding\end{tabular}};
  \node[below=0.25cm of Stage2] (IEM) {Ising Energy Minimisation};

\end{tikzpicture}
    \caption{Detailed pipeline to prepare a problem for~AQC implementation on restricted and unrestricted architectures. A problem instance must be reduced to an Input Hamiltonian $H_{\mathrm{in}}$ before it is embedded as $H_{\mathrm{emb}}$ on a quantum processor. These Hamiltonians may be equivalently thought of as Ising Graphs $G_{\mathrm{in}}$ and $G_{\mathrm{emb}}$ respectively. The constraints of the quantum architecture dictate the choice of embedding method (subdivision-embedding on the left, minor-embedding on the right), which in turn determines the reduction procedure. \label{fig:pipeline}}
  \end{center}
\end{figure}
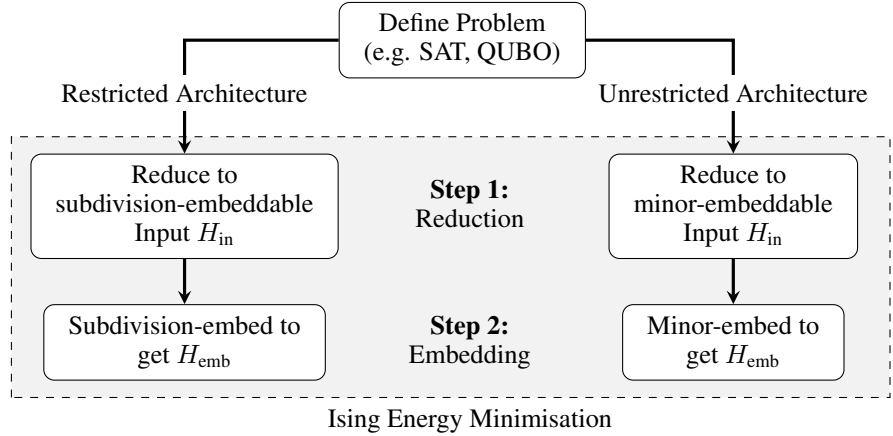

In order to precisely describe problem embedding, it can be useful to separate the topological aspects of embedding (the topological mapping) from the problem of assigning the appropriate coupling and local field strengths to an embedded Hamiltonian (the parameter-setting problem). For subdivision-embedding, the topological definition is as follows.
\begin{definition}[Subdivision-Embedding]
  A subdivision-embedding of a graph~$G$ in a graph~$F$ is a sequence of edge operations that each replace an edge $ik$ in $G$ with a vertex $j$ and two incident edges $ij$ and $jk$ such that the resulting graph, denoted by~$G_\mathrm{emb}$, is isomorphic to a subgraph of~$F$. We call~$G_\mathrm{emb}$ the embedded graph.
\end{definition}

An example of the difference between subdivision-embedding and minor-embedding at a topological level is shown in~Figure~\ref{fig:subminoremb}. In subdivision-embedding, the logical spins of the input Hamiltonian are mapped one-to-one onto physical qubits of the hardware graph, while in minor-embedding, they are mapped one-to-one onto highly-ferromagnetic clusters of qubits. The topology of a subdivision-embedding can always be used as a minor-embedding, however the converse is only true for input graphs with vertices of at most degree three. An important constraint for subdivision-embedding is that the degree of the input graph vertices must be smaller than or equal to the degree of the corresponding image vertices of the hardware graph. This degree constraint is addressed in the QUBO and SAT reduction stages introduced in Sections~\ref{QUBO} and~\ref{sec:SAT} respectively, which are designed to produce input graphs satisfying it.

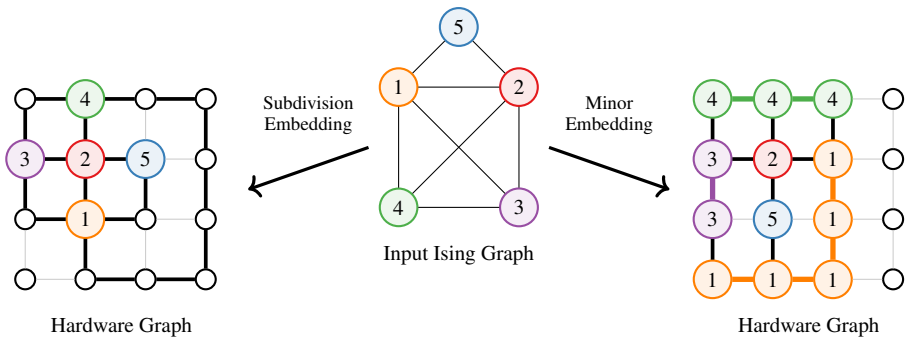
\begin{figure}[htb!]
  \centering
  \begin{tikzpicture}[scale=0.8, transform shape]

\definecolor{mygreen}{RGB}{77, 175, 74}
\definecolor{myyellow}{RGB}{255, 127, 0}
\definecolor{mypurple}{RGB}{152, 78, 163}
\definecolor{myred}{RGB}{228, 26, 28}
\definecolor{myblue}{RGB}{55, 126, 184}

\begin{scope}
  \pgftransformcm{1}{0}{0}{1}{\pgfpoint{0}{20}}
  \node[draw, circle, thick, fill=mygreen!10, draw=mygreen] (A) at (-1,-1) {4};
  \node[draw, circle, thick, fill=myyellow!10, draw=myyellow] (B) at (-1,1) {1};
  \node[draw, circle, thick, fill=mypurple!10, draw=mypurple] (C) at (1,-1) {3};
  \node[draw, circle, thick, fill=myred!10, draw=myred] (D) at (1,1) {2};
  \node[draw, circle, thick, fill=myblue!10, draw=myblue] (E) at (0,2) {5};
  \draw (A) -- (B);
  \draw (B) -- (C);
  \draw (C) -- (D);
  \draw (D) -- (A);
  \draw (A) -- (C);
  \draw (B) -- (D);
  \draw (E) -- (B);
  \draw (E) -- (D);
  \node[] at (0, -1.8) {Input Ising Graph};
\end{scope}

  \begin{scope} 
    \pgftransformcm{1}{0}{0}{1}{\pgfpoint{120}{-42.5}}

    \node[draw, fill=myyellow!10, circle, thick, draw=myyellow] (00) at (0,0) {1};
    \node[draw, fill=myyellow!10, circle, thick, draw=myyellow] (10) at (1,0) {1};
    \node[draw, fill=myyellow!10, circle, thick, draw=myyellow] (20) at (2,0) {1};
    \node[draw, circle, thick] (30) at (3,0) {};
    \node[draw, fill=mypurple!10, circle, thick, draw=mypurple] (01) at (0,1) {3};
    \node[draw, fill=myblue!10, circle, thick, draw=myblue] (11) at (1,1) {5};
    \node[draw, fill=myyellow!10, circle, thick, draw=myyellow] (21) at (2,1) {1};
    \node[draw, circle, thick] (31) at (3,1) {};
    \node[draw, fill=mypurple!10, circle, thick, draw=mypurple] (02) at (0,2) {3};
    \node[draw, fill=myred!10, circle, thick, draw=myred] (12) at (1,2) {2};
    \node[draw, fill=myyellow!10, circle, thick, draw=myyellow] (22) at (2,2) {1};
    \node[draw, circle, thick] (32) at (3,2) {};
    \node[draw, fill=mygreen!10, circle, thick, draw=mygreen] (03) at (0,3) {4};
    \node[draw, fill=mygreen!10, circle, thick, draw=mygreen] (13) at (1,3) {4};
    \node[draw, fill=mygreen!10, circle, thick, draw=mygreen] (23) at (2,3) {4};
    \node[draw, circle, thick] (33) at (3,3) {};

    \node[] at (1.6, -0.8) {Hardware Graph};

    \draw[line width=2pt, draw=mygreen] (03) -- (13) -- (23);
    \draw[line width=2pt, draw=mypurple] (02) -- (01);
    \draw[line width=2pt, draw=myyellow] (00) -- (10) -- (20) -- (21) -- (22);

    \draw[black, line width=1.4pt] (03) -- (02);
    \draw[black, line width=1.4pt] (23) -- (22);
    \draw[black, line width=1.4pt] (13) -- (12);
    \draw[black, line width=1.4pt] (11) -- (12);
    \draw[black, line width=1.4pt] (12) -- (22);
    \draw[black, line width=1.4pt] (02) -- (12);
    \draw[black, line width=1.4pt] (00) -- (01);
    \draw[black, line width=1.4pt] (10) -- (11);
    \draw[black!20] (11) -- (21);

    \draw[black!20] (01) -- (11);
    \draw[black!20] (23) -- (33);
    \draw[black!20] (22) -- (32);
    \draw[black!20] (21) -- (31);
    \draw[black!20] (20) -- (30);
    \draw[black!20] (30) -- (31);
    \draw[black!20] (31) -- (32);
    \draw[black!20] (32) -- (33);
  \end{scope}

  \begin{scope} 
    \pgftransformcm{1}{0}{0}{1}{\pgfpoint{-205}{-42.5}}

    \node[draw, circle, thick] (00) at (0,0) {};
    \node[draw, circle, thick] (10) at (1,0) {};
    \node[draw, circle, thick] (20) at (2,0) {};
    \node[draw, circle, thick] (30) at (3,0) {};
    \node[draw, circle, thick] (01) at (0,1) {};
    \node[draw, circle, thick, fill=myyellow!10, draw=myyellow] (11) at (1,1) {1};
    \node[draw, circle, thick] (21) at (2,1) {};
    \node[draw, circle, thick] (31) at (3,1) {};
    \node[draw, circle, fill=mypurple!10, thick, draw=mypurple] (02) at (0,2) {3};
    \node[draw, circle, fill=myred!10, thick, draw=myred] (12) at (1,2) {2};
    \node[draw, circle, fill=myblue!10, thick, draw=myblue] (22) at (2,2) {5};
    \node[draw, circle, thick] (32) at (3,2) {};
    \node[draw, circle, thick] (03) at (0,3) {};
    \node[draw, circle, fill=mygreen!10, thick, draw=mygreen] (13) at (1,3) {4};
    \node[draw, circle, thick] (23) at (2,3) {};
    \node[draw, circle, thick] (33) at (3,3) {};
    
    \node[] at (1.6, -0.8) {Hardware Graph};

    \draw[draw=black, line width=1.4pt] (02) -- (03) -- (13) -- (23) -- (33) -- (32) -- (31) -- (30) -- (20) -- (10) -- (11);
    \draw[draw=black, line width=1.4pt] (11) -- (21) -- (22) -- (12);
    \draw[draw=black, line width=1.4pt] (11) -- (12);
    \draw[draw=black, line width=1.4pt] (12) -- (02) -- (01) -- (11);
    \draw[draw=black, line width=1.4pt] (12) -- (13);

    \draw[black!20] (00) -- (01);
    \draw[black!20] (00) -- (10);
    \draw[black!20] (22) -- (23);
    \draw[black!20] (22) -- (32);
    \draw[black!20] (21) -- (31);
    \draw[black!20] (21) -- (20);

  \end{scope}
  
  \draw[->, very thick] (1.5,0.7) -- (3.5,0) node[above=0.4cm, midway] {\small \begin{tabular}{c}Minor \\ Embedding \end{tabular}};
  \draw[->, very thick] (-1.5,0.7) -- (-3.5,0) node[above=0.4cm, midway] {\small \begin{tabular}{c}Subdivision \\ Embedding\end{tabular}};
  
\end{tikzpicture}
  \caption{Subdivision-embedding (left) vs minor-embedding (right), of an Ising input graph (centre). In the subdivision-embedding, the logical spins are mapped one-to-one onto physical spins, while in minor-embedding, they are mapped to highly-ferromagnetic clusters. It is always possible for the topology of a subdivision-embedding to be used as a minor-embedding, but the converse is not always true unless the input graph vertices have at most degree three. Colour online.}
  \label{fig:subminoremb}
  \end{figure} 

Coupling and local field strength parameters need to be set in the embedded graph such that its ground state recovers the ground state of the input graph. In Definition~\ref{def:effective-coupling-strength} below we define the \textit{effective coupling strength}. We show in Appendix~A that transformations on spin chains (see Def.~\ref{def:spin-chain}) with no local fields that preserve the effective coupling strength ensure that the spin chain ground states of leaf (end) spins and the first-excited state energy penalty are preserved.

\begin{definition}[Effective Coupling Strength]\label{def:effective-coupling-strength}
The \emph{effective coupling strength} between leaf spins of a spin chain $Q$ with no local fields is defined as
\begin{equation}\label{eq:effective-coupling}
J_\mathrm{Q}^\text{eff} := (-1)^{p} \min_{ij}|J_{ij}|,
\end{equation} 
where $J_{ij}$ are coupling strengths of the edges along the spin chain, and $p$ is the number of these edges with negative coupling strengths.
\end{definition}

We further show the following four edge replacement operations (Corollaries \ref{theorem:chain-extension}-\ref{theorem:commuting-parity-edges}) preserve effective coupling strength and can potentially be used as steps in subdivision-embedding algorithms that include parameter-setting.

\begin{itemize}
	\item \textbf{Identity (Corollary~\ref{theorem:chain-extension}):} A single edge with coupling strength $J$ is replaced by a chain of two edges with coupling strengths $J$ and $J^\prime$ where $J^\prime \geq |J|$ and vice versa.
	\item \textbf{Inverse (Corollary~\ref{theorem:inverse-edges}):} A spin chain of two edges with negative coupling strengths~$J_1$ and~$J_2$ is replaced with a single edge with coupling strength $\min \{|J_1|, |J_2|\}$ and vice versa.
	\item \textbf{Commutativity (Corollary~\ref{theorem:commuting-edges}):} A spin chain of two edges is replaced with another spin chain of two edges where the coupling strengths for each edge have been swapped.
	\item \textbf{Parity Commutativity (Corollary~\ref{theorem:commuting-parity-edges}):} A spin chain of two edges is replaced with another spin chain of two edges where the sign of each edge has been swapped.
\end{itemize}
	
In particular, the identity operation provides a straightforward way for edges of the input graph to be subdivided for subdivision-embedding onto a hardware graph with uniform dynamic range of coupling strengths. Although in cases of more complicated coupling strength restrictions (e.g. some edges within the embedded graph can only have negative coupling strengths), a search algorithm that iteratively tries different available coupling strengths in the hardware graph with the above operations can help determine compatible coupling parameters.

\section{Reduction of the QUBO problem}\label{QUBO}

\begin{figure}[h!]
     \centering
     \begin{subfigure}[a]{0.45\textwidth}
         \centering
         \subcaption[short for lof]{Degree 4 to 3}\label{fig:device-4to3}
         \includegraphics[scale=0.25]{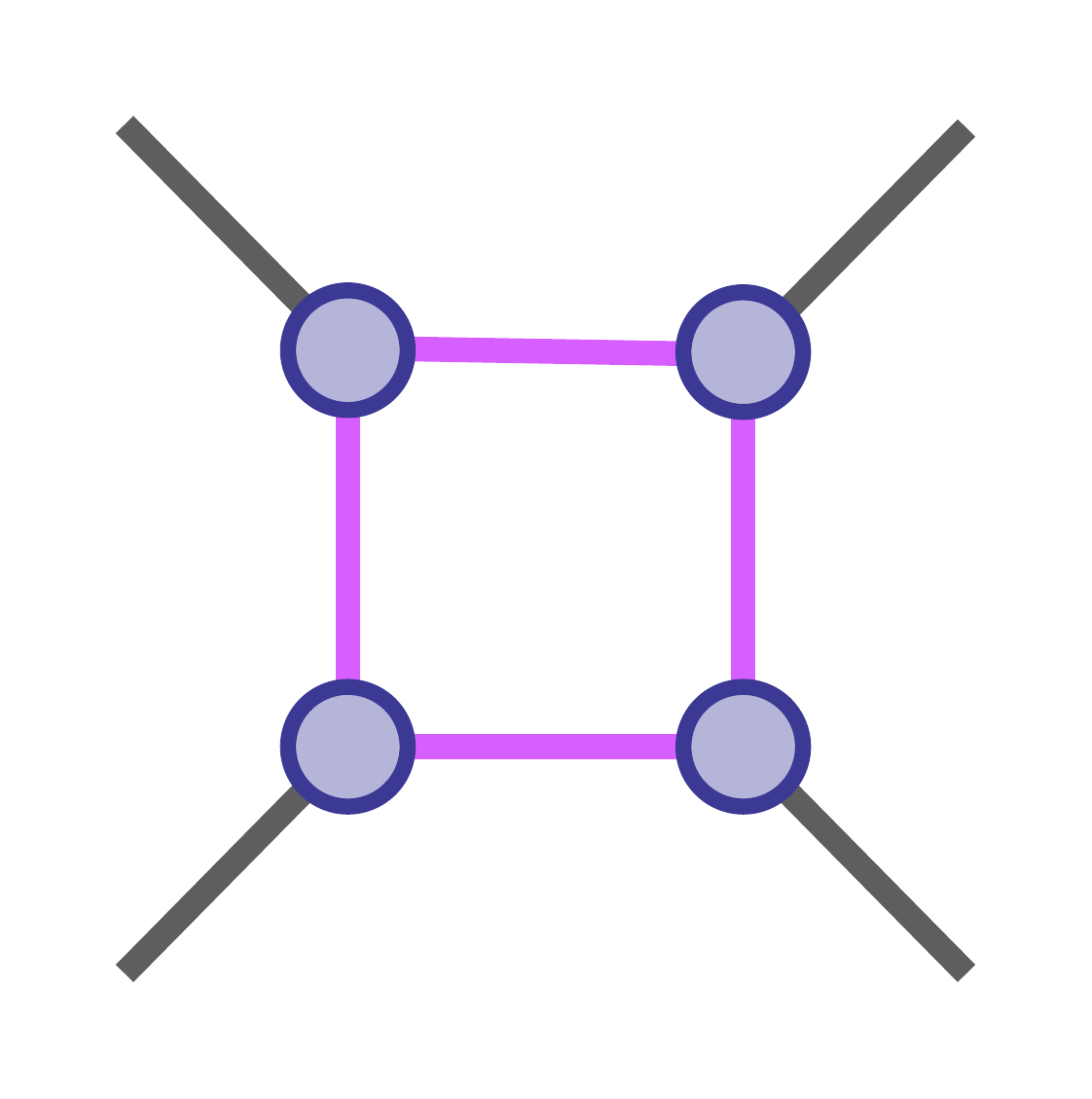}
     \end{subfigure}
     \hfill
     \begin{subfigure}[a]{0.45\textwidth}
         \centering
         \subcaption[short for lof]{Degree 5 to 4}\label{fig:device-5to4}
         \includegraphics[scale=0.25]{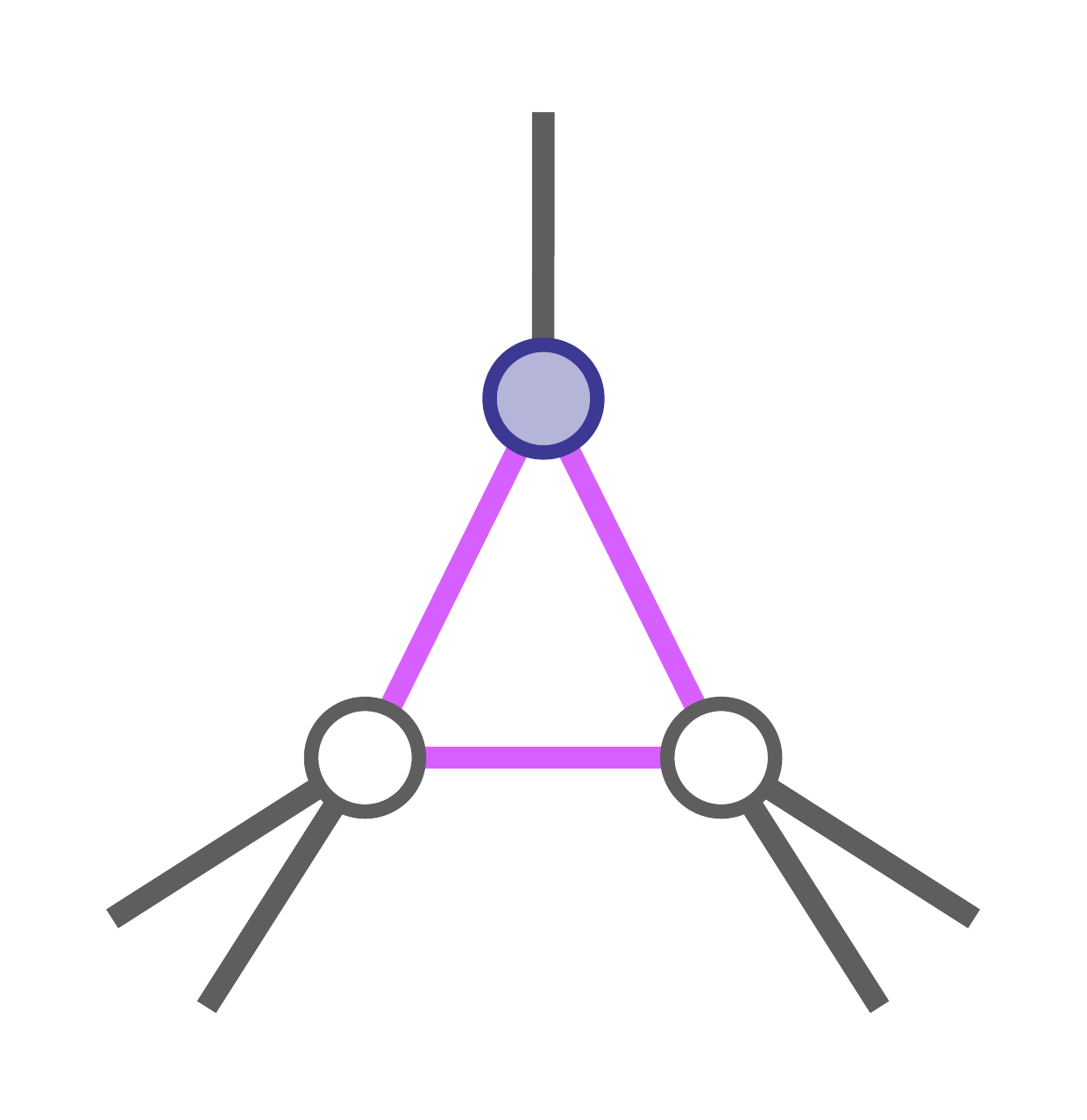}
     \end{subfigure}
        \caption{Two graphs produced after degree reducing a single vertex. These reductions were verified through simulation of graphs with unit coupling strengths. Shaded vertices represent spins that are logically equivalent to the initial spin and unshaded vertices represent ancilla spins. \textbf{(a)} Initial vertex of degree four transformed to a graph with vertices of degree three. 
  \textbf{(b)} Initial vertex of degree five transformed to a graph with vertices of degree three and four.}\label{fig:degree-reduction}
\end{figure}

\begin{figure}
  \centering
  \begin{subfigure}[a]{0.35\textwidth}
     \centering
     \subcaption[short for lof]{Before reduction}
     \includegraphics[scale=0.45]{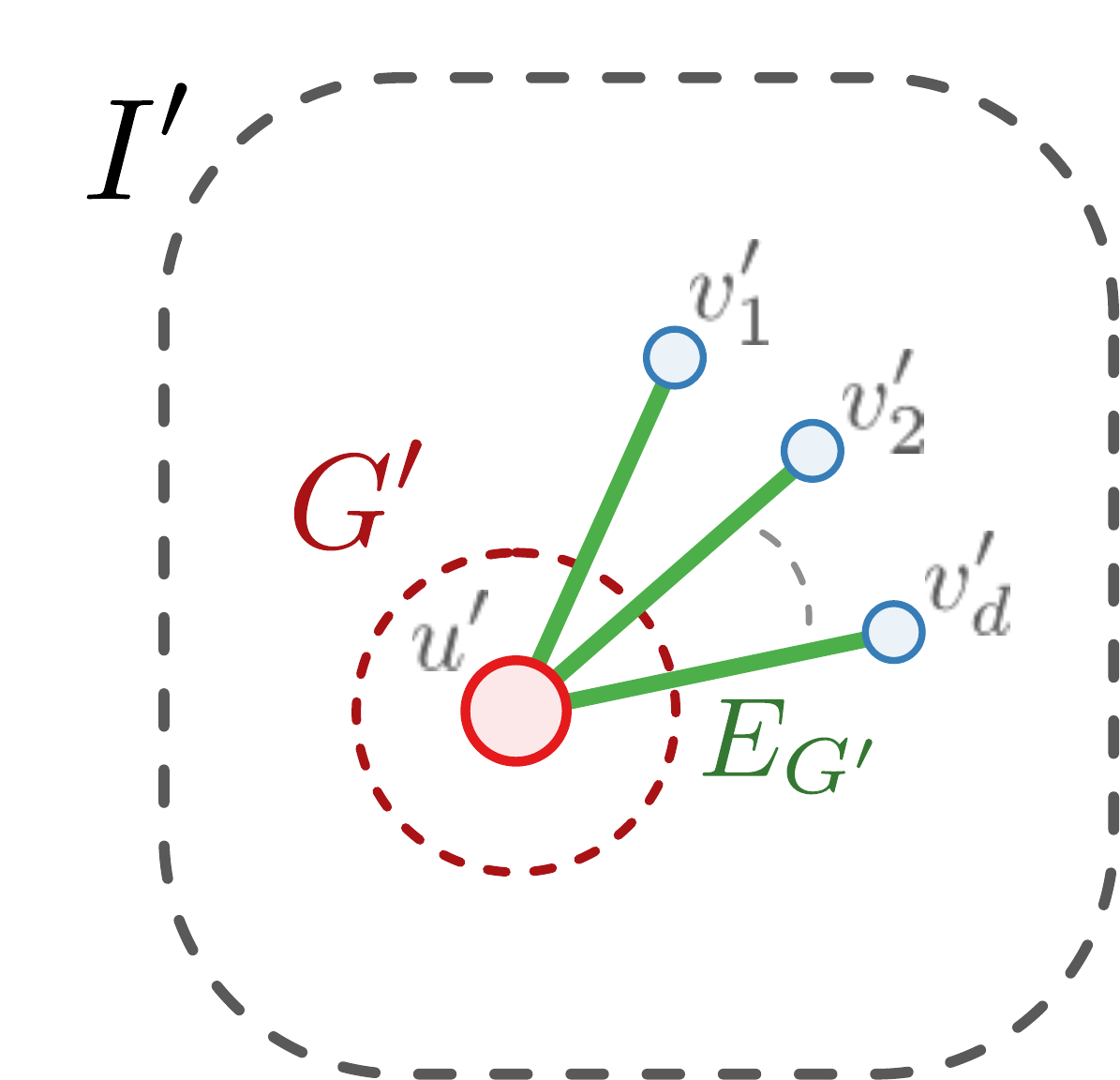}
  \end{subfigure}
  \hfill
  \begin{subfigure}[a]{0.60\textwidth}
     \centering
     \subcaption[short for lof]{After reduction}
     \includegraphics[scale=0.45]{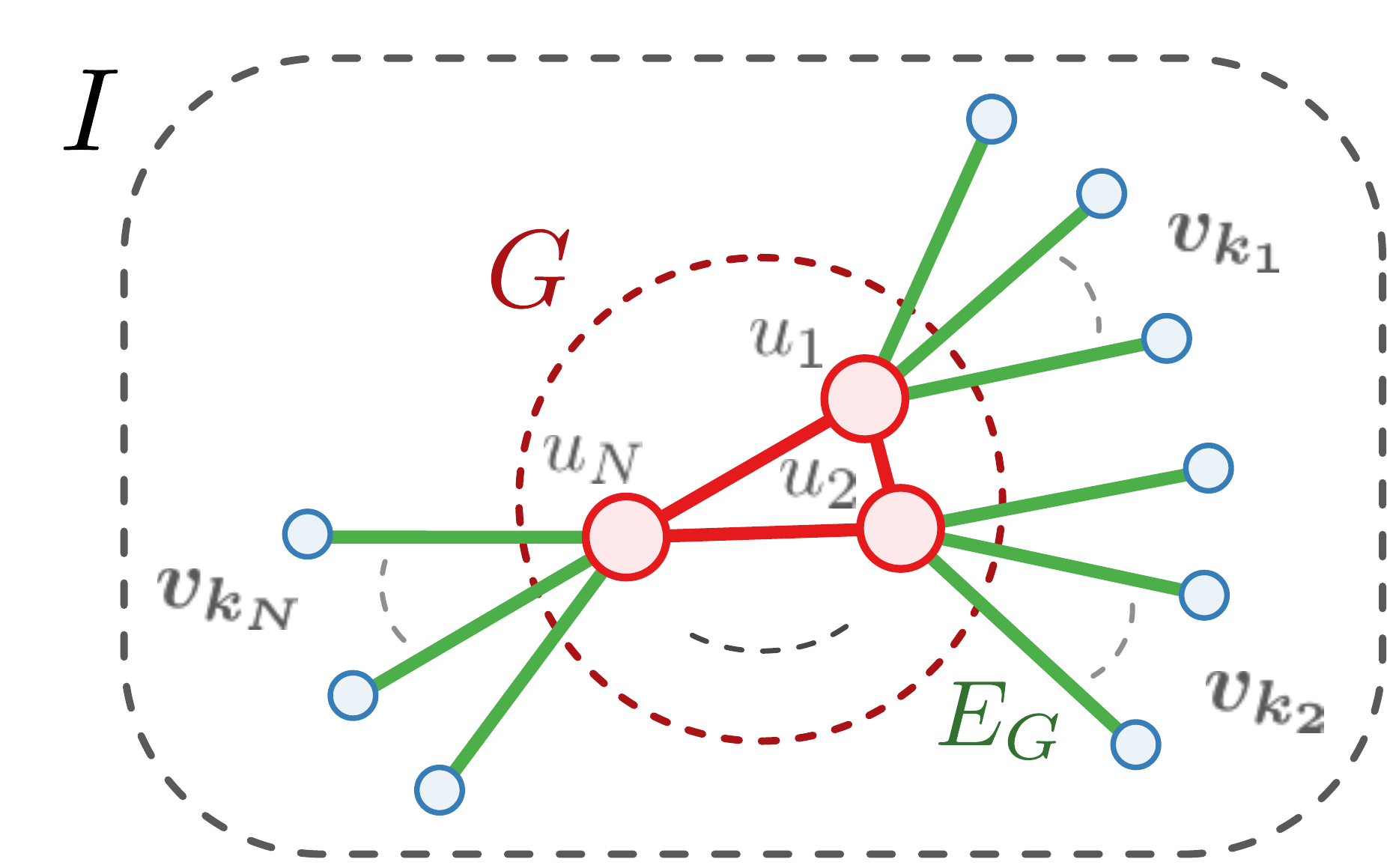}
  \end{subfigure}
  \hfill
		\caption{Graphics representing the complete degree reduction described in Definition~\ref{def:complete-degree-reduction} and Theorem~\ref{theorem:complete-degree-reduction}. 
        The two logical devices $(G^\prime, E_{G^\prime})$ and $(G, E_{G})$ (see Definition~\ref{def:logical-device}) shown in (a) and (b) respectively are equivalent according to Definition~\ref{def:logically-equivalent}.
        The single vertex graph $G^\prime$ of $I^\prime$ in (a) can effectively be replaced by the induced subgraph $G$ of $I$ in (b) (or vice versa) for the purposes of Ising energy minimisation. The Ising Hamiltonians for $I$ and $I^\prime$ are shown in Eq.~\ref{eq:degree-reduction-hamiltonians} and are assumed to have only 2-local interaction terms. \textbf{(a)}~A graph~$G^\prime$ consisting of a single 
        vertex~$u^\prime$ with $d:=|E_{G^\prime}|$ external edges. \textbf{(b)}~A complete graph $G$ of $N$ vertices where the number of external edges, $d = |E_{G}| =\sum_{i=1}^N \text{dim}(\boldsymbol{v_{k_i}})$, satisfies the inequality $d \leq N/2$.}
		\label{fig:complete-degree-reduction-hamiltonian}\par\medskip
\end{figure}\noindent

\begin{figure}
  \centering
  \includegraphics[width=0.75\textwidth]{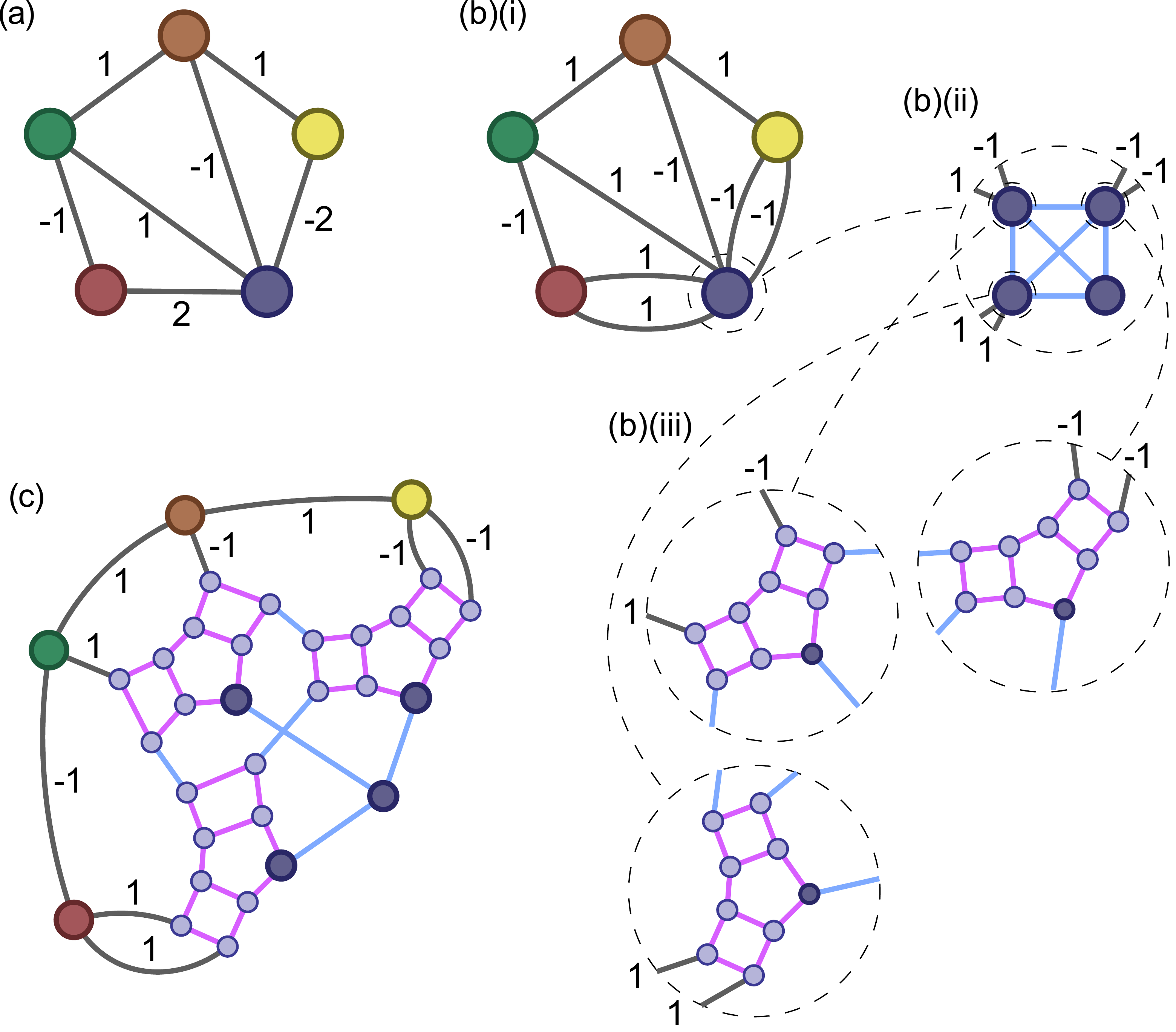}
		\caption{An example of the reduction stage for mapping a QUBO problem onto a cubic hardware graph with unit couplings. The numbers associated with edges represent coupling strengths and unnumbered edges represent~+1 coupling strengths. \textbf{(a)} A QUBO problem instance in the form 
        of an Ising graph. \textbf{(b)(i)} Edges with coupling strength other than $\pm 1$ are replaced by multiple edges with $\pm 1$ coupling strength 
        that sum to the original. \textbf{(b)(ii)} Complete degree reduction (described in Definition~\ref{def:complete-degree-reduction} and shown in Fig.~\ref{fig:complete-degree-reduction-hamiltonian}) is applied to the vertex with degree six which 
        lowers the highest degree of the graph to five. \textbf{(b)(iii)} The two degree reductions shown in Fig.~\ref{fig:degree-reduction} 
        are applied to further lower the highest degree of the graph to three. \textbf{(c)}~The reduced graph satisfies the
        constraint that all edges have unit coupling strength and the maximum degree of any vertex is three.}
  \label{fig:unit-degree-reduction-example}\par\medskip
\end{figure}\noindent

In order for a problem to be subdivision-embedded, it is required to first be in the form of an input graph such that the degree of its vertices are less than or equal to the degree of hardware graph vertices that they will be mapped to in the embedding. Additionally, the coupling strengths of the input graph are required to be sufficiently bounded to be expressible with the couplings in the hardware graph. Here we introduce a method for transforming an arbitrary QUBO instance to the form of a valid input graph for direct subdivision-embedding onto a cubic hardware graph with limited dynamic range. 

QUBO forms the basis of a framework that unifies a wide variety of combinatorial optimisation problems. It consists of problems in which a cost function
\begin{align}
  f(x_1, \ldots, x_n) = \sum_{i,j=1}^n c_{ij} x_i x_j + \sum_{i=1}^n c_i x_i,
\end{align}
is to be minimised over binary variables $(x_1, \ldots, x_n)\in \lbrace 0,1\rbrace^n$. Using the substitution $Z_i = 1-2x_i$, this cost function can be transformed into the form of an Ising Hamiltonian shown in Eq.~\ref{eq:IsingHamiltonian} called the problem Hamiltonian~$H_\text{p}$ with associated problem graph $G_\text{p}$, such that the solution to the QUBO problem is encoded in its ground state. To reduce the problem graph into the form of a valid input graph, we initially reduce the range of coupling strengths in the problem graph by distributing large couplings over multiple edges, which increases the degree of the vertices. The degree of the vertices are then reduced using a degree reduction procedure, which increases the number of vertices to compensate.

The dynamic range of edges in the problem graph can be reduced by replacing an edge possessing coupling strength~$J$ with multiple edges that are assigned smaller coupling strengths which sum to~$J$. If the coupling strengths do not divide evenly into the desired hardware graph coupling strengths, then they can be scaled before dividing. As an example, take a hardware graph with unit coupling strengths. If the problem graph has coupling strengths~$\{0.1, 0.23, -0.5\}$, then they can be scaled to~$\{10, 23, -50\}$. This allows the edges of the problem graph to be readily divided into multiple edges with unit coupling strengths. The number of edges can be traded off with accuracy by rounding to an appropriate precision. For example, approximating~$0.23 \approx 0.2$ allows the coupling strengths to be scaled to~$\{1, 2, -5\}$ which divides into fewer edges of unit coupling strength.

When the problem graph has the desired dynamic range of the hardware graph, the degree of its vertices can be reduced by applying a series of graph transformations. Initially, a \emph{complete degree reduction} transformation (defined below) can be recursively applied to reduce any vertex of finite degree to a graph of vertices with at most degree~five. Then two degree reduction transformations, shown in Fig.~\ref{fig:degree-reduction} (which were verified through simulation using unit-couplings) can be applied to further reduce the graph to vertices of at most degree~three. Complete degree reduction directly follows from Theorem~\ref{theorem:complete-degree-reduction} and is defined as follows.

\begin{definition}[Complete Degree Reduction -- Figure~\ref{fig:complete-degree-reduction-hamiltonian}]\label{def:complete-degree-reduction}
  A spin $u$ in an Ising lattice with couplings $J_{uv_i}$ to neighbouring spins is replaced with a complete graph $K_N$ of $N$ vertices with internal couplings satisfying $J_\text{int} \geq \max{|J_{uv_i}|}$, where each vertex of $K_N$ is adjacent to at most $N/2$ neighbours of $u$.
\end{definition}

The ground states of the Ising lattice will remain unchanged when the spin states of $K_N$ are identified with the spin state of $u$. For an Ising graph, this transformation replaces a vertex of high 
degree with a graph of vertices of lower degree such that the ground state of the original graph 
can be determined by the ground state of the transformed graph. The intuition is that the vertices that replace the original vertex are coupled strongly enough so that the spin states act as a single logical spin state. An example of the full reduction process for unit coupling strengths is shown in Figure~\ref{fig:unit-degree-reduction-example}. 
 
 The proof for complete degree reduction uses the assumption that the Ising Hamiltonian is in the form of a $ZZ$ Ising Hamiltonian which has only 2-local terms and is defined as 
\begin{equation}
H = -\sum \limits_{ij} J_{ij} Z_i Z_j.
\end{equation}
However, as shown in Fig.~\ref{fig:effective-local-fields}, an ancilla spin $z$ can be used to transform any Ising Hamiltonian with non-zero local fields to a~$ZZ$~Ising Hamiltonian. This can be achieved by coupling spins to the ancilla and setting the coupling strengths to the corresponding desired local field strengths. If the value of the ancilla is~$+1$ then the local fields are already effectively applied, if it is instead~$-1$ then the rest of the spins can be flipped to effectively apply the fields. This is due to the spin flip symmetry of the~$ZZ$~Ising Hamiltonian.

\begin{figure}
	\centering
		\includegraphics[width=0.50\textwidth]{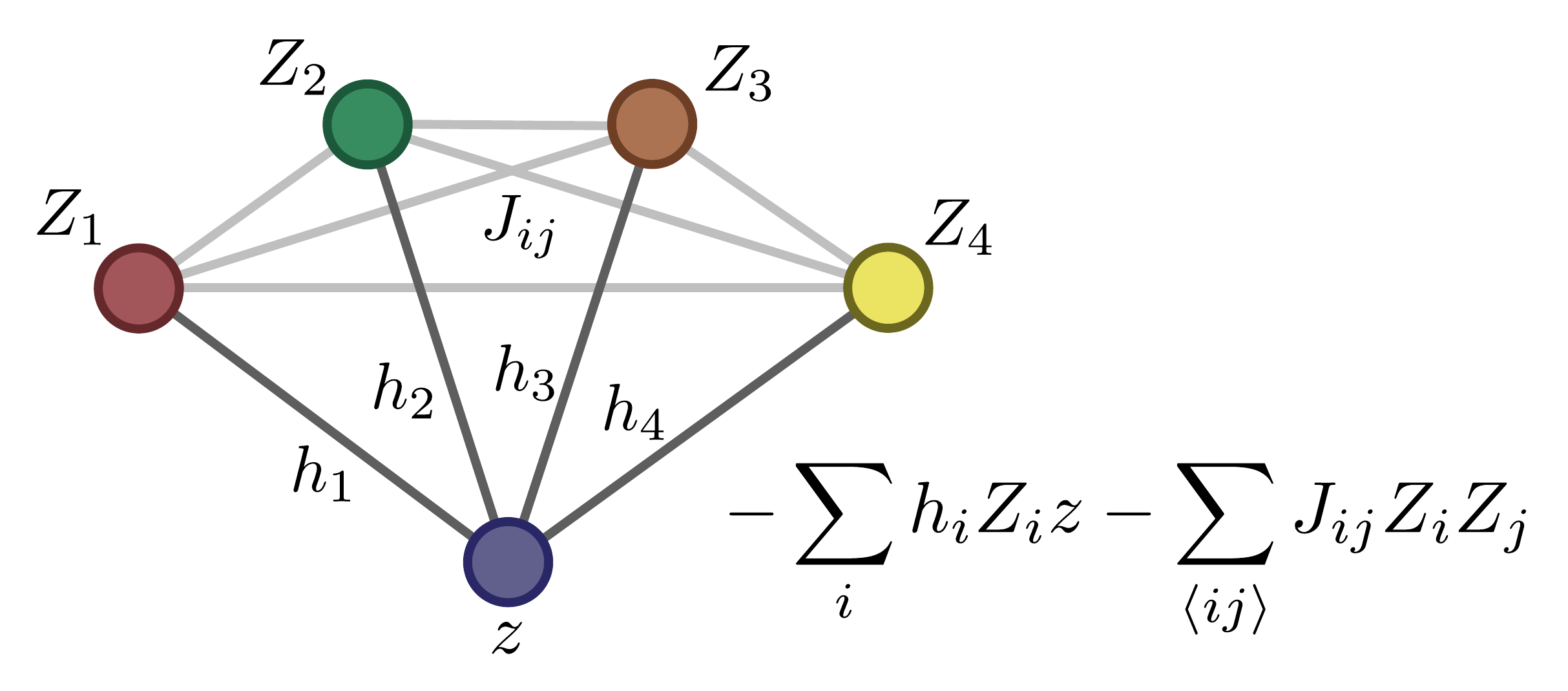}
		\caption{A graph showing how an effective local field can be produced from a~$ZZ$~Ising Hamiltonian (with only 2-local terms). If~$z$ is~$+1$, then the local fields are already effectively applied, whereas if~$z$ is~$-1$ then the other spins can be flipped to effectively apply the local fields, since the~$ZZ$ Ising Hamiltonian has spin flip symmetry.}
		\label{fig:effective-local-fields}\par\medskip
\end{figure}\noindent

\section{Reduction of the satisfiability problem}
\label{sec:SAT}

In this section, the reduction of {\em satisfiability} (SAT) to a subdivision-embeddable input graph satisfying the relevant degree and coupling strength requirements is demonstrated. SAT was the first problem shown to be NP-Complete by Cook and Levin~\cite{cook1971complexity} in 1971, and is a fundamental NP-Complete decision problem (with a yes/no answer) to which many other problems in NP may be reduced in a polynomial number of steps.

A typical SAT problem instance consists of a logical formula $\phi$ containing boolean variables $x_i$ (and their negations $\overline{x}_i$) linked by the logical connectives {\em and} ($\wedge$) and {\em or} ($\vee$). Each variable $x_i$ and its negation $\overline{x}_i$ are distinguished as separate {\em literals} corresponding to the same variable. We will assume that every SAT problem instance is expressed in conjunctive normal form (CNF) -- that is, a form where the variables are grouped into disjunctive clauses $(x_1 \vee x_2 \vee \dots \vee x_k)$, while the clauses themselves are conjunctively linked. An example of such a problem instance is
\begin{align}
  \label{eq:SAT}
  \phi &= (x_1 \vee x_2) \wedge (\overline{x}_1 \vee \overline{x}_2 \vee x_3) \wedge (\overline{x}_1 \vee x_4) \\ \nonumber
       &\ \ \ \wedge (x_2 \vee x_3) \wedge (x_1 \vee x_3 \vee x_4) \wedge (\overline{x}_2 \vee \overline{x}_3 \vee \overline{x}_4),
\end{align}	
which will be used in the remainder of this section as an aide to illustrate the reduction process. The solution to a SAT instance answers the following question. Does there exist a satisfying assignment $x_i \in \lbrace 0,1\rbrace$ for all $i$, such that $\phi$ evaluates to $1$ (i.e. {\em true})? If such an assignment exists, $\phi$ is said to be {\em satisfiable}, otherwise it is {\em unsatisfiable}. Since clauses are conjunctively linked, a SAT instance is satisfiable only when each clause can be simultaneously satisfied. It can be easily verified that the example in Eq.~\ref{eq:SAT} is satisfied by the assignment $(x_1,x_2,x_3,x_4) = (0,1,1,0)$.

A special case of SAT is $k$-SAT, which restricts each clause to have exactly~$k$ variables. One of the most widely studied variants is 3-SAT, which still remains NP-Complete. For large instances, the hardest 3-SAT instances have a clause-to-variable ratio of~$3.52 \leq r \leq 4.51$~\cite{mu2015empirical}. There has been considerable research into fast classical algorithms for solving 3-SAT, and in the worst-case clause-to-variable ratio, state-of-the-art algorithms scale as~$\mathcal{O}(1.008^n)$, where~$n$ is the number of variables. Another closely related variant on SAT, which will be used in our reduction, is NAE-SAT (not-all-equal SAT). This variant is similar to SAT, except that each clause returns true if and only if it contains literals that differ in logical values.

\subsection{Related work}

Due to its ubiquity in complexity theory, there have been several previous results on mapping variants of satisfiability to an adiabatic quantum computer. A straightforward 3-SAT mapping was described in Farhi's~\cite{farhi2000quantum} seminal paper on adiabatic quantum computing, requiring 3-local couplings. Another well-known 3-SAT reduction involves first reducing 3-SAT to the graph-theoretic problem of max independent set~\cite{choi2011different} and then to an Ising Hamiltonian with local fields. There is also some literature on the mapping of NAE-3-SAT to unit-coupling Ising models~\cite{king2014algorithm,Douglass2015}, as well as a comparison between local and global embedding methods for this mapping~\cite{Bian2016}. In this paper, we demonstrate a full reduction of a general SAT instance to a unit-coupling Input graph without local fields. This allows us to take arbitrarily complicated SAT instances and solve them on heavily restricted AQC architectures.

\subsection{Composition of constraint Hamiltonians}

In SAT, each clause imposes a constraint, and each constraint must be simultaneously satisfied if the entire SAT instance is to be satisfiable. This constraint can be mapped on to the problem of Ising energy minimisation (shown in Eq.~\ref{eq:IsingHamiltonian}) with the analogy that each clause corresponds to a constraint Hamiltonian which achieves its ground state if and only if the clause is satisfied. Before presenting the specific reduction, it is useful to first talk in generality about how large constraint Hamiltonians may be constructed from smaller ones.

Let $H_A$ and $H_B$ be Ising Hamiltonians acting on two sets of spins~$Z_A = \lbrace Z_{a_i}\rbrace$ and~$Z_B = \lbrace Z_{b_i}\rbrace$ respectively, with~$Z_{a_i},Z_{b_i} \in \lbrace -1,+1\rbrace$. We will also assume that these sets are not necessarily disjoint, so that there is a set of overlap spins $Z_C := \lbrace Z_{c_i}\rbrace = Z_A \cap Z_B$, on which {\em both}~$H_A$ and~$H_B$ act. Let us also define~$H$ to be the aggregate Hamiltonian,
\begin{align}
  H = H_A + H_B.
\end{align}
If $E_0^A$ and $E_0^B$ are the ground state energies of~$H_A$ and~$H_B$ respectively and $E$ is the Ising energy of $H$, then in general,~$E \geq E_0^A + E_0^B$, with equality only when~$H_A$ and~$H_B$ are simultaneously in their respective ground states. For this to be the case, there must exist a pair of ground states of~$H_A$ and~$H_B$ that agree on the set of overlapping spins~$Z_C$. As a simple illustration, consider the two spin Hamiltonians
\begin{align}
  H_A &= (Z_a + Z_{c_1} + Z_{c_2}) + (Z_aZ_{c_1}+Z_aZ_{c_2} + Z_{c_1}Z_{c_2}) \label{eq:HArevised} \\
  H_B &= - (Z_{b} + Z_{c_1} + Z_{c_2}) + (Z_bZ_{c_1} + Z_bZ_{c_2} + Z_{c_1}Z_{c_2}). \label{eq:HBrevised}
\end{align}
These systems can be described graphically as anti-ferromagnetic triangles as in Fig.~\ref{fig:separateRevised}, with spins biased downwards in~$H_A$ and biased upwards in~$H_B$. In the aggregate Hamiltonian~$H = H_A + H_B$, the local fields of the overlap spins~$\lbrace Z_{c_i}\rbrace$ have cancelled out to zero.
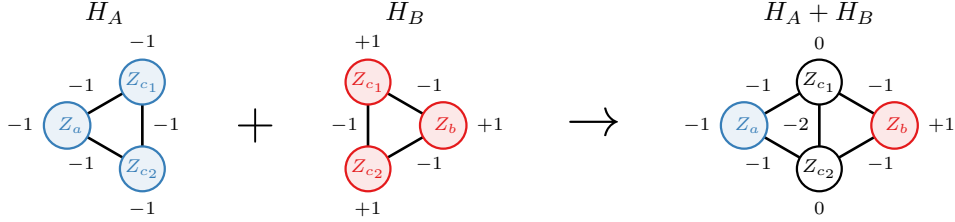
\begin{figure}
  \centering
  \begin{tikzpicture}

  \definecolor{myblue}{RGB}{55, 126, 184}
  \definecolor{myred}{RGB}{228, 26, 28}
  \definecolor{myredblue}{RGB}{142, 76, 106}
  
  \begin{scope}
  \pgftransformcm{1}{0}{0}{1}{\pgfpoint{-1cm}{0cm}};
  \draw[line width=1] (0,0) -- (1, -0.5774) node[midway, below left] {\scriptsize $-1$};
  \draw[line width=1] (1,-0.5774) -- (1, 0.5774) node[midway, right] {\scriptsize $-1$};
  \draw[line width=1] (1, 0.5774) -- (0,0) node[midway, above left] {\scriptsize $-1$};
  \node[draw, circle, color=myblue, fill=myblue!10, label=left:{\scriptsize $-1$}, text width=2.5mm, align=center, inner sep=3pt, line width=0.3mm] at (0,0) {\scriptsize $Z_a$};
  \node[draw, circle, color=myblue, fill=myblue!10, label=below:{\scriptsize $-1$}, text width=2.5mm, align=center, inner sep=4pt, minimum size=8pt, line width=0.3mm] at (1, -0.5774) {};
  \node[color=myblue] at (1, -0.5774) {\scriptsize $Z_{c_2}$};
  
  \node[draw, circle, color=myblue, fill=myblue!10, label=above:{\scriptsize $-1$}, text width=2.5mm, align=center, inner sep=4pt, minimum size=8pt, line width=0.3mm] at (1, 0.5774) {};
  \node[color=myblue] at (1, 0.5774) {\scriptsize $Z_{c_1}$};
  
  \node[] at (0.5,1.5) {$H_A$};
  \end{scope}

    \begin{scope}
    \pgftransformcm{2}{0}{0}{2}{\pgfpoint{1.5cm}{0cm}};
    
    \node[] at (0,0) {\huge{$+$}};
    \end{scope}

  \begin{scope}
  \pgftransformcm{1}{0}{0}{1}{\pgfpoint{3cm}{0cm}};
  \draw[line width=1] (1,0) -- (0, -0.5774) node[midway, below right] {\scriptsize $-1$};
  \draw[line width=1] (0,-0.5774) -- (0, 0.5774) node[midway, left] {\scriptsize $-1$};
  \draw[line width=1] (0, 0.5774) -- (1,0) node[midway, above right] {\scriptsize $-1$};
  \node[draw, circle, color=myred, fill=myred!10, label=right:{\scriptsize $+1$}, text width=2.5mm, align=center, inner sep=3pt, line width=0.3mm] at (1,0) {\scriptsize $Z_b$};
  \node[draw, circle, color=myred, fill=myred!10, label=below:{\scriptsize $+1$}, text width=2.5mm, align=center, inner sep=4pt, minimum size=8pt, line width=0.3mm] at (0, -0.5774) {};
  \node[color=myred] at (0, -0.5774) {\scriptsize $Z_{c_2}$};
  
  \node[draw, circle, color=myred, fill=myred!10, label=above:{\scriptsize $+1$}, text width=2.5mm, align=center, inner sep=4pt, minimum size=8pt, line width=0.3mm] at (0, 0.5774) {};
  \node[color=myred] at (0, 0.5774) {\scriptsize $Z_{c_1}$};
  \node[] at (0.5,1.5) {$H_B$};
  \end{scope}

    \begin{scope}
    \pgftransformcm{2}{0}{0}{2}{\pgfpoint{6cm}{0cm}};
    
    \node[] at (0,0) {\huge{$\to$}};
    \end{scope}
    
  \begin{scope}
    \pgftransformcm{1}{0}{0}{1}{\pgfpoint{9cm}{0cm}};
    \draw[line width=1] (0,-0.5774) -- (0, 0.5774) node[midway, left] {\scriptsize $-2$};
    \draw[line width=1] (1,0) -- (0, -0.5774) node[midway, below right] {\scriptsize $-1$};
    \draw[line width=1] (0, 0.5774) -- (1,0) node[midway, above right] {\scriptsize $-1$};
    \draw[line width=1] (-1,0) -- (0, -0.5774) node[midway, below left] {\scriptsize $-1$};
    \draw[line width=1] (0, 0.5774) -- (-1,0) node[midway, above left] {\scriptsize $-1$};
    \node[draw, circle, color=black, fill=white, label=below:{\scriptsize $0$}, text width=2.5mm, align=center, inner sep=4pt, minimum size=8pt, line width=0.3mm] at (0,-0.5774) {};
    
    \node[] at (0, -0.5774) {\scriptsize $Z_{c_2}$};
    
    \node[draw, circle, color=black, fill=white, label=above:{\scriptsize $0$}, text width=2.5mm, align=center, inner sep=4pt, minimum size=8pt, line width=0.3mm] at (0, 0.5774) {};
    
    \node[] at (0, 0.5774) {\scriptsize $Z_{c_1}$};
    
    \node[draw, circle, color=myred, fill=myred!10, label=right:{\scriptsize $+1$}, text width=2.5mm, align=center, inner sep=3pt, line width=0.3mm] at (1, 0) {\scriptsize $Z_b$};
    \node[draw, circle, color=myblue, fill=myblue!10, label=left:{\scriptsize $-1$}, text width=2.5mm, align=center, inner sep=3pt, line width=0.3mm] at (-1, 0) {\scriptsize $Z_a$};
    \node[] at (0,1.5) {$H_A+H_B$};
  \end{scope}

\end{tikzpicture}
  \caption{\label{fig:separateRevised} Ising graphs corresponding to the Hamiltonians in Eqs.~\ref{eq:HArevised}$-$\ref{eq:HBrevised}, with local field and coupling strengths shown beside their respective vertices and edges. Note that the states of spins $Z_{c_1}$ and $Z_{c_2}$ are common to both subsystems. This means that if there are pairs of ground states of $H_A$ and $H_B$ where the spin states of $Z_{c_1}$ and $Z_{c_2}$ agree, the ground states of $H_A+H_B$ will comprise those pairs, as shown in Figure~\ref{fig:compositionRevised}.}
\end{figure}

The ground states of $H_A$ and $H_B$ are shown on the left and right respectively in Figure~\ref{fig:compositionRevised}. In the aggregate system $H = H_A + H_B$, the ground states are composed of only the pairs of ground states of $H_A$ and $H_B$ which agree on the overlap spins $Z_C$. While this specific example contains ground state pairs which agree on the overlap spins, this will not always be the case. If there does not exist a pair of ground states of $H_A$ and $H_B$ that agree on the overlap, then $E > E_0^A + E_0^B$, which is to say that in the ground state of $H$, at least one of $H_A$ and $H_B$ must be in an excited state. In such a situation, if $H_A$ and $H_B$ were encoding constraints, then at least one of the constraints must have been violated. Conversely, if $E = E_0^A + E_0^B$, all constraints will have been satisfied. 

\begin{figure}
  \centering
  \input{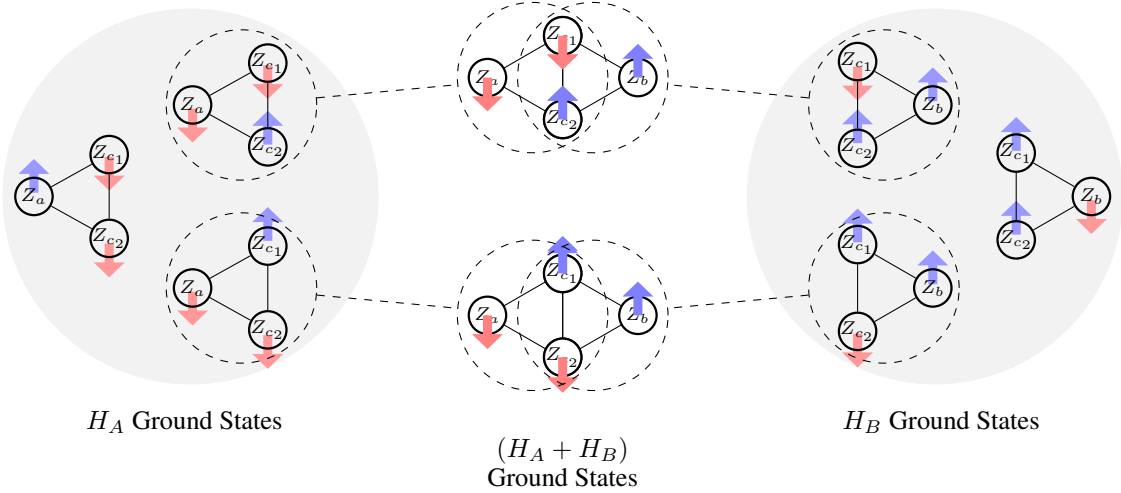}
  \caption{The three ground states of the Ising Hamiltonian $H_A$ (Eq.~\ref{eq:HArevised}) have two spins down ($Z=-1$) and one spin up ($Z=+1$), while the three ground states of the Ising Hamiltonian $H_B$ (Eq.~\ref{eq:HBrevised}) have two spins up ($Z=+1$) and one spin down ($Z=-1$). The ground states of the aggregate Hamiltonian $H_A + H_B$ are the overlaps of individual ground states of $H_A$ and $H_B$. \label{fig:compositionRevised}}
\end{figure}

\subsection{Reduction Step 1: SAT \texorpdfstring{$\to$}{to} NAE-SAT}

The first step of the reduction is to reduce SAT to (not-all-equal) NAE-SAT. The purpose of this reduction is to introduce negation-symmetry, a property that is present in Ising Hamiltonians in the form of spin-flip symmetry, but not present in SAT. 

A NAE clause is satisfied if not all literals in the clause have the same boolean value. For example, $\text{NAE}(1,1,1)$ is an unsatisfied NAE clause, while $\text{NAE}(1,0,1)$ and $\text{NAE}(0,1,0)$ are both satisfied. The reduction from SAT to NAE-SAT involves the introduction of a single global ancillary Boolean variable (denoted $y$) to each clause, i.e.
\begin{align}
  \label{eq:SAT2NAESAT}
  (x_1 \vee x_2 \vee \dots \vee x_k) \mapsto \text{NAE}(x_1, x_2,  \dots , x_k, y)
\end{align}
The original SAT instance $\phi$ and the resultant NAE-SAT instance $\phi'$ are logically equivalent in the sense that $\phi$ is satisfiable if and only if $\phi'$ is satisfiable. Thus, a SAT problem instance can be solved by solving its corresponding NAE-SAT instance. Applying the transformation in Eq.~\ref{eq:SAT2NAESAT} to Eq.~\ref{eq:SAT}, 
\begin{align}
  \phi \to \phi' &= \text{NAE}(x_1,x_2,y) \wedge \text{NAE}(\overline{x}_1, \overline{x}_2, x_3,y) \wedge \text{NAE}(\overline{x}_1 , x_4,y) \\ \nonumber
                 &\ \ \ \ \wedge \text{NAE}(x_2 ,x_3,y) \wedge \text{NAE}(x_1,x_3,x_4,y) \wedge \text{NAE}(\overline{x}_2,\overline{x}_3 ,\overline{x}_4,y).
\end{align}
Note that each SAT clause with $k$ variables becomes a NAE-SAT clause with $(k+1)$ variables.

\subsection{Reduction Step 2: Encoding NAE-SAT in Ising Constraint Hamiltonians}

Once SAT has been reduced to NAE-SAT, it is possible to map it to a unit-coupling Input Hamiltonian made up of constraint sub-Hamiltonians. We will also refer to these sub-Hamiltonian as gadgets. We require two types of gadgets: one for clauses and one for variables. The clause-gadgets are constructed such that their ground states correspond precisely to all the possible satisfying assignments a NAE clause can have. Their topology is represented by linked triangles as shown in Fig.~\ref{fig:clause}-\ref{fig:clause3}, where a $k$-variable NAE clause requires $(k-2)$ triangles. The Hamiltonian for a $k$-variable NAE clause can be written
\begin{align}
H_{\mathrm{clause}} = \sum_{i=1}^{k-2} (Z_{i,1}Z_{i,2} + Z_{i,2}Z_{i,3}+Z_{i,1}Z_{i,3}) + \sum_{i=1}^{k-3}Z_{i,3}Z_{i+1, 1}
\end{align}
where the spin variable $Z_{i,j} \in \lbrace -1,+1\rbrace$ corresponds to the $j$th spin in the $i$th linked triangle (see Figures~\ref{fig:clause}-\ref{fig:clause3}). The spins $Z_{1,1}, Z_{k-2,3}$, and $Z_{i,2}$ for $1 \leq i \leq k$ correspond to logical variables in the NAE clause, while the rest are ancilla variables.

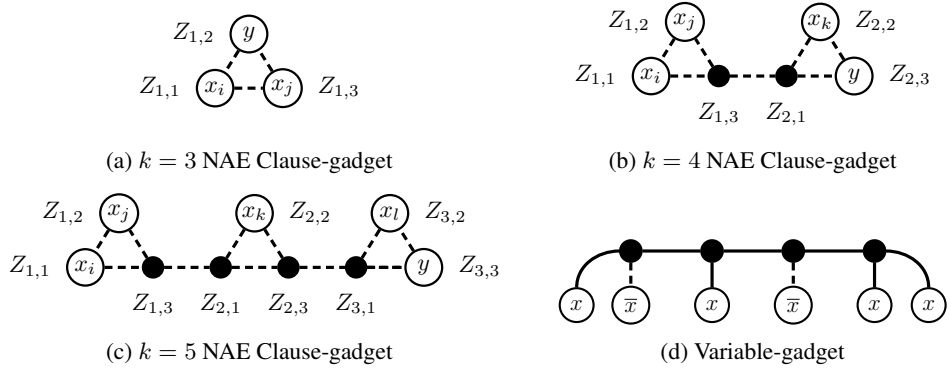
\begin{figure}
  \centering
  \subcaptionbox{$k=3$ NAE Clause-gadget\label{fig:clause}}[0.4\textwidth]{\begin{tikzpicture}[scale=0.9, transform shape]
	\node[draw, circle, text width=5mm, align=center, inner sep=0pt, line width=0.3mm] (a) at (0,0) {$x_i$} ;
	\node[draw, circle, text width=5mm, align=center, inner sep=0pt, line width=0.3mm] (b) at (1cm,0) {$x_j$};
	\node[draw, circle, text width = 5mm, align=center, inner sep=0pt, line width=0.3mm] (c) at ( 0.5cm,0.8cm) {$y$};
	\draw[densely dashed, line width=1.2] (a) -- (b);
	\draw[densely dashed, line width=1.2] (b) -- (c);
	\draw[densely dashed, line width=1.2] (c) -- (a);
	
	\node[left=0.1cm of a] {$Z_{1,1}$};
	\node[left=0.1cm of c] {$Z_{1,2}$};
	\node[right=0.1cm of b] {$Z_{1,3}$};
	
	\node[] at (0, -0.55) {};
	
\end{tikzpicture}}
  \subcaptionbox{$k=4$ NAE Clause-gadget\label{fig:clause2}}[0.4\textwidth]{\begin{tikzpicture}[scale=0.9, transform shape]
	\node[draw, circle, text width = 5mm, align=center, inner sep=0pt, line width=0.3mm] (a) at (0,0) {$x_i$} ;
	\node[fill=black, draw, circle, inner sep=3pt, line width=0.3mm] (b) at (1cm,0) {};
	\node[draw, circle, text width = 5mm, align=center, inner sep=0pt, line width=0.3mm] (c) at ( 0.5cm,0.8cm) {$x_j$};
	\draw[densely dashed, line width=1.2] (a) -- (b);
	\draw[densely dashed, line width=1.2] (b) -- (c);
	\draw[densely dashed, line width=1.2] (c) -- (a);

	\node[fill=black, draw, circle, inner sep=3pt, line width=0.3mm] (d) at (2,0) {} ;
	\node[draw, circle, text width = 5mm, align=center, inner sep=0pt, line width=0.3mm] (e) at (3cm,0) {$y$};
	\node[draw, circle, text width = 5mm, align=center, inner sep=0pt, line width=0.3mm] (f) at ( 2.5cm,0.8cm) {$x_k$};
	\draw[densely dashed, line width=1.2] (d) -- (e);
	\draw[densely dashed, line width=1.2] (e) -- (f);
	\draw[densely dashed, line width=1.2] (f) -- (d);
	
	\draw[densely dashed, line width=1.2] (b) -- (d);
	
	\node[left=0.1cm of a] {$Z_{1,1}$};
	\node[left=0.1cm of c] {$Z_{1,2}$};
	\node[below=0.1cm of b] {$Z_{1,3}$};
	
	\node[below=0.1cm of d] {$Z_{2,1}$};
	\node[right=0.1cm of e] {$Z_{2,3}$};
	\node[right=0.1cm of f] {$Z_{2,2}$};

	\node[] at (0, -0.25) {};
	
\end{tikzpicture}} \\ \vspace{0.2cm}
  \subcaptionbox{$k=5$ NAE Clause-gadget\label{fig:clause3}}[0.4\textwidth]{\begin{tikzpicture}[scale=0.9, transform shape]
	\node[draw, circle, text width = 5mm, align=center, inner sep=0pt, line width=0.3mm] (a) at (0,0) {$x_i$} ;
	\node[fill=black, draw, circle, inner sep=3pt, line width=0.3mm] (b) at (1cm,0) {};
	\node[draw, circle, text width = 5mm, align=center, inner sep=0pt, line width=0.3mm] (c) at ( 0.5cm,0.8cm) {$x_j$};
	\draw[densely dashed, line width=1.2] (a) -- (b);
	\draw[densely dashed, line width=1.2] (b) -- (c);
	\draw[densely dashed, line width=1.2] (c) -- (a);

	\node[fill=black, draw, circle, inner sep=3pt, line width=0.3mm] (d) at (2,0) {} ;
	\node[fill=black, draw, circle, inner sep=3pt, line width=0.3mm] (e) at (3cm,0) {};
	\node[draw, circle, text width = 5mm, align=center, inner sep=0pt, line width=0.3mm] (f) at ( 2.5cm,0.8cm) {$x_k$};
	\draw[densely dashed, line width=1.2] (d) -- (e);
	\draw[densely dashed, line width=1.2] (e) -- (f);
	\draw[densely dashed, line width=1.2] (f) -- (d);
	
	\node[fill=black, draw, circle, inner sep=3pt, line width=0.3mm] (g) at (4,0) {} ;
	\node[draw, circle, text width = 5mm, align=center, inner sep=0pt, line width=0.3mm] (h) at (5cm,0) {$y$};
	\node[draw, circle, text width = 5mm, align=center, inner sep=0pt, line width=0.3mm] (i) at ( 4.5cm,0.8cm) {$x_l$};
	\draw[densely dashed, line width=1.2] (g) -- (h);
	\draw[densely dashed, line width=1.2] (h) -- (i);
	\draw[densely dashed, line width=1.2] (i) -- (g);

	\draw[densely dashed, line width=1.2] (b) -- (d);
	\draw[densely dashed, line width=1.2] (e) -- (h);
	
	\node[left=0.1cm of a] {$Z_{1,1}$};
	\node[left=0.1cm of c] {$Z_{1,2}$};
	\node[below=0.1cm of b] {$Z_{1,3}$};
	
	\node[below=0.1cm of d] {$Z_{2,1}$};
	\node[below=0.1cm of e] {$Z_{2,3}$};
	\node[right=0.1cm of f] {$Z_{2,2}$};

	\node[below=0.1cm of g] {$Z_{3,1}$};
	\node[right=0.1cm of h] {$Z_{3,3}$};
	\node[right=0.1cm of i] {$Z_{3,2}$};
	
	\node[] at (0, -0.25) {};
	
\end{tikzpicture}}
  \subcaptionbox{Variable-gadget\label{fig:var}}[0.4\textwidth]{\begin{tikzpicture}[scale=1.2, transform shape]
\begin{scope}
    \pgftransformcm{0.6}{0}{0}{0.6}{\pgfpoint{0}{0}}
	\node[fill=black, draw, circle, thick, inner sep=4pt] (1) at (0,0) {};
	\node[fill=black, draw, circle, thick, inner sep=4pt] (2) at (1.5,0) {};
	\node[fill=black, draw, circle, thick, inner sep=4pt] (3) at (3,0) {};
	\node[fill=black, draw, circle, thick, inner sep=4pt] (4) at (4.5,0) {};
	\node[draw, circle, thick] (5) at (-1,-1) {\large$x$};
	\node[draw, circle, thick] (6) at (0,-1) {\large$\overline{x}$};
	\node[draw, circle, thick] (7) at (1.5,-1) {\large$x$};
	\node[draw, circle, thick] (8) at (3,-1) {\large$\overline{x}$};
	\node[draw, circle, thick] (9) at (4.5,-1) {\large$x$};
	\node[draw, circle, thick] (10) at (5.5,-1) {\large$x$};
	
	\draw[line width=1.2] (1) -- (2) node[midway, above] {};
	\draw[line width=1.2] (2) -- (3) node[midway, above] {};
	\draw[line width=1.2] (3) -- (4) node[midway, above] {};
	\draw[line width=1.2] (1) to[out=180, in=90] (5) node[midway, above left=-0.1cm and 0.8cm] {};
	
	\draw[line width=1.2, densely dashed] (1) -- (6) node[midway, left] {};
	\draw[line width=1.2] (2) -- (7) node[midway, left] {};
	\draw[line width=1.2, densely dashed] (3) -- (8) node[midway, right] {};
	\draw[line width=1.2] (4) -- (9) node[midway, right] {};
	\draw[line width=1.2] (4) to[out=0, in=90] (10) node[midway, above right=-0.3cm and 5.2cm] {};
	
	\node[] at (-1, 0) {};
	
	
\end{scope}

\end{tikzpicture}} 
  \caption{\label{fig:clausegadgets} Examples of Ising Hamiltonian constraint gadgets. Solid edges \mysolidline correspond to ferromagnetic $J = 1$ couplings, while dashed edges \mydashedline correspond to antiferromagnetic $J = -1$ couplings, and there are no local fields. The shaded vertices correspond to ancilla spins, while the unshaded vertices correspond to logical spins. \textbf{(a)} Clause-gadget for three variable NAE-SAT clause (i.e. two variable SAT clause) \textbf{(b)} Clause-gadget for four variable NAE-SAT clause (i.e. three variable SAT clause) \textbf{(c)} Clause-gadget for five variable NAE-SAT clause (i.e. four variable SAT clause) \textbf{(d)} Variable-gadget for a variable $x$ that appears in a SAT instance four times as the literal $x$ and twice as the literal $\overline{x}$. Each logical spin in the variable gadget connects to the logical spin in the clause gadget to which it belongs.}
\end{figure}

The motivation for the structure of the clause-gadgets is the ability to recursively split up NAE-clauses by interjecting a local ancilla variable, as in
\begin{align}
  \text{NAE}(a,b,c,d) = \text{NAE}(a,b,\overline{w}) \vee \text{NAE}(w,c,d),
\end{align}
until each NAE clause has been broken down to a 3-variable clause. Each 3-variable NAE clause becomes a triangle as shown in Fig.~\ref{fig:clause}-\ref{fig:clause3}, and the ancillary variable links them together. The variable-gadgets are constructed such that their ground states encode the number of occurrences of a variable and its negation, thus enforcing the internal logical consistency of a variable. These have the topology of a comb with bristles, where the sign of the coupling on each bristle corresponds to whether the variable appears as itself or its negation, as in Fig.~\ref{fig:var}. The variable-gadget is designed to have a non-degenerate, non-frustrated ground state, where all the spins along the base of the bristle (the shaded spins in Fig.~\ref{fig:var}) have the same spin state and represent the Boolean value of the variable.

The gadgets are then stitched together as in Fig.~\ref{fig:LogicalIsingGraph} to create a composite Hamiltonian that encodes all of the constraints in the NAE-SAT instance. The pictured reduction is far from unique. For example, by changing the ordering of the variable/clause-gadgets, it may be possible to construct input graphs with lower crossing numbers.

The ground states of the resulting Ising graph will have spin flip symmetry, but only the solution corresponding to the Boolean assignment $y=0$ will directly correspond to a solution to the original SAT instance. If $y=1$, the solution to the SAT instance can be recovered by negating the value of the other variables.

\begin{figure}[htb!]
  \begin{center}
  \begin{tikzpicture}
    \node[] at (0,0) {\includegraphics[scale = 1.2]{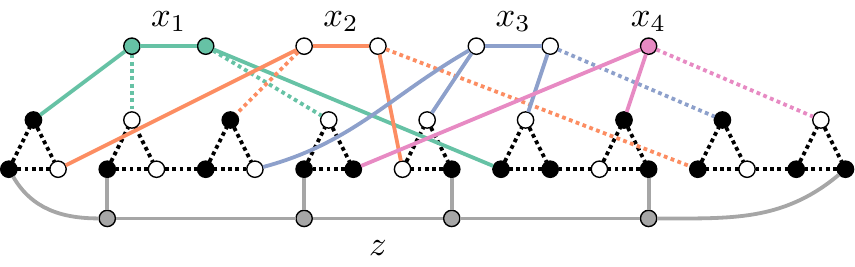}};
    \draw[fill=white, color=white] (-0.9,-1.6) rectangle ++(0.5,0.3);
    \node[] at (-0.5,-1.4) {\large$y$};
  \end{tikzpicture}
  \end{center}
  \caption{An Ising graph representing the SAT instance in Equation~\ref{eq:SAT}. Solid edges \mysolidline correspond to ferromagnetic $J = +1$ coupling strengths, while dotted edges \mydottedline correspond to antiferromagnetic $J = -1$ coupling strengths. The variable-gadgets are colour coded. The Ising graph is shaded according to a particular solution, with shaded vertices corresponding to a spin state of $+1$ (Boolean 0) and unshaded vertices corresponding to a spin state of $-1$ (Boolean 1). Directly reading off the Boolean states for $x_1,x_2,x_3,x_4$ yields $(1,0,0,1)$, but since~$y$ has the Boolean value of~$1$, the solution to the corresponding SAT instance requires the negated Boolean values $(x_1,x_2,x_3,x_4) = (0,1,1,0)$.\label{fig:LogicalIsingGraph}}
\end{figure}

\section{Embedding and scaling}\label{sec:scaling}
The runtime for the quantum adiabatic algorithm tends to increase with the number of qubits. This is because it scales as the inverse square of the gap between the ground and excited states, which generally decreases as the number of qubits increases. To help inform the architecture design process, we look at various architecture layouts and compare embedding efficiency between them. We then quantify how the amount of physical quantum resources of embedding scales with the size of QUBO and SAT problems. Additionally, we develop an efficient and deterministic subdivision-embedding for SAT problems onto grid-like architectures with crossings. This embedding procedure significantly reduces both time and space complexity overheads when compared to heuristic embedding algorithms.

\subsection{Architectures}\label{sec:architectures}
 Hardware graph layouts are required to have some degree of non-planarity in order to embed arbitrary NP-Complete Problems. This can be accomplished using either three-dimensional architectures, or crossing couplings in two-dimensions~\cite{istrail2000statistical, barahona1982computational}. The amount of connectivity within a hardware graph plays a large role in the efficiency of problem embedding sizes. Here, we briefly introduce three hardware graph layouts with distinct connectivity that we perform embedding size scaling analysis on in the next section.

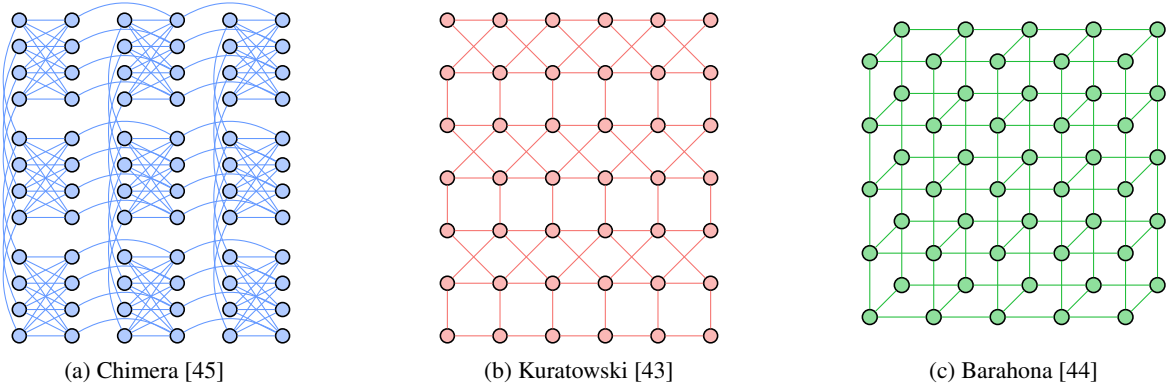
\begin{figure}[htb!]
  \centering
  \subcaptionbox{Chimera~\cite{harris2010experimental}\label{fig:figChimera}}[0.30\textwidth]{	\begin{tikzpicture}[scale=0.7, transform shape]

\definecolor{chimeracolor}{RGB}{99, 152, 255}
	
	\foreach \u in {0,...,2}
	\foreach\v in {0,...,2}
	\foreach \x in {0,...,1}
	\foreach \t in {0,...,3} {
		\node[draw,circle, color=black, fill = chimeracolor!50, scale = 0.8, line width=0.6] (\x\t\v\u) at (\x+\u*2,-.5*\t-\v*2.25) {};
	}

	\foreach \x in {0,...,3}
	\foreach \y in {0,...,2}
	\foreach \z in {0,...,2}
	\foreach \w in {0,...,3}
	{\draw[color=chimeracolor] (0\x\y\z) -- (1\w\y\z); }
	
	\foreach \x in {0,...,3}
	\foreach \y in {0,...,2}
	\foreach \z in {0,...,1}
	{	\pgfmathsetmacro{\w}{int(\z+1)}
		\draw[color=chimeracolor] (1\x\y\z) to[out=30,in=150] (1\x\y\w); }
	
	\foreach \x in {0,...,3}
	\foreach \y in {0,...,2}
	\foreach \z in {0,...,1}
	{	\pgfmathsetmacro{\w}{int(\z+1)}
		\draw[color=chimeracolor] (0\x\z\y) to[out=-115,in=115] (0\x\w\y); }
	\end{tikzpicture}}\hfill
  \subcaptionbox{Kuratowski~\cite{istrail2000statistical}\label{fig:figKuratowski}}[0.30\textwidth]{\begin{tikzpicture}[scale=0.7, transform shape]

\definecolor{kuratowskicolor}{RGB}{246, 113, 109}

	\foreach \x in {0,...,5}
	\foreach \y in {0,...,6} {
		\node[draw, circle, color=black, fill=kuratowskicolor!50, scale = 0.8, line width=0.6] (\x\y) at (\x,-\y) {};
	}
	
	\foreach \x in {0,...,4}
	\foreach \y in {0,...,6}
	{	\pgfmathsetmacro{\w}{int(\x+1)}
		\draw[color=kuratowskicolor] (\x\y) -- (\w\y); }
	
	\foreach \x in {0,...,4}
	\foreach \y in {0,2,4}
	{ \pgfmathsetmacro{\u}{int(\x+1)}
		\pgfmathsetmacro{\v}{int(\y+1)}
		\draw[color=kuratowskicolor] (\x\y) -- (\u\v);
	}
	
		\foreach \x in {0,...,4}
		\foreach \y in {1,3,5}
		{ \pgfmathsetmacro{\u}{int(\x+1)}
			\pgfmathsetmacro{\v}{int(\y-1)}
			\draw[color=kuratowskicolor] (\x\y) -- (\u\v);
		}
		
		\foreach \x in {0,...,5}
		\foreach \y in {1,3,5}
		{ \pgfmathsetmacro{\u}{int(\y+1)}
			\draw[color=kuratowskicolor] (\x\y) -- (\x\u); }
\end{tikzpicture}}\hfill
  \subcaptionbox{Barahona~\cite{barahona1982computational}\label{fig:figTwoLevelGrid}}[0.30\textwidth]{\begin{tikzpicture}[scale = 0.85, transform shape]

\definecolor{barahonacolor}{RGB}{31, 190, 56}

	\foreach \x in {0,...,4}
	\foreach \y in {0,...,4} {
		\node[draw, circle, color=black, fill=barahonacolor!50, scale = 0.7, line width=0.6] (\x\y) at (\x,-\y) {};
	}
	
	\foreach \x in {5,...,9}
	\foreach \y in {5,...,9} {
		\node[draw, circle, color=black, fill=barahonacolor!50, scale = 0.7, line width=0.6] (\x\y) at (\x-5.5,-\y+4.5) {};
	}
	
	\foreach \x in {0,...,3}
	\foreach \y in {0,...,4} {
		\pgfmathsetmacro{\u}{int(\x+1)}
		\draw[color=barahonacolor] (\x\y) -- (\u\y);
	}
	
	\foreach \x in {5,...,8}
	\foreach \y in {5,...,9} {
		\pgfmathsetmacro{\u}{int(\x+1)}
		\draw[color=barahonacolor] (\x\y) -- (\u\y);
	}
	
	\draw[color=barahonacolor] (00) -- (01);
	\draw[color=barahonacolor] (10) -- (11);
	\draw[color=barahonacolor] (20) -- (21);
	\draw[color=barahonacolor] (30) -- (31);
	\draw[color=barahonacolor] (40) -- (41);

	\draw[color=barahonacolor] (01) -- (02);
	\draw[color=barahonacolor] (11) -- (12);
	\draw[color=barahonacolor] (21) -- (22);
	\draw[color=barahonacolor] (31) -- (32);
	\draw[color=barahonacolor] (41) -- (42);

	\draw[color=barahonacolor] (02) -- (03);
	\draw[color=barahonacolor] (12) -- (13);
	\draw[color=barahonacolor] (22) -- (23);
	\draw[color=barahonacolor] (32) -- (33);
	\draw[color=barahonacolor] (42) -- (43);
	
	\draw[color=barahonacolor] (03) -- (04);
	\draw[color=barahonacolor] (13) -- (14);
	\draw[color=barahonacolor] (23) -- (24);
	\draw[color=barahonacolor] (33) -- (34);
	\draw[color=barahonacolor] (43) -- (44);
	
	\draw[color=barahonacolor] (55) -- (56);
	\draw[color=barahonacolor] (65) -- (66);
	\draw[color=barahonacolor] (75) -- (76);
	\draw[color=barahonacolor] (85) -- (86);
	\draw[color=barahonacolor] (95) -- (96);
	
	\draw[color=barahonacolor] (56) -- (57);
	\draw[color=barahonacolor] (66) -- (67);
	\draw[color=barahonacolor] (76) -- (77);
	\draw[color=barahonacolor] (86) -- (87);
	\draw[color=barahonacolor] (96) -- (97);
	
	\draw[color=barahonacolor] (57) -- (58);
	\draw[color=barahonacolor] (67) -- (68);
	\draw[color=barahonacolor] (77) -- (78);
	\draw[color=barahonacolor] (87) -- (88);
	\draw[color=barahonacolor] (97) -- (98);
	
	\draw[color=barahonacolor] (58) -- (59);
	\draw[color=barahonacolor] (68) -- (69);
	\draw[color=barahonacolor] (78) -- (79);
	\draw[color=barahonacolor] (88) -- (89);
	\draw[color=barahonacolor] (98) -- (99);
	
	\draw[color=barahonacolor] (00) -- (55);
	\draw[color=barahonacolor] (01) -- (56);
	\draw[color=barahonacolor] (02) -- (57);
	\draw[color=barahonacolor] (03) -- (58);
	\draw[color=barahonacolor] (04) -- (59);

	\draw[color=barahonacolor] (10) -- (65);
	\draw[color=barahonacolor] (11) -- (66);
	\draw[color=barahonacolor] (12) -- (67);
	\draw[color=barahonacolor] (13) -- (68);
	\draw[color=barahonacolor] (14) -- (69);
	
	\draw[color=barahonacolor] (20) -- (75);
	\draw[color=barahonacolor] (21) -- (76);
	\draw[color=barahonacolor] (22) -- (77);
	\draw[color=barahonacolor] (23) -- (78);
	\draw[color=barahonacolor] (24) -- (79);
	
	\draw[color=barahonacolor] (30) -- (85);
	\draw[color=barahonacolor] (31) -- (86);
	\draw[color=barahonacolor] (32) -- (87);
	\draw[color=barahonacolor] (33) -- (88);
	\draw[color=barahonacolor] (34) -- (89);
	
	\draw[color=barahonacolor] (40) -- (95);
	\draw[color=barahonacolor] (41) -- (96);
	\draw[color=barahonacolor] (42) -- (97);
	\draw[color=barahonacolor] (43) -- (98);
	\draw[color=barahonacolor] (44) -- (99);

	\node[circle] at (0,-4.75) {};

\end{tikzpicture}}
  \caption{Hardware graphs representing the topology of physical architectures.\label{fig:hardwaregraphs} The hardware graphs are assumed to be unbounded, allowing size constraints to be ignored.}
\end{figure}

The first hardware graph layout is the {\em Chimera} graph shown in Fig.~\ref{fig:figChimera}, that represents D-Wave's primary quantum annealing architecture~\cite{harris2010experimental}. It is a~6-regular graph composed of interlinked~$K_{4,4}$ bipartite `unit-cells' and has dynamic range that is significantly larger than potential restricted architectures, such as those with unit coupling strengths. This makes it comparatively straightforward to embed a wide range of input graphs. As such, it is one of the most extensively researched hardware graphs~\cite{choi2011minor,cai2014practical,klymko2014adiabatic,Boothby2016,zaribafiyan2017systematic}, and forms a standard against which other hardware graphs may be compared.

The layouts shown in Fig.~\ref{fig:figKuratowski} and Fig.~\ref{fig:figTwoLevelGrid} are examples of more restricted hardware graphs, which we refer to as the {\em Kuratowski} and {\em Barahona} hardware graphs respectively. They are 5-regular graphs with shorter-range couplings than the Chimera graph, making them more suitable for physical implementations with short-range couplings, such as phosphorus donors in silicon~\cite{hill2015surface} or quantum dots~\cite{veldhorst2017silicon}. Although more restrictive graphs are possible, these particular graphs were chosen because of their historical significance. The Barahona graph was the first to be used to prove that the unit-coupling Ising problem on a three-dimensional grid is NP-Complete~\cite{barahona1982computational}, while the Kuratowski was used to extend this result to general non-planar graphs~\cite{istrail2000statistical}.

\subsection{Direct embedding scaling}\label{sec:architectures-scaling}

In order to gauge the effectiveness of subdivision-embedding on different hardware graphs, randomly-generated cubic input graphs were subdivision-embedded on the Chimera, Barahona and Kuratowski architectures shown in Figure~\ref{fig:hardwaregraphs}. We use cubic graphs as a benchmark due to them having the lowest possible vertex degree such that they are still capable of encoding arbitrarily difficult problems, while also being directly subdivision-embeddable onto arbitrary hardware graph layouts with sufficient connectivity for encoding NP-Hard problems~\cite{istrail2000statistical,barahona1982computational}. In addition, input planar (capable of being embedded on a plane without edges crossing) and non-planar input graphs were compared. Although the planar graphs tested are incapable of encoding NP-Complete problems on unit-coupling architectures, they nonetheless approximate a lower-bound to the topological complexity of an input graph. These embedding studies were performed using an implementation of a minor-embedding heuristic developed by Cai \textit{et al}.~\cite{cai2014practical}. Although this algorithm was specifically intended for minor-embedding, for the case of input graphs with vertices of maximum degree three, the minor and subdivision-embeddings are topologically equivalent.
\begin{figure}
  \begin{center}
  \begin{tikzpicture}
  \node[] at (0,0) {\includegraphics[scale=0.6]{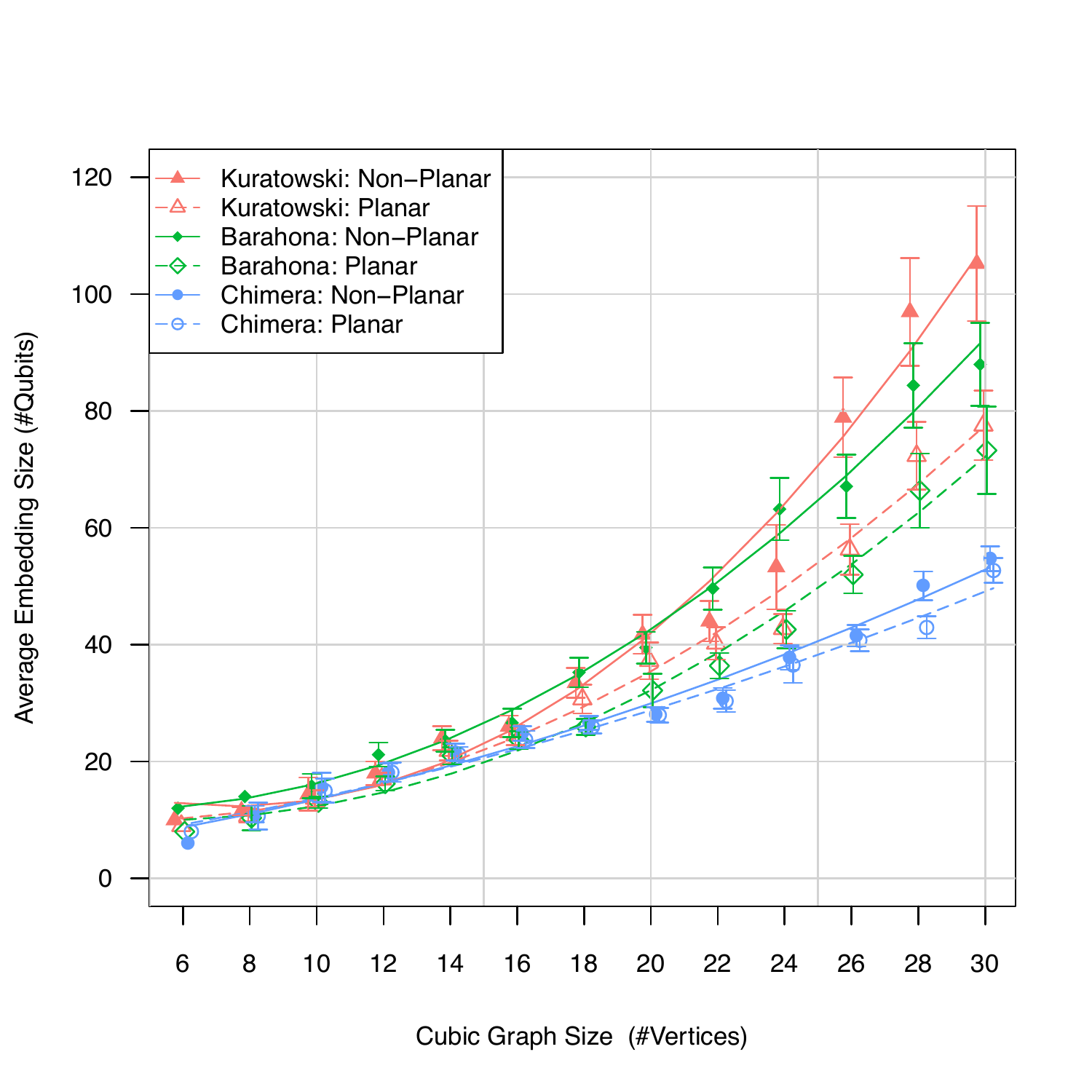}};
  \draw[fill=white, color=white] (-2.8, 3.3) rectangle (-0.4, 1.7);
  \node[] at (-1.6, 2.45)
  {\tiny
  \begin{tabular}{c@{\hspace{1pt}}c@{\hspace{1pt}}c}
  Non-Planar &$\to$& Kuratowski \\
  Planar &$\to$& Kuratowski \\
  Non-Planar & $\to$ & Barahona \\
  Planar & $\to$ & Barahona \\
  Non-Planar & $\to$ & Chimera \\
  Planar & $\to$ & Chimera \\
  \end{tabular}
  };
  \end{tikzpicture}
  \end{center}
  \caption{A scatter plot of resulting embedded graph sizes within Chimera, Barahona and Kuratowski hardware graph layouts with respect to planar and non-planar cubic input graph sizes. These embeddings were determined using the minor-embedding heuristic algorithm~\cite{cai2014practical}. A minor-embedding algorithm can be used for subdivision-embedding here because the embeddings are topologically equivalent for cubic input graphs. For each data point, up to 100 unique planar or non-planar input graph instances were examined (unless all unique graphs of a certain size had already been exhausted). For each input graph, the heuristic algorithm was run on progressively larger hardware graphs for 100 iterations and the smallest embedding was chosen. Each fitted curve shown is quadratic. \label{fig:cubicnewall}}
\end{figure}
For cubic input graphs of size $|V| \in \lbrace 6,8,\dots,30\rbrace$, up to 100 unique planar and non-planar instances were examined (unless all unique graphs of a certain size had already been exhausted). In order to find near-optimal embeddings, the heuristic algorithm was run for 100 iterations on progressively larger hardware graphs for each input graph and the smallest embedding was chosen. The scaling results are shown in Fig.~\ref{fig:cubicnewall} and Table~\ref{table:scaling}. These results indicate that directly embedding cubic graphs onto physical architectures using the heuristic minor-embedding algorithm yields a quadratic overhead on the number of qubits.

\begin{table}
  \centering
            \begin{tabular}{ c|c|c|c }
                & Chimera & Barahona & Kuratowski \\
                \hline \hline
                Planar Cubic Input & $\sim 0.027|V|^2$ & $\sim 0.101|V|^2$ & $\sim 0.101|V|^2$ \\  Non-Planar Cubic Input & $\sim 0.032|V|^2$ & $\sim 0.118|V|^2$ & $\sim 0.191|V|^2$ 
            \end{tabular}
  \vspace{0.3cm}
  \caption{This table shows the leading orders of the fitted quadratic curves for qubit counts required to subdivision-embed cubic input graphs of $|V|$ vertices onto the Chimera, Barahona and Kuratowski hardware graphs corresponding to the data shown in Figure~\ref{fig:cubicnewall}.\label{table:scaling}}
\end{table}

\subsection{Scaling of QUBO reduction and embedding}
\label{subsec:QUBO Scaling}

The complete degree reduction presented in Theorem~\ref{theorem:complete-degree-reduction} is a starting point for the reduction of QUBO problems, but it is likely that more efficient reductions are possible. As a reference point for future comparison, a scaling relation for the size of the transformed problem graph with respect to the input graph is determined, as well as upper bounds for the total number of physical qubits required with respect to complete input graph size.

Given a single vertex representing a spin with degree $D_0$ (see Appendix~\ref{sec:degree-reduction-complexity}), the function for complete degree reduction transforms the vertex into a number of vertices scaling log-linearly as
\begin{equation}
  \mathcal{O}(D_0 \log (D_0)^{\log_{2}5}),
\end{equation}
with proof provided in Appendix~\ref{sec:degree-reduction-complexity}. Since this transformation is recursive in nature, one might initially expect the number of vertices to scale exponentially. However, the number of vertices after each application of the reduction transformation has a square root dependence on the degree prior to transformation, which after repeated applications, effectively cancels the exponential behaviour. An example of this relation is shown in Figure~\ref{fig:scalability-degree1000}.

To compare resource requirements between classical and quantum methods, we examine the relationship between classical input size of a problem and the number of vertices after degree reduction to 3-regular. An arbitrary problem that can be represented as a~ZZ~Ising Hamiltonian has~$N$ boolean variables and~$N(N-1)/2$ interactions, where after scaling, each interaction has integer coupling strengths ranging between~$-M$ and~$M$ and can be represented with~$\lceil \log_2 (2M+1) \rceil$ bits. Thus, a precise encoding of an arbitrary instance of the problem requires
\begin{equation}
  \mathcal{O}\left( N^2\log(M)\right)
\end{equation}
classical bits.

\begin{figure}[h!]
  \centering
  \includegraphics[width=0.60\textwidth]{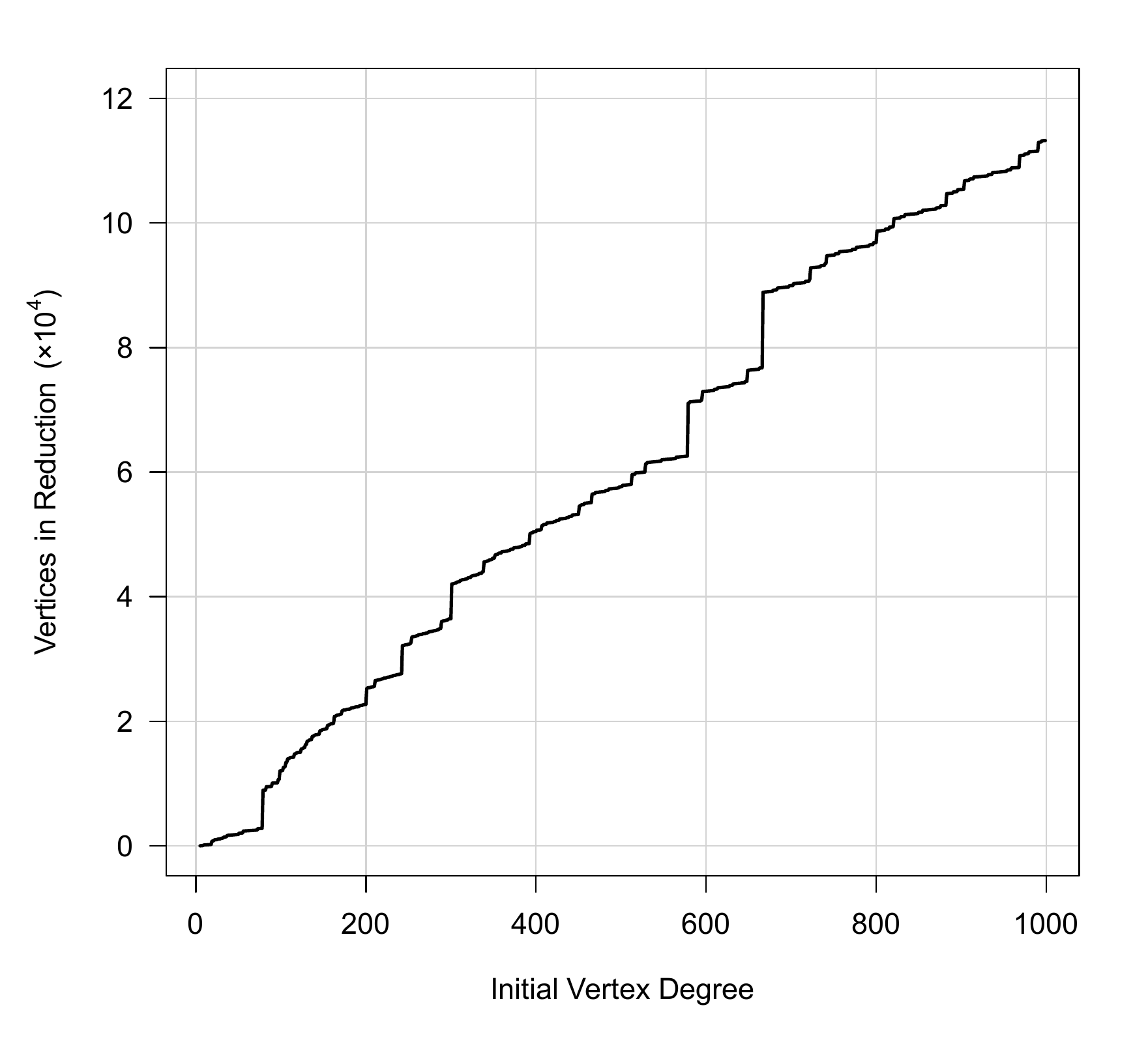}
  \caption{Qubit count upper-bounds after degree reduction of a single spin to a cluster of physical spins of at most degree three. These results were calculated by recursively applying the complete degree reduction transformation and the transformations shown in Fig.~\ref{fig:degree-reduction} to an initial single vertex until all transformed vertices have degree less than or equal to three.}
  \label{fig:scalability-degree1000}\par\medskip
\end{figure}\noindent

We will now show how degree reduction to an input graph of maximum degree three from an arbitrary problem graph with integer coupling strengths scales with respect to $N$ and $M$. We consider two cases relating to the replacement of edges of larger coupling strengths with edges of smaller coupling strengths as discussed in Section~\ref{QUBO}. First, consider the case where the problem graph edges are not required to be reduced (e.g. the hardware graph edges are capable of integer coupling strengths). After applying the degree reductions to all~$N$ vertices, by Eq.~\ref{eq:reduction-complexity}, the problem graph is replaced with an input graph of order
\begin{equation}
  \mathcal{O}\left( N^2 \log\left( N\right)^{\log_2 5} \right).
\end{equation}
Second, consider the case where the problem graph edges are reduced to edges with unit coupling strengths. Each edge is replaced by up to~$M$ edges. So each vertex will be incident to at most~$M(N-1)$ edges. After applying the degree reductions to all $N$ vertices, by Eq.~\ref{eq:reduction-complexity}, the problem graph is replaced with an input graph of order
\begin{equation}
  \mathcal{O}\left( MN^2 \log\left( MN\right)^{\log_2 5} \right).
\end{equation}
We can obtain further scaling relations between the sizes of these problem graphs and the physical embedded hardware graphs by squaring the number of vertices to account for the embedding stage (from the relations shown in Table~\ref{table:scaling}). It is important to note that the order of scaling will likely be improved by using tighter upper bounds in the complexity calculations and by applying optimisations to the algorithm.

To get a sense of how many physical qubits could be required in a worst-case scenario, we will go through an example for reducing a complete problem graph with integer coupling strengths and no local fields to an input graph with unit-coupling strengths, no local fields and vertices of maximum degree three, and then finally embedding it into an architecture with unit-couplings and no local fields. For a complete problem graph with~$k$-bit coupling strengths and~$N$ vertices, after edge splitting to unit coupling strengths, each vertex is reduced to a vertex with at most degree $(N-1)2^{k-1}$, where one bit is reserved for sign. For this example we will let~$k=6$. Let $T_3[D]$ be the number of vertices after a reduction to a graph with at most degree three from a single vertex of degree $D$, calculated using a recursive algorithm and shown in Figure~\ref{fig:scalability-degree1000}. Then we can write the total number of vertices in the cubic input graph as~$NT_3[32(N-1)]$. To then get the number of physical qubits in an embedding, the quadratic best-fits for embedding random non-planar cubic graphs into architectures shown in Table~\ref{table:scaling} and Fig.~\ref{fig:cubicnewall} can be used as approximate worst-case scaling functions. Computations using the Chimera, Kuratowski and Barahona architectures give the results shown in Figure~\ref{fig:upper-bound-of-physical-qubit-scalability}.

Since the aim of this paper is to lay the foundations for mapping~QUBO problems to heavily restricted~AQC architectures, little attention has been given to optimisation. We expect that the resource requirements and scaling functions calculated in this section are sensitive to improvements that could lead to significant resource savings can be expected.

\begin{figure}[h!]
  \centering
  \includegraphics[width=0.60\textwidth]{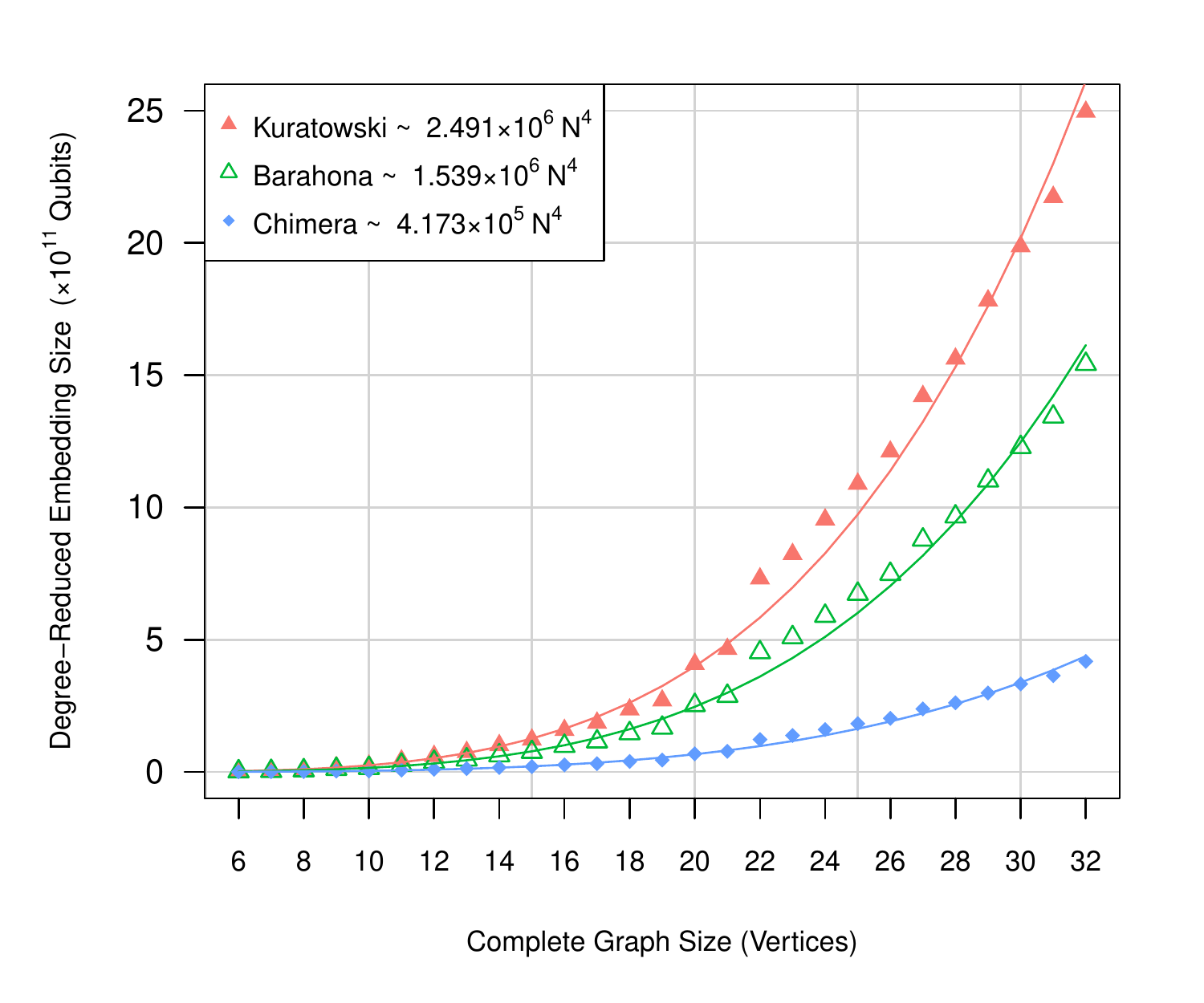}
  \caption{This figure shows the upper-bound for the number of physical qubits required to map an arbitrary fully-connected problem graph with~6-bit precision coupling strengths and no local fields onto the unit-coupling Chimera, Barahona and Kuratowski hardware graph layouts. The embedding sizes are calculated using the non-planar scaling relations for cubic graphs shown in Table~\ref{table:scaling}, which are for Chimera~$0.032 x^2 + 0.7x$, for Barahona~$0.118x^2 - 0.9x$, and for Kuratowski~$0.191x^2 - 3.0x$ where $x$ is the size of the graph after the reduction stage. The value for $x$ is computed as $x = NT_3[2^5(N-1)]$, where~$N$ is the number of qubits in the initial problem graph and $T_3[D_0]$ is the number of qubits resulting from degree reducing a single vertex of degree $D_0$ to a graph of at most degree three. The values for~$T_3[D_0]$ are assigned according to the results shown in Fig.~\ref{fig:scalability-degree1000} which were computed by recursively applying degree reduction.}\label{fig:upper-bound-of-physical-qubit-scalability}
\end{figure}\noindent

\subsection{Scaling of SAT reduction and embedding}
\label{subsec:SAT Scaling}
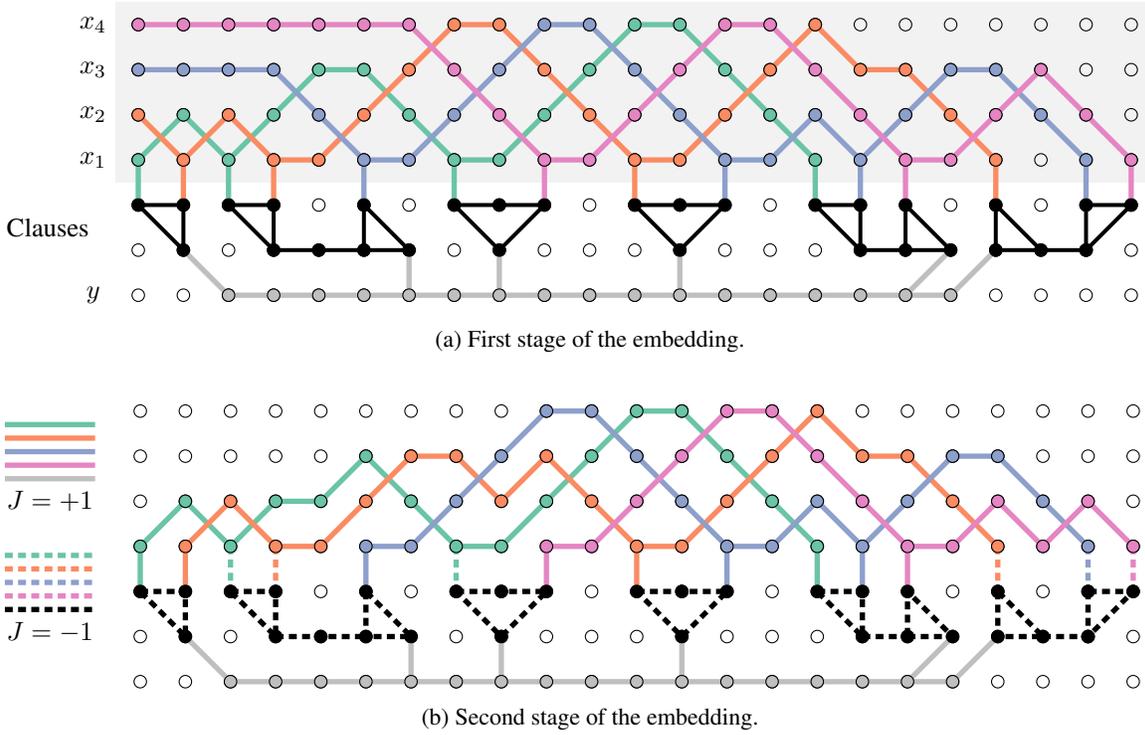
\begin{figure}[p]
\centering
\subcaptionbox{First stage of the embedding.\label{fig:braidembeddingA}}[0.8\textwidth]{\hspace*{-2cm}\definecolor{x1color}{RGB}{106,196,165}
\definecolor{x2color}{RGB}{250,139,99}
\definecolor{x3color}{RGB}{141, 159, 203}
\definecolor{x4color}{RGB}{229,133,195}

\tikzstyle{x1} = [draw, circle, scale=0.5, fill=x1color]
\tikzstyle{x2} = [draw, circle, scale=0.5, fill=x2color]
\tikzstyle{x3} = [draw, circle, scale=0.5, fill=x3color]
\tikzstyle{x4} = [draw, circle, scale=0.5, fill=x4color]
\tikzstyle{clause} = [draw, circle, scale=0.5, fill=black]
\tikzstyle{zv} = [draw, circle, scale=0.5, fill=gray!50]
\tikzstyle{empty} = [draw, circle, scale=0.5]

\begin{tikzpicture}[scale=0.6]

     \fill [gray!10] (-0.5,-0.5) rectangle (22.5,3.5);

    \foreach \x in {0,...,22}
	\foreach \y in {0,...,6} {
	    \pgfmathsetmacro{\u}{int(\y-3)}
		\node[draw, circle, scale = 0.5] (\x\y) at (\x,\u) {};
	}

    \node[] at (-1, 0) {$x_1$};
    \node[] at (-1, 1) {$x_2$};
    \node[] at (-1, 2) {$x_3$};
    \node[] at (-1, 3) {$x_4$};
    \node[] at (-2, -1.5) {Clauses};
    \node[] at (-1, -3) {$y$};

\draw[color=x1color, line width=2] (0, 0) -- (0, -1);
	\draw[color=x1color, line width=2] (2, 0) -- (2, -1);
	\draw[color=x1color, line width=2] (7, 0) -- (7, -1);
	\draw[color=x1color, line width=2] (15, 0) -- (15, -1);
	
	\draw[color=x2color, line width=2] (1, 0) -- (1, -1);
	\draw[color=x2color, line width=2] (3, 0) -- (3, -1);
	\draw[color=x2color, line width=2] (11, 0) -- (11, -1);
	\draw[color=x2color, line width=2] (19, 0) -- (19, -1);
	
	\draw[color=x3color, line width=2] (5, 0) -- (5, -1);
	\draw[color=x3color, line width=2] (13, 0) -- (13, -1);
	\draw[color=x3color, line width=2] (16, 0) -- (16, -1);
	\draw[color=x3color, line width=2] (21, 0) -- (21, -1);

	\draw[color=x4color, line width=2] (9, 0) -- (9, -1);
	\draw[color=x4color, line width=2] (17, 0) -- (17, -1);
	\draw[color=x4color, line width=2] (22, 0) -- (22, -1);

    \draw[color=x1color, line width=2] (0,0) -- (1,1) -- (2,0) -- (3,1) -- (4,2) -- (5,2) -- (6,1) -- (7,0) -- (8,0) -- (9,1) -- (10,2) -- (11,3) -- (12,3) -- (13,2) -- (14,1) -- (15,0);

	\node[x1] at (0,0) {};
	\node[x1] at (1,1) {};
	\node[x1] at (2,0) {};
	\node[x1] at (3,1) {};
	\node[x1] at (4,2) {};
	\node[x1] at (5,2) {};
	\node[x1] at (6,1) {};
	\node[x1] at (7,0) {};
	\node[x1] at (8,0) {};
	\node[x1] at (9,1) {};
	\node[x1] at (10,2) {};
	\node[x1] at (11,3) {};
	\node[x1] at (12,3) {};
	\node[x1] at (13,2) {};
	\node[x1] at (14,1) {};
	\node[x1] at (15,0) {};
	
	\draw[color=x2color, line width=2] (0,1) -- (1,0) -- (2,1) -- (3,0) -- (4,0) -- (5,1) -- (6,2) -- (7,3) -- (8,3) -- (9,2) -- (10,1) -- (11,0) -- (12,0) -- (13,1) -- (14,2) -- (15,3) -- (16,2) -- (17,2) -- (18,1) -- (19,0);
	
	\node[x2] at (0,1) {};
	\node[x2] at (1,0) {};
	\node[x2] at (2,1) {};
	\node[x2] at (3,0) {};
	\node[x2] at (4,0) {};
	\node[x2] at (5,1) {};
	\node[x2] at (6,2) {};
	\node[x2] at (7,3) {};
	\node[x2] at (8,3) {};
	\node[x2] at (9,2) {};
	\node[x2] at (10,1) {};
	\node[x2] at (11,0) {};
	\node[x2] at (12,0) {};
	\node[x2] at (13,1) {};
	\node[x2] at (14,2) {};
	\node[x2] at (15,3) {};
	\node[x2] at (16,2) {};
	\node[x2] at (17,2) {};
	\node[x2] at (18,1) {};
	\node[x2] at (19,0) {};
	
	\draw[color=x3color, line width=2] (0,2) -- (1,2) -- (2,2) -- (3,2) -- (4,1) -- (5,0) -- (6,0) -- (7,1) -- (8,2) -- (9,3) -- (10,3) -- (11,2) -- (13,0) -- (14,0) -- (15,1) -- (16,0) -- (17,1) -- (18,2) -- (19,2) -- (20,1) -- (21,0);
	
	\node[x3] at (0,2) {};
	\node[x3] at (1,2) {};
	\node[x3] at (2,2) {};
	\node[x3] at (3,2) {};
	\node[x3] at (4,1) {};
	\node[x3] at (5,0) {};
	\node[x3] at (6,0) {};
	\node[x3] at (7,1) {};
	\node[x3] at (8,2) {};
	\node[x3] at (9,3) {};
	\node[x3] at (10,3) {};
	\node[x3] at (11,2) {};
	\node[x3] at (12,1) {};
	\node[x3] at (13,0) {};
	\node[x3] at (14,0) {};
	\node[x3] at (15,1) {};
	\node[x3] at (16,0) {};
	\node[x3] at (17,1) {};
	\node[x3] at (18,2) {};
	\node[x3] at (19,2) {};
	\node[x3] at (20,1) {};
	\node[x3] at (21,0) {};
	
	\draw[color=x4color, line width=2] (0,3) -- (1,3) -- (2,3) -- (3,3) -- (4,3) -- (5,3) -- (6,3) -- (7,2) -- (8,1) -- (9,0) -- (10,0) -- (11,1) -- (12,2) -- (13,3) -- (14,3) -- (15,2) -- (16,1) -- (17,0) -- (18,0) -- (19,1) -- (20,2) -- (21,1) -- (22,0);
	
	\node[x4] at (0,3) {};
	\node[x4] at (1,3) {};
	\node[x4] at (2,3) {};
	\node[x4] at (3,3) {};
	\node[x4] at (4,3) {};
	\node[x4] at (5,3) {};
	\node[x4] at (6,3) {};
	\node[x4] at (7,2) {};
	\node[x4] at (8,1) {};
	\node[x4] at (9,0) {};
	\node[x4] at (10,0) {};
	\node[x4] at (11,1) {};
	\node[x4] at (12,2) {};
	\node[x4] at (13,3) {};
	\node[x4] at (14,3) {};
	\node[x4] at (15,2) {};
	\node[x4] at (16,1) {};
	\node[x4] at (17,0) {};
	\node[x4] at (18,0) {};
	\node[x4] at (19,1) {};
	\node[x4] at (20,2) {};
	\node[x4] at (21,1) {};
	\node[x4] at (22,0) {};
	
	\draw[color=gray!50, line width=2] (1, -2) -- (2, -3) -- (3, -3) -- (4, -3) -- (5, -3) -- (6, -3) -- (7, -3) -- (8,-3) -- (9, -3) -- (10, -3) -- (11, -3) -- (12, -3) -- (13, -3) -- (14, -3) -- (15, -3) -- (16, -3) -- (17, -3) -- (18, -3) -- (19, -2);
	\draw[color=gray!50, line width=2] (6, -3) -- (6, -2);
	\draw[color=gray!50, line width=2] (8, -3) -- (8, -2);
	\draw[color=gray!50, line width=2] (12, -3) -- (12, -2);
	\draw[color=gray!50, line width=2] (17, -3) -- (18, -2);
	
	\node[clause] at (0, -1) {};
	\node[clause] at (1, -1) {};
	\node[clause] at (1, -2) {};
	\draw[line width=1.5] (0, -1) -- (1, -1) -- (1, -2) -- cycle;
	
	\node[clause] at (2, -1) {};
	\node[clause] at (3, -1) {};
	\node[clause] at (3, -2) {};
	\node[clause] at (4, -2) {};
	\node[clause] at (5, -2) {};
	\node[clause] at (5, -1) {};
	\node[clause] at (6, -2) {};
	\draw[line width=1.5] (2, -1) -- (3, -1) -- (3, -2) -- cycle;
	\draw[line width=1.5] (5, -1) -- (5, -2) -- (6, -2) -- cycle;
	\draw[line width=1.5] (3, -2) -- (4,-2) -- (5, -2);
	
	\node[clause] at (7, -1) {};
	\node[clause] at (8, -1) {};
	\node[clause] at (9, -1) {};
	\node[clause] at (8, -2) {};
	\draw[line width=1.5] (7, -1) -- (8, -1) -- (9, -1) -- (8, -2) -- cycle;
	
	\node[clause] at (11, -1) {};
	\node[clause] at (12, -1) {};
	\node[clause] at (13, -1) {};
	\node[clause] at (12, -2) {};
	\draw[line width=1.5] (11, -1) -- (12, -1) -- (13, -1) -- (12, -2) -- cycle;

	\node[clause] at (15, -1) {};
	\node[clause] at (16, -1) {};
	\node[clause] at (16, -2) {};
	\draw[line width=1.5] (15, -1) -- (16, -1) -- (16, -2) -- cycle;
	\node[clause] at (17, -2) {};
	\draw[line width=1.5] (16, -2) -- (17, -2);
	\node[clause] at (17, -1) {};
	\node[clause] at (18, -2) {};
	\draw[line width=1.5] (17, -2) -- (18, -2) -- (17, -1) -- cycle;
	
	\node[clause] at (19, -1) {};
	\node[clause] at (19, -2) {};
	\node[clause] at (20, -2) {};
	\draw[line width=1.5] (19, -1) -- (19, -2) -- (20, -2) -- cycle;
	
	\node[clause] at (21, -2) {};
	\draw[line width=1.5] (20, -2) -- (21, -2);
	
    \node[clause] at (21, -1) {};
    \node[clause] at (21, -2) {};
    \node[clause] at (22, -1) {};
    \draw[line width=1.5] (21, -1) -- (21, -2) -- (22, -1) -- cycle;
    
    \node[zv] at (2, -3) {};
    \node[zv] at (3, -3) {};
    \node[zv] at (4, -3) {};
    \node[zv] at (5, -3) {};
    \node[zv] at (6, -3) {};
    \node[zv] at (7, -3) {};
    \node[zv] at (8, -3) {};
    \node[zv] at (9, -3) {};
    \node[zv] at (10, -3) {};
    \node[zv] at (11, -3) {};
    \node[zv] at (12, -3) {};
    \node[zv] at (13, -3) {};
    \node[zv] at (14, -3) {};
    \node[zv] at (15, -3) {};
    \node[zv] at (16, -3) {};
    \node[zv] at (17, -3) {};
    \node[zv] at (18, -3) {};

\end{tikzpicture}} \\ 
\subcaptionbox{Second stage of the embedding.\label{fig:braidembeddingB}}[0.8\textwidth]{\hspace*{-2cm}\definecolor{x1color}{RGB}{106,196,165}
\definecolor{x2color}{RGB}{250,139,99}
\definecolor{x3color}{RGB}{141, 159, 203}
\definecolor{x4color}{RGB}{229,133,195}

\tikzstyle{x1} = [draw, circle, scale=0.5, fill=x1color]
\tikzstyle{x2} = [draw, circle, scale=0.5, fill=x2color]
\tikzstyle{x3} = [draw, circle, scale=0.5, fill=x3color]
\tikzstyle{x4} = [draw, circle, scale=0.5, fill=x4color]
\tikzstyle{clause} = [draw, circle, scale=0.5, fill=black]
\tikzstyle{zv} = [draw, circle, scale=0.5, fill=gray!50]
\tikzstyle{empty} = [draw, circle, scale=0.5]

\begin{tikzpicture}[scale=0.6]

    \node[] at (0,4) {};
    
    \draw[line width=2, color=x1color] (-3, 2.7) -- (-1, 2.7);
    \draw[line width=2, color=x2color] (-3, 2.4) -- (-1, 2.4);
    \draw[line width=2, color=x3color] (-3, 2.1) -- (-1, 2.1);
    \draw[line width=2, color=x4color] (-3, 1.8) -- (-1, 1.8);
    \draw[line width=2, color=gray!50] (-3, 1.5) -- (-1, 1.5);
    \node[] at (-2, 1) {$J = +1$};
    
    \draw[densely dashed, line width=2, color=x1color] (-3, -0.2) -- (-1, -0.2);
    \draw[densely dashed, line width=2, color=x2color] (-3, -0.5) -- (-1, -0.5);
    \draw[densely dashed, line width=2, color=x3color] (-3, -0.8) -- (-1, -0.8);
    \draw[densely dashed, line width=2, color=x4color] (-3, -1.1) -- (-1, -1.1);
    \draw[densely dashed, line width=2] (-3, -1.4) -- (-1, -1.4);
    \node[] at (-2, -1.9) {$J = -1$};

    \foreach \x in {0,...,22}
	\foreach \y in {0,...,6} {
	    \pgfmathsetmacro{\u}{int(\y-3)}
		\node[draw, circle, scale = 0.5] (\x\y) at (\x,\u) {};
	}

\draw[color=x1color, line width=2] (0, 0) -- (0, -1);
	\draw[color=x1color, line width=2, densely dashed] (2, 0) -- (2, -1);
	\draw[color=x1color, line width=2, densely dashed] (7, 0) -- (7, -1);
	\draw[color=x1color, line width=2] (15, 0) -- (15, -1);
	
	\draw[color=x2color, line width=2] (1, 0) -- (1, -1);
	\draw[color=x2color, line width=2, densely dashed] (3, 0) -- (3, -1);
	\draw[color=x2color, line width=2] (11, 0) -- (11, -1);
	\draw[color=x2color, line width=2, densely dashed] (19, 0) -- (19, -1);
	
	\draw[color=x3color, line width=2] (5, 0) -- (5, -1);
	\draw[color=x3color, line width=2] (13, 0) -- (13, -1);
	\draw[color=x3color, line width=2] (16, 0) -- (16, -1);
	\draw[color=x3color, line width=2, densely dashed] (21, 0) -- (21, -1);

	\draw[color=x4color, line width=2] (9, 0) -- (9, -1);
	\draw[color=x4color, line width=2] (17, 0) -- (17, -1);
	\draw[color=x4color, line width=2, densely dashed] (22, 0) -- (22, -1);

    \draw[color=x1color, line width=2] (0,0) -- (1,1) -- (2,0) -- (3,1) -- (4,1) -- (5,2) -- (6,1) -- (7,0) -- (8,0) -- (9,1) -- (10,2) -- (11,3) -- (12,3) -- (13,2) -- (14,1) -- (15,0);
    
    \node[] at (-1, 0) {};
    \node[] at (-1, 1) {};
    \node[] at (-1, 2) {};
    \node[] at (-1, 3) {};
    \node[] at (-2, -1.5) {};
    \node[] at (-1, -3) {};

	\node[x1] at (0,0) {};
	\node[x1] at (1,1) {};
	\node[x1] at (2,0) {};
	\node[x1] at (3,1) {};
	\node[x1] at (4,1) {};
	\node[x1] at (5,2) {};
	\node[x1] at (6,1) {};
	\node[x1] at (7,0) {};
	\node[x1] at (8,0) {};
	\node[x1] at (9,1) {};
	\node[x1] at (10,2) {};
	\node[x1] at (11,3) {};
	\node[x1] at (12,3) {};
	\node[x1] at (13,2) {};
	\node[x1] at (14,1) {};
	\node[x1] at (15,0) {};
	
	\draw[color=x2color, line width=2] (1,0) -- (2,1) -- (3,0) -- (4,0) -- (5,1) -- (6,2) -- (7,2) -- (8,1) -- (9,2) -- (10,1) -- (11,0) -- (12,0) -- (13,1) -- (14,2) -- (15,3) -- (16,2) -- (17,2) -- (18,1) -- (19,0);
	
	\node[x2] at (1,0) {};
	\node[x2] at (2,1) {};
	\node[x2] at (3,0) {};
	\node[x2] at (4,0) {};
	\node[x2] at (5,1) {};
	\node[x2] at (6,2) {};
	\node[x2] at (7,2) {};
	\node[x2] at (8,1) {};
	\node[x2] at (9,2) {};
	\node[x2] at (10,1) {};
	\node[x2] at (11,0) {};
	\node[x2] at (12,0) {};
	\node[x2] at (13,1) {};
	\node[x2] at (14,2) {};
	\node[x2] at (15,3) {};
	\node[x2] at (16,2) {};
	\node[x2] at (17,2) {};
	\node[x2] at (18,1) {};
	\node[x2] at (19,0) {};
	
	\draw[color=x3color, line width=2] (5,0) -- (6,0) -- (7,1) -- (8,2) -- (9,3) -- (10,3) -- (11,2) -- (13,0) -- (14,0) -- (15,1) -- (16,0) -- (17,1) -- (18,2) -- (19,2) -- (20,1) -- (21,0);
	
	\node[x3] at (5,0) {};
	\node[x3] at (6,0) {};
	\node[x3] at (7,1) {};
	\node[x3] at (8,2) {};
	\node[x3] at (9,3) {};
	\node[x3] at (10,3) {};
	\node[x3] at (11,2) {};
	\node[x3] at (12,1) {};
	\node[x3] at (13,0) {};
	\node[x3] at (14,0) {};
	\node[x3] at (15,1) {};
	\node[x3] at (16,0) {};
	\node[x3] at (17,1) {};
	\node[x3] at (18,2) {};
	\node[x3] at (19,2) {};
	\node[x3] at (20,1) {};
	\node[x3] at (21,0) {};
	
	\draw[color=x4color, line width=2] (9,0) -- (10,0) -- (11,1) -- (12,2) -- (13,3) -- (14,3) -- (15,2) -- (16,1) -- (17,0) -- (18,0) -- (19,1) -- (20,0) -- (21,1) -- (22,0);
	
	\node[x4] at (9,0) {};
	\node[x4] at (10,0) {};
	\node[x4] at (11,1) {};
	\node[x4] at (12,2) {};
	\node[x4] at (13,3) {};
	\node[x4] at (14,3) {};
	\node[x4] at (15,2) {};
	\node[x4] at (16,1) {};
	\node[x4] at (17,0) {};
	\node[x4] at (18,0) {};
	\node[x4] at (19,1) {};
	\node[x4] at (20,0) {};
	\node[x4] at (21,1) {};
	\node[x4] at (22,0) {};
	
	\draw[color=gray!50, line width=2] (1, -2) -- (2, -3) -- (3, -3) -- (4, -3) -- (5, -3) -- (6, -3) -- (7, -3) -- (8,-3) -- (9, -3) -- (10, -3) -- (11, -3) -- (12, -3) -- (13, -3) -- (14, -3) -- (15, -3) -- (16, -3) -- (17, -3) -- (18, -3) -- (19, -2);
	\draw[color=gray!50, line width=2] (6, -3) -- (6, -2);
	\draw[color=gray!50, line width=2] (8, -3) -- (8, -2);
	\draw[color=gray!50, line width=2] (12, -3) -- (12, -2);
	\draw[color=gray!50, line width=2] (17, -3) -- (18, -2);
	
	\node[clause] at (0, -1) {};
	\node[clause] at (1, -1) {};
	\node[clause] at (1, -2) {};
	\draw[densely dashed, line width=2] (0, -1) -- (1, -1) -- (1, -2) -- cycle;
	
	\node[clause] at (2, -1) {};
	\node[clause] at (3, -1) {};
	\node[clause] at (3, -2) {};
	\node[clause] at (4, -2) {};
	\node[clause] at (5, -2) {};
	\node[clause] at (5, -1) {};
	\node[clause] at (6, -2) {};
	\draw[densely dashed, line width=2] (2, -1) -- (3, -1) -- (3, -2) -- cycle;
	\draw[densely dashed, line width=2] (5, -1) -- (5, -2) -- (6, -2) -- cycle;
	\draw[densely dashed, line width=2] (3, -2) -- (4,-2) -- (5, -2);
	
	\node[clause] at (7, -1) {};
	\node[clause] at (8, -1) {};
	\node[clause] at (9, -1) {};
	\node[clause] at (8, -2) {};
	\draw[densely dashed, line width=2] (7, -1) -- (8, -1) -- (9, -1) -- (8, -2) -- cycle;
	
	\node[clause] at (11, -1) {};
	\node[clause] at (12, -1) {};
	\node[clause] at (13, -1) {};
	\node[clause] at (12, -2) {};
	\draw[densely dashed, line width=2] (11, -1) -- (12, -1) -- (13, -1) -- (12, -2) -- cycle;

	\node[clause] at (15, -1) {};
	\node[clause] at (16, -1) {};
	\node[clause] at (16, -2) {};
	\draw[densely dashed, line width=2] (15, -1) -- (16, -1) -- (16, -2) -- cycle;
	\node[clause] at (17, -2) {};
	\draw[densely dashed, line width=2] (16, -2) -- (17, -2);
	\node[clause] at (17, -1) {};
	\node[clause] at (18, -2) {};
	\draw[densely dashed, line width=2] (17, -2) -- (18, -2) -- (17, -1) -- cycle;
	
	\node[clause] at (19, -1) {};
	\node[clause] at (19, -2) {};
	\node[clause] at (20, -2) {};
	\draw[densely dashed, line width=2] (19, -1) -- (19, -2) -- (20, -2) -- cycle;
	
	\node[clause] at (21, -2) {};
	\draw[densely dashed, line width=2] (20, -2) -- (21, -2);
	
    \node[clause] at (21, -1) {};
    \node[clause] at (21, -2) {};
    \node[clause] at (22, -1) {};
    \draw[densely dashed, line width=2] (21, -1) -- (21, -2) -- (22, -1) -- cycle;

    \node[zv] at (2, -3) {};
    \node[zv] at (3, -3) {};
    \node[zv] at (4, -3) {};
    \node[zv] at (5, -3) {};
    \node[zv] at (6, -3) {};
    \node[zv] at (7, -3) {};
    \node[zv] at (8, -3) {};
    \node[zv] at (9, -3) {};
    \node[zv] at (10, -3) {};
    \node[zv] at (11, -3) {};
    \node[zv] at (12, -3) {};
    \node[zv] at (13, -3) {};
    \node[zv] at (14, -3) {};
    \node[zv] at (15, -3) {};
    \node[zv] at (16, -3) {};
    \node[zv] at (17, -3) {};
    \node[zv] at (18, -3) {};

\end{tikzpicture}}
\caption{Braid subdivision-embedding stages for the SAT instance in Equation~\ref{eq:SAT}. \textbf{(a)} In the first stage, the top $n$ rows are populated entirely by variable-gadgets which are interwoven according to a priority queue based on their order of appearance. The clause-gadgets and ancilla variable-gadget are fairly simple to attach afterwards. For any given clause-gadget, the order in which its corresponding variable-gadgets are attached does not matter. \textbf{(b)} In the second stage, the variable-gadgets are trimmed according to the number of clauses they actually span, and couplings are introduced. They may also be braided downwards to fill gaps, resulting in a more compact embedding. Further optimisations could also be applied to reduce qubit resource requirements.\label{fig:braidembedding}}
\end{figure}

\begin{algorithm}[p]
  \SetAlgoLined
  
  Create queue $Q$ of ordered variable appearances in $\phi$.

  \ForEach{grid column, left to right}{
    Sort logical variables $\lbrace x_i\rbrace $ by priority in $Q$.

    \eIf{first column}{Place sorted logical variables vertically.
    }{
      \For{sorted logical variables $\lbrace x_i\rbrace$}{
        \uIf{able to extend diagonally down-and-right}{Extend $x_i$ diagonally down to next column.}
        \uElseIf{able to extend horizontally right}{Extend $x_i$ horizontally right to next column.}
        \uElse{Extend $x_i$ diagonally up to next column.}
      }
    }
    If variable of highest priority reaches lowest row, pop from queue $Q$.
  }
  Attach clause and ancilla variable gadgets. (see Fig.~\ref{fig:braidembeddingA}).
  
  Trim excess edges, add couplings and apply minor optimisations (see Fig.~\ref{fig:braidembeddingB}).
  \caption{\label{algo:braidalgo}Braid Embedding a SAT instance $\phi$}
\end{algorithm}
After reducing a $C$-clause $k$-SAT instance to a cubic input graph for subdivision-embedding, it can be shown that the number of vertices required is at most
\begin{align}
  |V| = 4Ck - 2C - 4
\end{align}
using the constructions shown in Section~\ref{sec:SAT}. To subdivision-embed the cubic input graph into a hardware graph, one option is to use a minor-embedding heuristic, which for the specific algorithm used in this work~\cite{cai2014practical} has resource scaling shown in Figure~\ref{fig:cubicnewall}. However, for particular hardware graph layouts, such as a regular grid with next-nearest neighbour interactions, it is possible to deterministically subdivision-embed SAT in a way that fully utilises the available crossings by braiding the variable-gadgets, as shown in Figure~\ref{fig:braidembedding}. In this embedding, the details of which are given in Algorithm~\ref{algo:braidalgo}, the variable-gadgets are interwoven according to an assigned priority, while the clause-gadgets are horizontally spaced to match the placement of the variable-gadgets. Since each variable-gadget occupies the equivalent of a single row and the clause-gadgets occupy a total of two rows, the maximum required vertical dimension of the grid is $(n + 3)$, where $n$ is the number of distinct variables in the SAT instance, and accounting for the extra ancilla variable. The horizontal length of the embedding of a $C$-clause~$k$-SAT instance is required to be at most twice the number of variable appearances in the~SAT instance, i.e. $2Ck$. This is because braiding is only a nearest-neighbour pairwise operation, so two braiding operations are required for every variable-gadget to have a chance to braid up or down (see for example, columns six to eight in Figure~\ref{fig:braidembeddingA}). Thus, a loose upper-bound for the number of qubits required to embed a $C$-clause, $n$-variable, $k$-SAT instance on a regular grid-like architecture with crossings is
\begin{align}
  \# \mathrm{qubits} \leq 2Ck(n+3).
\end{align}
For an arbitrary SAT instance where the number of variables per clause is not constant, $Ck$ may be replaced with the total number of variable appearances.

The algorithm for the braided subdivision-embedding shown in Fig.~\ref{fig:braidembeddingB} begins by ordering all variable appearances within a SAT instance, and creating a priority queue which gives a variable higher priority if it appears earlier on in a SAT instance. Starting from the left-most column, the grid is then scanned left to right, column by column. For each column, the variables, ordered by priority, are either braided down, unbraided, or braided up in order of preference, and depending on available space in the next column. This all occurs within the shaded region of Figure~\ref{fig:braidembeddingA}. It is only after this that the clause-gadgets and ancilla variable-gadget are attached in the three rows below. In the second stage of the embedding, the edges ending in degree-one vertices are trimmed until the embedding is topologically equivalent to the Ising input graph, and couplings are added.

In addition to providing more efficient embeddings, another advantage of this deterministic embedding is that the computation time required to find an embedding typically scales better than heuristic embedding methods. Due to requiring only one scan of the grid based on a priority queue, the time-complexity of the algorithm scales as the area of the grid, i.e. $\mathcal{O}(nCk)$. It is also generalisable (albeit less efficiently), to grid-like architectures with sparser crossings, such as the Kuratowski hardware graph. It is worth noting that clause ordering may affect qubit resource requirements, such as choosing an ordering that reduces the average horizontal distance each variable-gadget spans.

\section{Conclusion}
\label{Conclusion}
We developed techniques for mapping NP-Hard QUBO and SAT problems onto architectures with restrictive constraints where each spin couples to a small number of other spins, couplings consist of only unit coupling strengths i.e. $\{0, \pm J\}$, and spins have no local magnetic fields. This was done by separating the mapping problem into reduction and embedding stages. For the embedding stage, subdivision-embedding was developed and the notion of effective coupling strength was introduced for the Ising energy minimisation problem, which is used to help set compatible hardware coupling strength parameters for subdivision-embeddings. For the reduction stage, the techniques used for QUBO and SAT differ. Complete degree reduction was introduced for QUBO that reduces the number of couplings required for a spin by replacing it with a fully connected cluster of spins in such a way that the cluster acts as a single logical spin. Constraint Hamiltonians were described for SAT that encode problem instances in cubic Ising graphs requiring only unit couplings and no local fields. For both QUBO and SAT reduction stage techniques, we examined how the physical qubit count scales with problem size on Chimera, Barahona and Kuratowski hardware graph layouts, each of which have distinct connectivity. Additionally, an efficient deterministic embedding method for SAT was introduced for an architecture layout of a grid with crossings. For mapping QUBO problems onto heavily restricted physical hardware, we determined upper-bounds for the required number of physical qubits. For realistic problem sizes, we expect optimisations (not considered here) could significantly reduce these resource costs. Additionally, due to the quadratic qubit scaling overhead of the heuristic minor-embedding method used, a more resource efficient embedding method could provide further reductions of embedding size. For SAT, embedding size reductions may be achieved through finding optimal clause orderings. The underlying theory for the techniques developed in this work could aid in transforming Ising graphs for the purposes of Ising energy minimisation and could potentially assist in the development and optimisation of current and future embedding methods.


\begin{thebibliography}{10}

\bibitem{IBM}
{IBM Q}.
\newblock \url{https://www.ibm.com/quantum-computing}.
\newblock Accessed 10 October 2019.

\bibitem{Google}
Google {AI} {Q}uantum.
\newblock \url{https://ai.google/research/teams/applied-science/quantum}.
\newblock Accessed 10 October 2019.

\bibitem{Rigetti}
Rigetti.
\newblock \url{https://www.rigetti.com}.
\newblock Accessed 10 October 2019.

\bibitem{Intel}
Intel.
\newblock
  \url{https://www.intel.com.au/content/www/au/en/research/quantum-computing.html}.
\newblock Accessed 10 October 2019.

\bibitem{IonQ}
Ion{Q}.
\newblock \url{https://ionq.com}.
\newblock Accessed 10 October 2019.

\bibitem{Xanadu}
Xanadu.
\newblock \url{https://www.xanadu.ai}.
\newblock Accessed 10 October 2019.

\bibitem{farhi2000quantum}
Edward Farhi, Jeffrey Goldstone, Sam Gutmann, and Michael Sipser.
\newblock Quantum computation by adiabatic evolution.
\newblock {\em arXiv preprint quant-ph/0001106}, 2000.

\bibitem{DWaveSys}
D-{W}ave {S}ystems.
\newblock \url{https://www.dwavesys.com}.
\newblock Accessed 10 October 2019.

\bibitem{aharonov2008adiabatic}
Dorit Aharonov, Wim Van~Dam, Julia Kempe, Zeph Landau, Seth Lloyd, and Oded
  Regev.
\newblock Adiabatic quantum computation is equivalent to standard quantum
  computation.
\newblock {\em SIAM Review}, 50(4):755--787, 2008.

\bibitem{mizel2007simple}
Ari Mizel, Daniel~A Lidar, and Morgan Mitchell.
\newblock Simple proof of equivalence between adiabatic quantum computation and
  the circuit model.
\newblock {\em Physical Review Letters}, 99(7):070502, 2007.

\bibitem{tiersch2007non}
Markus Tiersch and Ralf Sch{\"u}tzhold.
\newblock Non-{M}arkovian decoherence in the adiabatic quantum search
  algorithm.
\newblock {\em Physical Review A}, 75(6):062313, 2007.

\bibitem{ashhab2006decoherence}
Sahel Ashhab, JR~Johansson, and Franco Nori.
\newblock Decoherence in a scalable adiabatic quantum computer.
\newblock {\em Physical Review A}, 74(5):052330, 2006.

\bibitem{roland2005noise}
Jeremie Roland and Nicolas~J Cerf.
\newblock Noise resistance of adiabatic quantum computation using random matrix
  theory.
\newblock {\em Physical Review A}, 71(3):032330, 2005.

\bibitem{sarandy2005adiabatic}
MS~Sarandy and DA~Lidar.
\newblock Adiabatic quantum computation in open systems.
\newblock {\em Physical Review Letters}, 95(25):250503, 2005.

\bibitem{childs2001robustness}
Andrew~M Childs, Edward Farhi, and John Preskill.
\newblock Robustness of adiabatic quantum computation.
\newblock {\em Physical Review A}, 65(1):012322, 2001.

\bibitem{lucas2014ising}
Andrew Lucas.
\newblock Ising formulations of many {NP} problems.
\newblock {\em Frontiers in Physics}, 2:5, 2014.

\bibitem{pudenz2013quantum}
Kristen~L Pudenz and Daniel~A Lidar.
\newblock Quantum adiabatic machine learning.
\newblock {\em Quantum Information Processing}, 12(5):2027--2070, 2013.

\bibitem{adachi2015application}
Steven~H Adachi and Maxwell~P Henderson.
\newblock Application of quantum annealing to training of deep neural networks.
\newblock {\em arXiv preprint arXiv:1510.06356}, 2015.

\bibitem{neukart2017traffic}
Florian Neukart, David Von~Dollen, Gabriele Compostella, Christian Seidel,
  Sheir Yarkoni, and Bob Parney.
\newblock Traffic flow optimization using a quantum annealer.
\newblock {\em Frontiers in ICT}, 4:29, 2017.

\bibitem{perdomo2008construction}
Alejandro Perdomo, Colin Truncik, Ivan Tubert-Brohman, Geordie Rose, and
  Al{\'a}n Aspuru-Guzik.
\newblock Construction of model hamiltonians for adiabatic quantum computation
  and its application to finding low-energy conformations of lattice protein
  models.
\newblock {\em Physical Review A}, 78(1):012320, 2008.

\bibitem{babbush2014adiabatic}
Ryan Babbush, Peter~J Love, and Al{\'a}n Aspuru-Guzik.
\newblock Adiabatic quantum simulation of quantum chemistry.
\newblock {\em Scientific Reports}, 4:6603, 2014.

\bibitem{streif2019solving}
Michael Streif, Florian Neukart, and Martin Leib.
\newblock Solving quantum chemistry problems with a d-wave quantum annealer.
\newblock In {\em International Workshop on Quantum Technology and Optimization
  Problems}, pages 111--122. Springer, 2019.

\bibitem{kochenberger2004unified}
Gary~A Kochenberger, Fred Glover, Bahram Alidaee, and Cesar Rego.
\newblock A unified modeling and solution framework for combinatorial
  optimization problems.
\newblock {\em OR Spectrum}, 26(2):237--250, 2004.

\bibitem{grass2016quantum}
Tobias Gra{\ss}, David Ravent{\'o}s, Bruno Juli{\'a}-D{\'\i}az, Christian
  Gogolin, and Maciej Lewenstein.
\newblock Quantum annealing for the number-partitioning problem using a tunable
  spin glass of ions.
\newblock {\em Nature Communications}, 7:11524, 2016.

\bibitem{chapuis2019finding}
Guillaume Chapuis, Hristo Djidjev, Georg Hahn, and Guillaume Rizk.
\newblock Finding maximum cliques on the d-wave quantum annealer.
\newblock {\em Journal of Signal Processing Systems}, 91(3-4):363--377, 2019.

\bibitem{hamerly2019experimental}
Ryan Hamerly, Takahiro Inagaki, Peter~L McMahon, Davide Venturelli, Alireza
  Marandi, Tatsuhiro Onodera, Edwin Ng, Carsten Langrock, Kensuke Inaba,
  Toshimori Honjo, et~al.
\newblock Experimental investigation of performance differences between
  coherent {I}sing machines and a quantum annealer.
\newblock {\em Science Advances}, 5(5):eaau0823, 2019.

\bibitem{boothby2019next}
Kelly Boothby, Paul Bunyk, Jack Raymond, and Aidan Roy.
\newblock Next-generation topology of {D}-{W}ave quantum processors.
\newblock Technical report, Technical Report, 2019.

\bibitem{boixo2016computational}
Sergio Boixo, Vadim~N Smelyanskiy, Alireza Shabani, Sergei~V Isakov, Mark
  Dykman, Vasil~S Denchev, Mohammad~H Amin, Anatoly~Yu Smirnov, Masoud Mohseni,
  and Hartmut Neven.
\newblock Computational multiqubit tunnelling in programmable quantum
  annealers.
\newblock {\em Nature Communications}, 7:10327, 2016.

\bibitem{choi2008minor}
Vicky Choi.
\newblock Minor-embedding in adiabatic quantum computation: {I}. {T}he
  parameter setting problem.
\newblock {\em Quantum Information Processing}, 7(5):193--209, 2008.

\bibitem{choi2011minor}
Vicky Choi.
\newblock Minor-embedding in adiabatic quantum computation: {II}.
  {M}inor-universal graph design.
\newblock {\em Quantum Information Processing}, 10(3):343--353, 2011.

\bibitem{pudenz2015quantum}
Kristen~L Pudenz, Tameem Albash, and Daniel~A Lidar.
\newblock Quantum annealing correction for random {I}sing problems.
\newblock {\em Physical Review A}, 91(4):042302, 2015.

\bibitem{pudenz2014error}
Kristen~L Pudenz, Tameem Albash, and Daniel~A Lidar.
\newblock Error-corrected quantum annealing with hundreds of qubits.
\newblock {\em Nature Communications}, 5:3243, 2014.

\bibitem{king2014algorithm}
Andrew~D King and Catherine~C McGeoch.
\newblock Algorithm engineering for a quantum annealing platform.
\newblock {\em arXiv preprint arXiv:1410.2628}, 2014.

\bibitem{bian2014discrete}
Zhengbing Bian, Fabian Chudak, Robert Israel, Brad Lackey, William~G Macready,
  and Aidan Roy.
\newblock Discrete optimization using quantum annealing on sparse {I}sing
  models.
\newblock {\em Frontiers in Physics}, 2:56, 2014.

\bibitem{kaminsky2004scalable}
William~M Kaminsky, Seth Lloyd, and Terry~P Orlando.
\newblock Scalable superconducting architecture for adiabatic quantum
  computation.
\newblock {\em arXiv preprint quant-ph/0403090}, 2004.

\bibitem{garey2002computers}
Michael~R Garey and David~S Johnson.
\newblock {\em Computers and intractability}, volume~29.
\newblock {W. H.} {F}reeman and {Company}, 2002.

\bibitem{robertson1995graph}
Neil Robertson and Paul~D Seymour.
\newblock Graph minors. {XIII}. the disjoint paths problem.
\newblock {\em Journal of Combinatorial Theory, Series B}, 63(1):65--110, 1995.

\bibitem{cook1971complexity}
Stephen~A Cook.
\newblock The complexity of theorem-proving procedures.
\newblock In {\em Proceedings of the third annual ACM symposium on Theory of
  computing}, pages 151--158. ACM, 1971.

\bibitem{mu2015empirical}
Zongxu Mu and Holger~H. Hoos.
\newblock On the empirical time complexity of random {3-SAT} at the phase
  transition.
\newblock In {\em Proceedings of the 24th International Conference on
  Artificial Intelligence}, IJCAI'15, pages 367--373. AAAI Press, 2015.

\bibitem{choi2011different}
Vicky Choi.
\newblock Different adiabatic quantum optimization algorithms for the
  {NP}-complete exact cover and {3SAT} problems.
\newblock {\em Quantum Information \& Computation}, 11(7-8):638--648, 2011.

\bibitem{Douglass2015}
Adam Douglass, Andrew~D. King, and Jack Raymond.
\newblock {\em Constructing SAT filters with a quantum annealer}, pages
  104--120.
\newblock Springer International Publishing, Cham, 2015.

\bibitem{Bian2016}
Zhengbing Bian, Fabian Chudak, Robert Israel, Brad Lackey, William~G. Macready,
  and Aidan Roy.
\newblock {Mapping constrained optimization problems to quantum annealing with
  application to fault diagnosis}.
\newblock {\em arXiv:1603.03111}, pages 1--22, 2016.

\bibitem{istrail2000statistical}
Sorin Istrail.
\newblock Statistical mechanics, three-dimensionality and {NP}-completeness:
  {I}. {U}niversality of intracatability for the partition function of the
  {I}sing model across non-planar surfaces.
\newblock In {\em Proceedings of the thirty-second annual ACM symposium on
  Theory of computing}, pages 87--96. ACM, 2000.

\bibitem{barahona1982computational}
Francisco Barahona.
\newblock On the computational complexity of {I}sing spin glass models.
\newblock {\em Journal of Physics A: Mathematical and General}, 15(10):3241,
  1982.

\bibitem{harris2010experimental}
Richard Harris, MW~Johnson, T~Lanting, AJ~Berkley, J~Johansson, P~Bunyk,
  E~Tolkacheva, E~Ladizinsky, N~Ladizinsky, T~Oh, et~al.
\newblock Experimental investigation of an eight-qubit unit cell in a
  superconducting optimization processor.
\newblock {\em Physical Review B}, 82(2):024511, 2010.

\bibitem{cai2014practical}
Jun Cai, William~G Macready, and Aidan Roy.
\newblock A practical heuristic for finding graph minors.
\newblock {\em arXiv preprint arXiv:1406.2741}, 2014.

\bibitem{klymko2014adiabatic}
Christine Klymko, Blair~D Sullivan, and Travis~S Humble.
\newblock Adiabatic quantum programming: minor embedding with hard faults.
\newblock {\em Quantum Information Processing}, 13(3):709--729, 2014.

\bibitem{Boothby2016}
Tomas Boothby, Andrew~D. King, and Aidan Roy.
\newblock Fast clique minor generation in {C}himera qubit connectivity graphs.
\newblock {\em Quantum Information Processing}, 15(1):495--508, Jan 2016.

\bibitem{zaribafiyan2017systematic}
Arman Zaribafiyan, Dominic~JJ Marchand, and Seyed Saeed~Changiz Rezaei.
\newblock Systematic and deterministic graph minor embedding for cartesian
  products of graphs.
\newblock {\em Quantum Information Processing}, 16(5):136, 2017.

\bibitem{hill2015surface}
Charles~D Hill, Eldad Peretz, Samuel~J Hile, Matthew~G House, Martin Fuechsle,
  Sven Rogge, Michelle~Y Simmons, and Lloyd~CL Hollenberg.
\newblock A surface code quantum computer in silicon.
\newblock {\em Science Advances}, 1(9):e1500707, 2015.

\bibitem{veldhorst2017silicon}
M~Veldhorst, HGJ Eenink, CH~Yang, and AS~Dzurak.
\newblock Silicon {CMOS} architecture for a spin-based quantum computer.
\newblock {\em Nature Communications}, 8(1):1766, 2017.

\end{thebibliography}

\newpage{}
\section*{Appendices}
\appendix
\renewcommand\thefigure{\thesection.\arabic{figure}}
\setcounter{figure}{0} 
\renewcommand\thedefinition{\thesection.\arabic{definition}}
\setcounter{definition}{0} 
\renewcommand\thelemma{\thesection.\arabic{lemma}}
\setcounter{lemma}{0}
\renewcommand\thetheorem{\thesection.\arabic{theorem}}
\setcounter{theorem}{0}
\renewcommand\thecorollary{\thesection.\arabic{corollary}}
\setcounter{corollary}{0}
\section{A proof of the correctness of subdivision-embedding} \label{sec:spin-chain-analysis}
In the following section, spin layout concepts will be defined in the language of graph theory. A graph is a convenient representation 
that can describe the spin-spin topology and interaction strengths within a layout. We will show an equivalence property for 
spin chains based on an \emph{effective coupling strength}, 
which will, in turn, be used to show four transformations 
that could potentially be used as steps in a subdivision-embedding algorithm. Most of the following definitions are summarised 
as examples in Figures~\ref{fig:spin-chain-examples} and~\ref{fig:spin-chain-combined-example}.

\begin{definition}[Trimmed Path -- Figure~\ref{fig:trimmed-path}]
  Let~$P$ be a non-trivial path or cycle with leaf vertices~$v_1$ and~$v_2$ or some vertex~$v_1=v_2$ respectively. We define the trimmed path of~$P$ to be the path remaining after removing $v_1$ and $v_2$, or more precisely, the induced subgraph of~$P$ defined on the vertex set~$V(P) - \{v_1, v_2\}$, denoted~$T$.
\end{definition}

\begin{definition}[Ising Graph]
  An Ising graph is a graph representation of an Ising spin glass, where vertices represent spins and edges represent interactions. Local field strengths are represented by vertex weights and coupling strengths are represented by edge weights.
\end{definition}

\begin{definition}[Outer Graph -- Figure~\ref{fig:outer-lattice}]
  Let $P$ be a non-trivial path or cycle that is a subgraph of an Ising graph $I$ and let~$T$ be the trimmed path of~$P$, then we define the \emph{outer graph} of~$P$ with respect to~$I$ to be the graph that remains after removing all edges and trimmed path vertices of $P$, or more precisely, the subgraph of~$I$ defined on the vertex set~$V(I) - V(T)$ and edge set~$E(I) - E(P)$, denoted~$O$.
\end{definition}

\begin{definition}[Spin Chain -- Figure~\ref{fig:spin-chain}]
  Let $Q$ be a non-trivial path or cycle that is an induced subgraph of an Ising graph $I$. The path~$Q$ is called a \emph{spin chain} if and only if all the vertices within its trimmed path have precisely degree two with respect to $I$.\label{def:spin-chain}
\end{definition}

\begin{figure}[h!]
  \centering
  \subcaptionbox{Spin chain\label{fig:spin-chain}}[0.32\textwidth]{\includegraphics[width=0.32\textwidth]{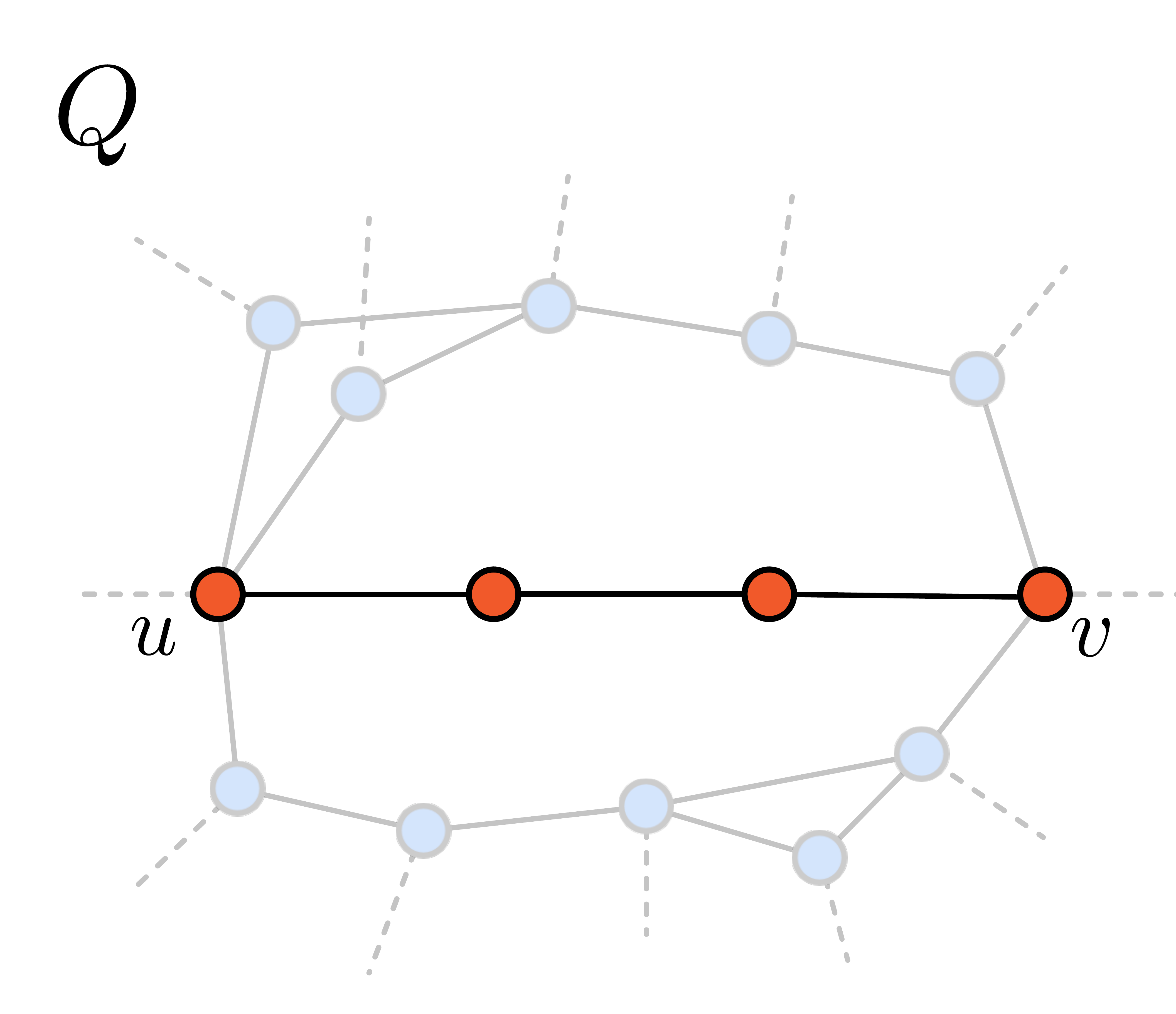}}    
  \subcaptionbox{Trimmed path\label{fig:trimmed-path}}[0.32\textwidth]{\includegraphics[width=0.32\textwidth]{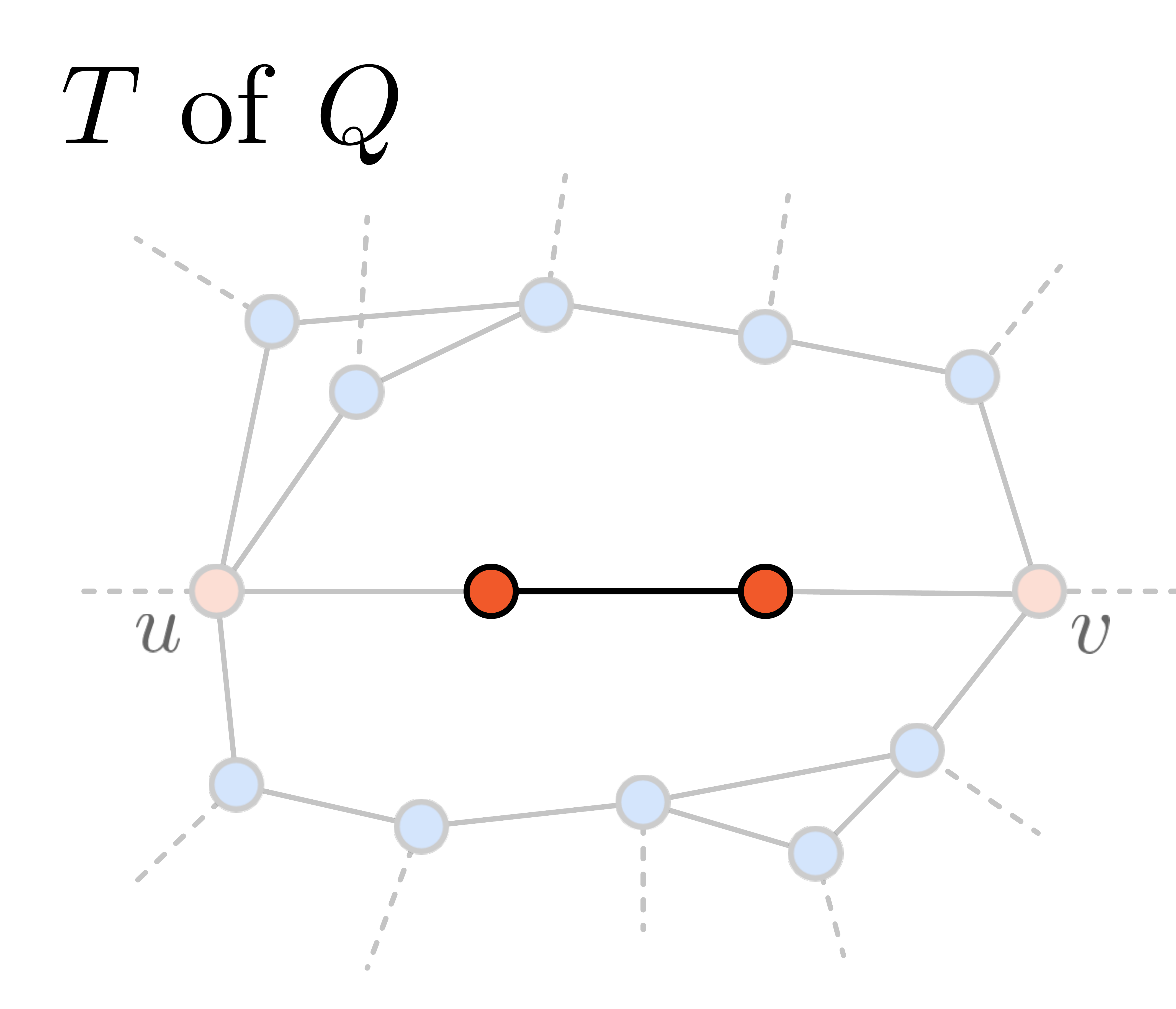}}
  \subcaptionbox{Outer lattice\label{fig:outer-lattice}}[0.32\textwidth]{\includegraphics[width=0.32\textwidth]{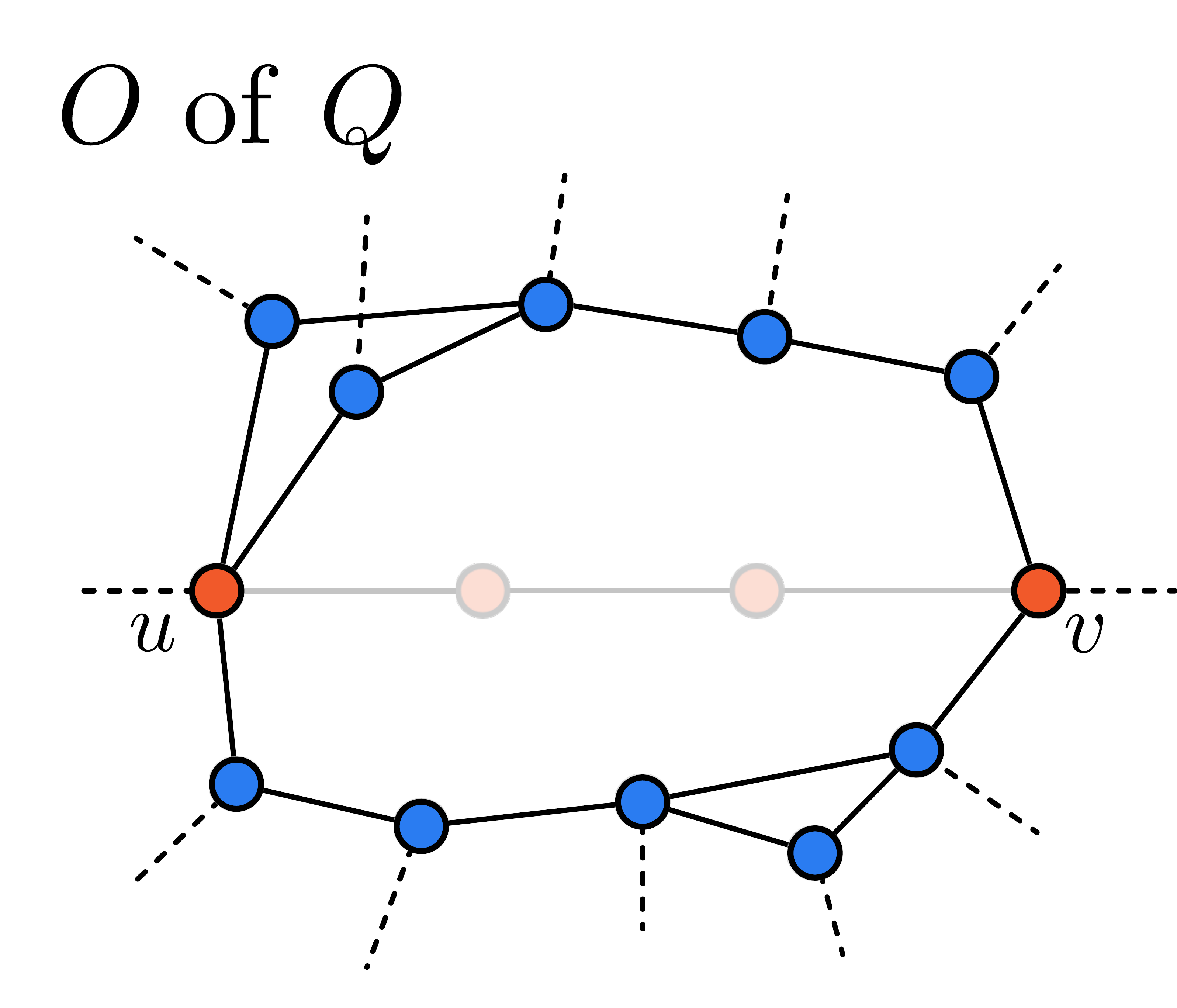}}
  \par\bigskip
  \subcaptionbox{Inner spin states\label{fig:inner-states}}[0.32\textwidth]{\includegraphics[width=0.32\textwidth]{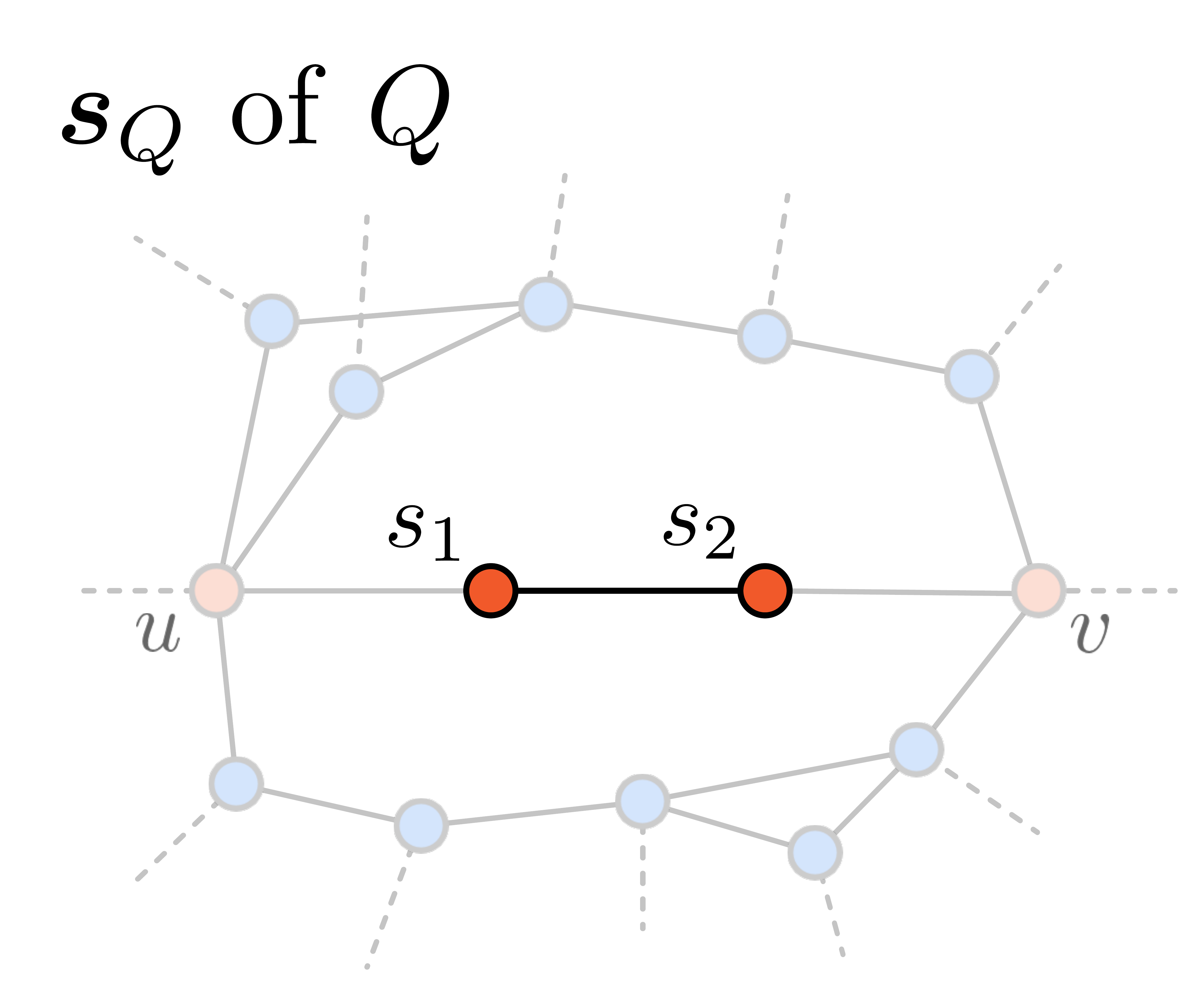}}
  \subcaptionbox{Outer spin states\label{fig:outer-states}}[0.32\textwidth]{\includegraphics[width=0.32\textwidth]{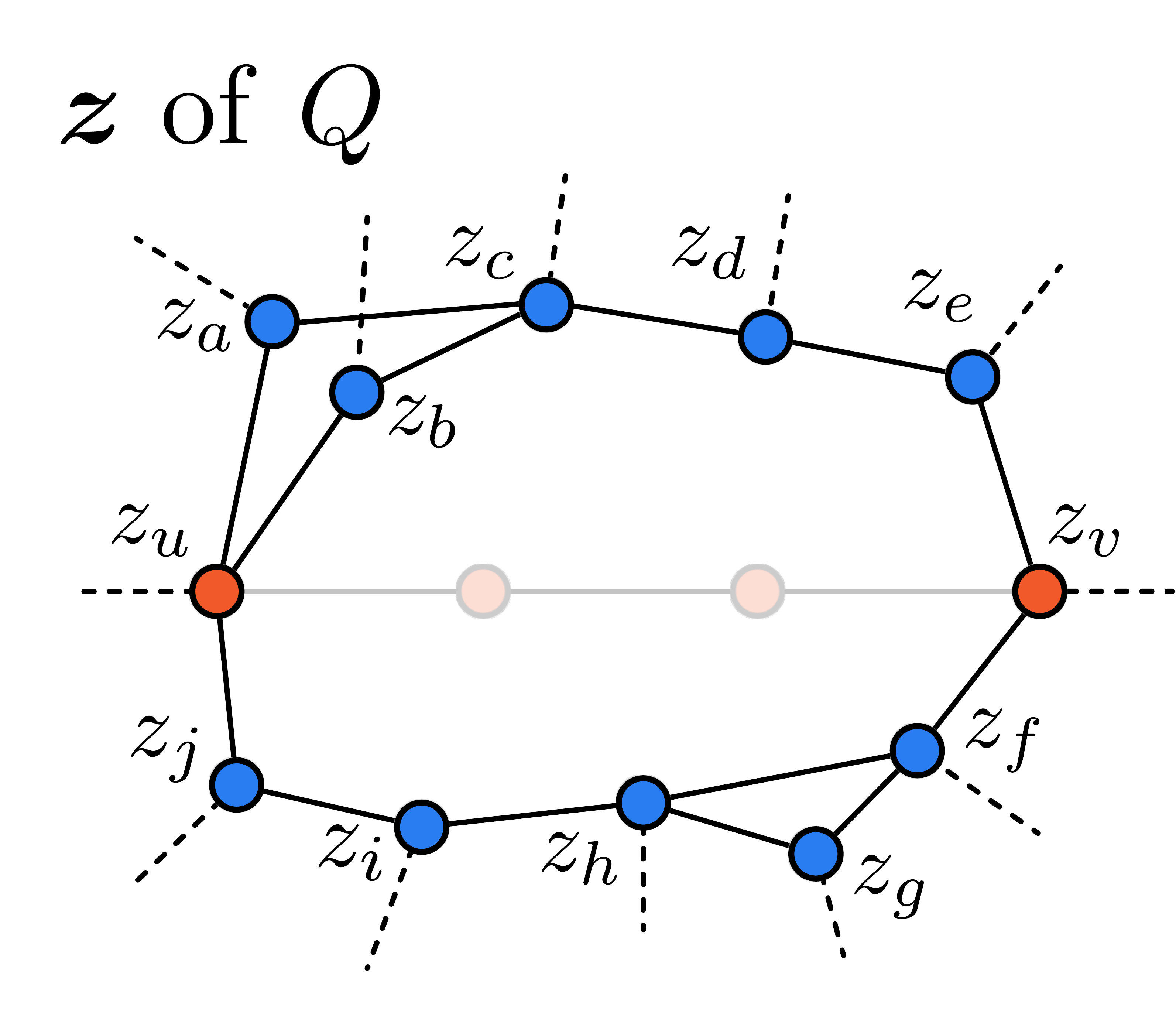}}
  \begin{minipage}[a]{0.32\textwidth}
    \centering
    \vspace{-1.70cm}
    \subcaptionbox{Satisfied\label{fig:satisfied}}[1\textwidth]{\includegraphics[scale=0.13]{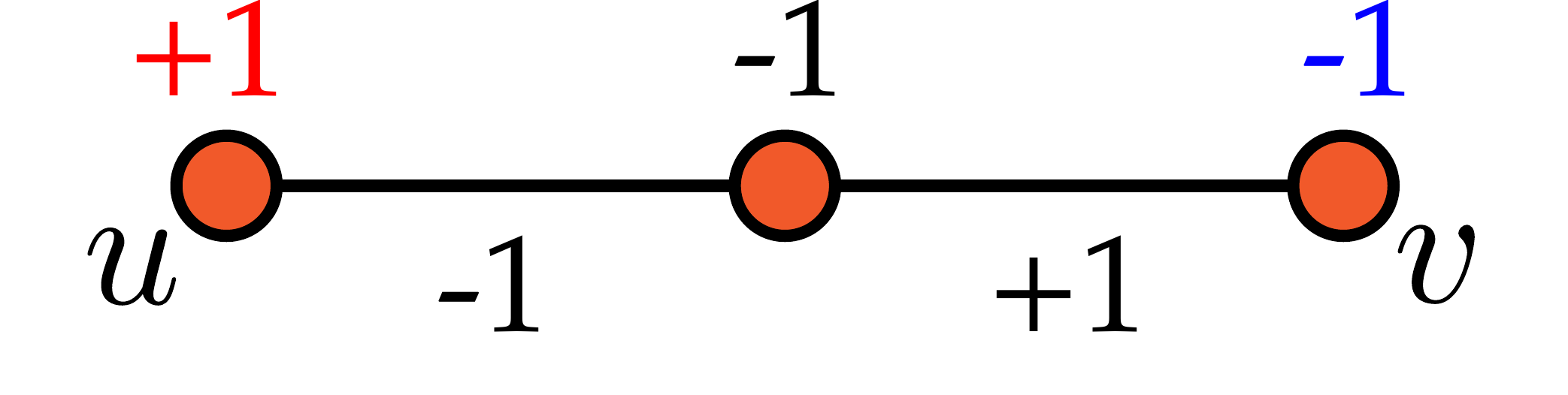}}
    \par\bigskip
    \subcaptionbox{Unsatisfied\label{fig:unsatisfied}}[1\textwidth]{\includegraphics[scale=0.13]{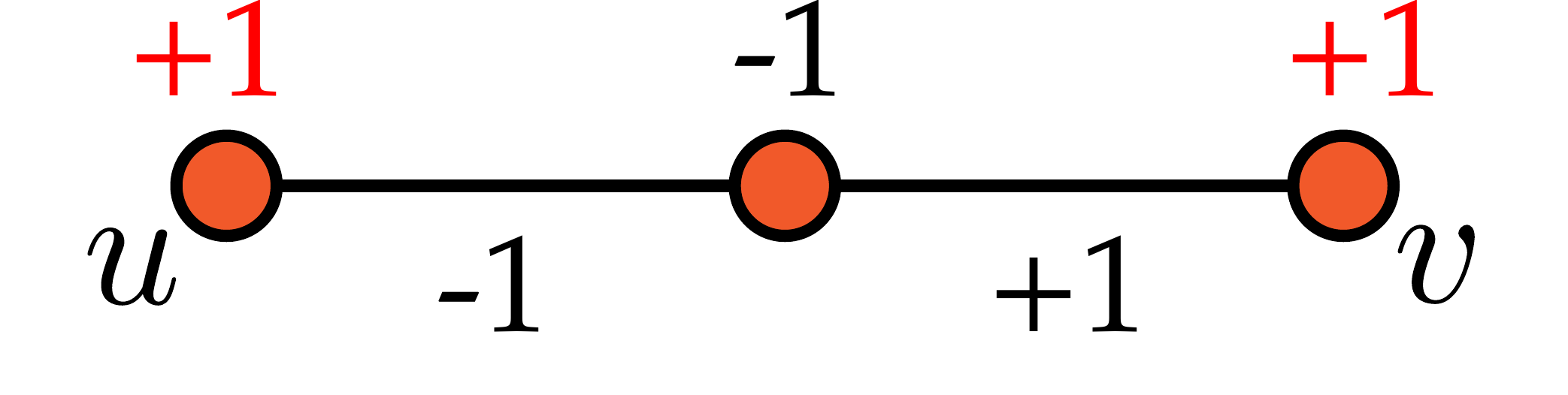}}
  \end{minipage}
  \caption{Examples of spin chain related definitions. \textbf{(a)}~A spin chain between vertices~$u$ and~$v$ shown in red. \textbf{(b)}~A trimmed path consisting of all non-leaf vertices of the path. \textbf{(c)}~An outer lattice consisting of all vertices in the lattice except for vertices of the trimmed path. \textbf{(d)}~The inner spin states are the spin states of the trimmed path. \textbf{(e)}~The outer spin states are the spin states of the outer lattice. \textbf{(f)}~An example of a satisfied spin chain. \textbf{(g)}~An example of an unsatisfied spin chain.}
    \label{fig:spin-chain-examples}
\end{figure}

\newpage

For the following definitions, let $Q$ be a spin chain of an Ising graph~$I$.

\begin{definition}[Inner and Outer Spin States -- Figures~\ref{fig:inner-states} and~\ref{fig:outer-states}]
  Let~$T$ and~$O$ be the trimmed path and outer graph of spin chain~$Q$ respectively. The spin states of~$T$ and~$O$ are called the \emph{inner and outer spin states} of~$Q$ respectively, denoted by~$\boldsymbol{s}_{Q}$ and~$\boldsymbol{z}$. Note that the spin states $\mathbf{Z}$ over the full Ising graph $I$ are equal to the outer spin states combined with the inner spin states, i.e. $\boldsymbol{Z} = \boldsymbol{z} \cup \boldsymbol{s}_{Q}$.
\end{definition}

\begin{definition}[Spin Chain Hamiltonian]
  A \emph{spin chain Hamiltonian} of a spin chain~$Q$, denoted~$H_{Q}$, is defined as its Ising Hamiltonian. 

In the case of a~ZZ~Ising Hamiltonian (i.e. with only 2-local interaction terms), the spin chain Hamiltonian of $Q$ can be written as
\begin{equation}
  H_{Q}(z_u, z_v, \boldsymbol{s}_{Q}) = -J_{u1}z_u s_1 - J_{12}s_1 s_{2} - \ldots - J_{(k-1)v}s_{k-1}z_v,
\end{equation}
where $u$ and $v$ are the leaf vertices,~$J_{ij}$ are the edge weights,~$s_i \in \boldsymbol{s}_{Q}$ are the inner spin states and $k$ is the number of edges in $Q$. 
\end{definition}

\begin{definition}[Satisfied/Unsatisfied -- Figures~\ref{fig:satisfied} and~\ref{fig:unsatisfied}]
  We say that a spin chain $Q$ is \emph{satisfied} when the spin states of the leaf vertices~$z_u$ and~$z_v$ are chosen such that inner spin states can be assigned to reach a ground state of the Ising Hamiltonian~$H_{Q}$ on~$Q$. That is,
  \begin{equation}
   (z_u, z_v) = \underset{(z^\prime_u, z^\prime_v)}{\text{argmin}} \left[ \min \limits_{\boldsymbol{s}_{Q}} [H_{Q}(z^\prime_u, z^\prime_v, \boldsymbol{s}_{Q})] \right].
  \end{equation}
The state of $Q$ is \emph{unsatisfied} or \emph{broken} otherwise. The act of choosing~$z_u, z_v \in \{-1,1\}$ such that~$Q$ is broken is called \emph{breaking}~$Q$.
The ground state energy of~$H_{Q}$ is defined to be
\begin{equation}
	E_{\text{Q}}^{\text{gs}} := \min \limits_{(z^\prime_u, z^\prime_v, \boldsymbol{s}_{Q})} [H_{Q}(z^\prime_u, z^\prime_v, \boldsymbol{s}_{Q})].
\end{equation}
\end{definition}\noindent

The goal for this section is to show how spin chains can be modified such that their outer spin states in the ground state remain unchanged. We will additionally show that these modifications do not change the energy gap between the ground and first excited states. This energy gap is called the \emph{breaking penalty}.

\begin{definition}[Breaking Penalty]
  Let $E_{\text{Q}}^{\text{gs}}$ be the ground state energy of spin chain Hamiltonian $H_{Q}$. The \emph{breaking penalty} of the corresponding spin chain $Q$, denoted $\text{BP}_{Q}$, is the smallest difference between $E_{\text{Q}}^{\text{gs}}$ and the minimum Ising Hamiltonian value among broken states of $Q$. That is,
  \begin{equation}
    \text{BP}_{Q} = \min \limits_{(z_u, z_v, \boldsymbol{s}_{Q})} [H_{Q}(z_u, z_v, \boldsymbol{s}_{Q})] - E_{\text{Q}}^{\text{gs}},
  \end{equation}
where $z_u$ and $z_v$ range over the spin states from unsatisfied states of $Q$.
\end{definition}

\begin{definition}[Equivalent]
  Let~$Q$ be a spin chain of an Ising graph~$I$ and~$Q^{\prime}$ be a spin chain of an Ising graph~$I^\prime$ such that their outer graphs are isomorphic, where the leaf vertices of~$Q$ are the image of the leaf vertices of~$Q^\prime$ under isomorphism.\\
  The two chains are \emph{equivalent} (denoted $Q \leftrightarrow Q^\prime$), if and only if both of the following conditions are satisfied:
  \begin{itemize}
  \item \textbf{Condition 1}: If $\boldsymbol{z}^\text{gs}$ and $\boldsymbol{z}^{\prime \text{gs}}$ are sets of distinct spin state configurations of outer graph $O$ in (possibly degenerate) ground states of $I$ and $I^\prime$ respectively, then 
    \begin{equation}
      \boldsymbol{z}^\text{gs} = \boldsymbol{z}^{\prime \text{gs}}.
    \end{equation}
  \item \textbf{Condition 2}: The breaking penalties for $Q$ and $Q^\prime$ are equal.
  \end{itemize}\label{def:equivalent}
\end{definition}

\begin{figure}[h!]
  \centering
  \includegraphics[scale=0.13]{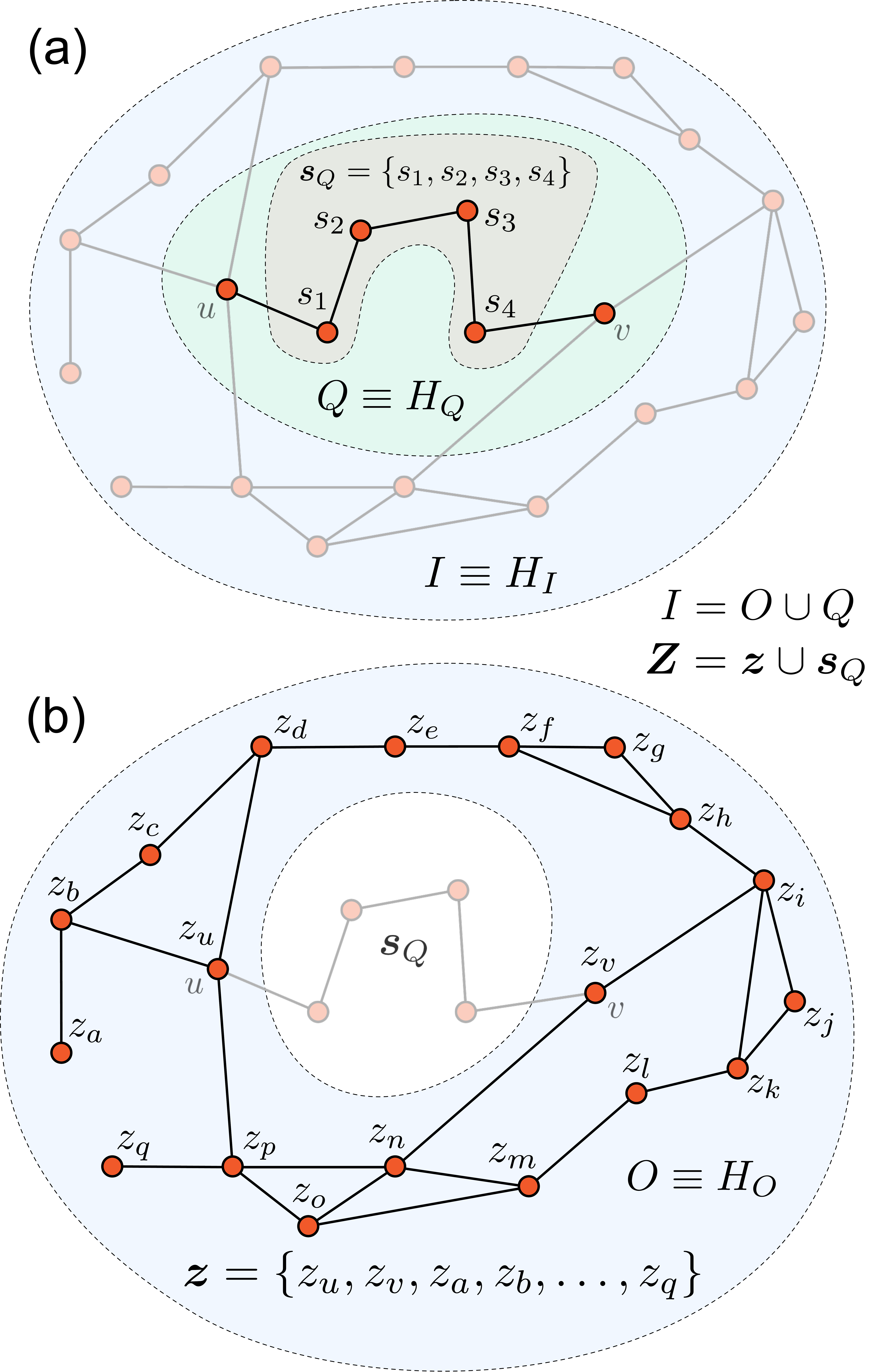}
  \caption{An example spin chain and outer graph, where ``$\equiv$" indicates equivalence between graph and Hamiltonian representations. Note that the full Ising graph is $I = O \cup Q$ and the spin states over $I$ is $\mathbf{Z} = \mathbf{z} \cup \mathbf{s}_Q$. \textbf{(a)}~A spin chain $Q$ with inner spin states $\mathbf{s}_Q$ of an Ising graph $I$. \textbf{(b)}~An outer graph $O$ of $Q$ with outer spin states $\mathbf{z}$.}
  \label{fig:spin-chain-combined-example}
\end{figure}

\begin{lemma} \label{theorem:qubit-chain-equivalence}
	Let~$H_Q$ and $H_{Q^\prime}$ be the spin chain Hamiltonians of spin chains $Q$ and $Q^\prime$ with isomorphic outer graphs where leaf vertices~$u$ and~$v$ of~$Q$ are the image of the leaf vertices of $Q^\prime$ under isomorphism.

  If there exists a constant $c$ such that for all $z_{u}, z_{v} \in \{-1, 1\}$,

  \begin{equation}\label{eq:assumption-1}
    \min\limits_{\boldsymbol{s}_{Q}} \left[ H_{Q}(z_{u}, z_{v}, \boldsymbol{s}_{Q}) \right] = \min\limits_{\boldsymbol{s}_{Q^\prime}} \left[ H_{Q^\prime} (z_{u}, z_{v}, \boldsymbol{s}_{Q^\prime})\right] + c,
  \end{equation}
  then $Q$ and $Q^{\prime}$ are equivalent as defined in Definition~\ref{def:equivalent}.
\end{lemma}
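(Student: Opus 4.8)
The plan is to exploit the fact that, because the interior (trimmed-path) vertices of a spin chain have degree two in the ambient Ising graph, the full Hamiltonian splits cleanly across the shared boundary spins $z_u$ and $z_v$. Writing $I = O \cup Q$ and $I' = O' \cup Q'$, every coupling term lies either in the outer graph or in the spin chain, so
\begin{equation}
H_I(\boldsymbol{z}, \boldsymbol{s}_Q) = H_O(\boldsymbol{z}) + H_Q(z_u, z_v, \boldsymbol{s}_Q),
\end{equation}
and likewise for $I'$. Since the outer graphs are isomorphic with the leaf vertices matched under the isomorphism, $H_O$ and $H_{O'}$ are the same function of the outer spin states $\boldsymbol{z}$, and in particular the leaf spins $z_u, z_v$ shared by the two pieces are identified. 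First I would minimize over the inner spins alone, defining the reduced boundary functions $g(z_u, z_v) := \min_{\boldsymbol{s}_Q} H_Q(z_u, z_v, \boldsymbol{s}_Q)$ and $g'(z_u, z_v) := \min_{\boldsymbol{s}_{Q'}} H_{Q'}(z_u, z_v, \boldsymbol{s}_{Q'})$. The hypothesis~\eqref{eq:assumption-1} is precisely the statement that $g(z_u, z_v) = g'(z_u, z_v) + c$ for all four boundary assignments.

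For \textbf{Condition 1}, I would note that a fixed outer configuration $\boldsymbol{z}$ extends to a ground state of $I$ exactly when $H_O(\boldsymbol{z}) + g(z_u, z_v) = E_I^{\text{gs}}$, so that $\boldsymbol{z}^{\text{gs}} = \arg\min_{\boldsymbol{z}}[H_O(\boldsymbol{z}) + g(z_u, z_v)]$, and analogously $\boldsymbol{z}^{\prime\text{gs}} = \arg\min_{\boldsymbol{z}}[H_O(\boldsymbol{z}) + g'(z_u, z_v)]$. Because the two objectives differ by the global constant $c$ at every $\boldsymbol{z}$, their argmin sets coincide exactly, giving $\boldsymbol{z}^{\text{gs}} = \boldsymbol{z}^{\prime\text{gs}}$.

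For \textbf{Condition 2}, I would observe that the satisfied/broken classification of a leaf assignment $(z_u, z_v)$ is defined entirely through $g$ (a pair is satisfied if and only if it minimizes $g$), and that the minimum Hamiltonian value over broken states for fixed $(z_u, z_v)$ is just $g(z_u, z_v)$. Hence
\begin{equation}
\text{BP}_Q = \min_{(z_u, z_v)\text{ broken}} g(z_u, z_v) - \min_{(z_u, z_v)} g(z_u, z_v),
\end{equation}
and the same identity with primes. Adding the constant $c$ does not move the set of minimizers of $g$, so the satisfied and broken leaf assignments are identical for $Q$ and $Q'$; and since $\text{BP}_Q$ is a difference of two values of $g$, the constant cancels, yielding $\text{BP}_Q = \text{BP}_{Q'}$.

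The only real obstacle is the bookkeeping that establishes the additive decomposition and the identification of $H_O$ with $H_{O'}$: one must check that no coupling is double-counted or dropped at the boundary (the leaf spins belong to both $O$ and $Q$, but the edges incident to the trimmed path lie entirely inside $Q$), and that the isomorphism of outer graphs allows the minimizations over $\boldsymbol{z}$ to be compared term by term. Once the problem is reduced to the single scalar identity $g = g' + c$ on the four boundary assignments, both conditions follow from the elementary fact that adding a constant to an objective preserves both its set of minimizers and all differences of its values.
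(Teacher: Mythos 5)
Your proposal is correct and follows essentially the same route as the paper: decompose $H_I = H_O + H_Q$, minimize over the inner spins first so that the hypothesis reduces everything to the four boundary values differing by the constant $c$, and then observe that argmin sets and differences of minima are unchanged by adding a constant. Your explicit remark that the satisfied/broken classification of leaf assignments is itself preserved under the shift by $c$ is a point the paper's proof uses only implicitly, but it does not change the argument.
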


\begin{proof}
	Let $O$ be an Ising graph that is isomorphic to the outer graphs of~$Q$ and~$Q^\prime$ with Ising Hamiltonian~$H_O$. Let~$H_I$ and~$H_{I^\prime}$ be the~ZZ Ising Hamiltonians of~$I := O \cup Q$ and~$I^\prime := O \cup Q^\prime$ respectively. 
  We can write
  \begin{align*}
    (\boldsymbol{z}^\text{gs}, \boldsymbol{s}^\text{gs}_{Q}) &:= \underset{(\boldsymbol{z}, \boldsymbol{s}_{Q})}{\mathrm{argmin}} [H_I(\boldsymbol{z}, \boldsymbol{s}_{Q})], \text{ and}\\
    (\boldsymbol{z}^{\prime \text{gs}}, \boldsymbol{s}^{\text{gs}}_{Q^\prime}) &:= \underset{(\boldsymbol{z}, \boldsymbol{s}_{Q^\prime})}{\mathrm{argmin}}[H_{I^\prime}(\boldsymbol{z}, \boldsymbol{s}_{Q^\prime})],
  \end{align*}
  where~$\boldsymbol{s}_{Q}$ and~$\boldsymbol{s}_{Q^\prime}$ are the inner spin states of~$Q$ and~$Q^\prime$, and~$\boldsymbol{z}$ and~$\boldsymbol{z}^\prime$ are their respective outer spin states.
  
  Assume that there exists a constant $c$ such that for all~$z_u, z_v \in \{-1, 1\}$, 
\begin{equation}
\min\limits_{\boldsymbol{s}_{Q}} \left[ H_{Q}(z_u, z_v, \boldsymbol{s}_{Q}) \right] = \min\limits_{\boldsymbol{s}_{Q^\prime}} \left[ H_{Q^\prime} (z_u, z_v, \boldsymbol{s}_{Q^\prime})\right] + c.\label{eq-lemma-assumption}
\end{equation}  
  To Show:
  \begin{align*}
    \text{\textbf{Condition 1:}} &~\boldsymbol{z}^\text{gs} = \boldsymbol{z}^{\prime \text{gs}}.\\
    \text{\textbf{Condition 2:}} &\text{ If BP}_{Q}\text{ is the breaking penalty of }Q\text{ and}~\text{BP}_{Q^\prime}\text{ is}\\
                                 &\text{ the breaking penalty of}~Q^\prime \text{ then}~\text{BP}_{Q} = \text{BP}_{Q^\prime}.
  \end{align*}
  \textbf{Condition 1}: We observe that $H_I = H_O + H_Q$ and $H_{I^\prime} = H_O + H_{Q^\prime}$, where $H_O$ is the ZZ Ising Hamiltonian of the outer graph $O$. Thus by assumption Equation~\ref{eq-lemma-assumption}
  \begin{align*}
	\min \limits_{\boldsymbol{s}_{Q}} \left[ H_I(\boldsymbol{z}, \boldsymbol{s}_{Q}) \right] &= \min \limits_{\boldsymbol{s}_{Q}} \left[ H_O(\boldsymbol{z}) + H_Q(z_u, z_v, \boldsymbol{s}_{Q}) \right] \\
	&= H_O(\boldsymbol{z}) + \min \limits_{\boldsymbol{s}_{Q}} \left[H_Q(z_u, z_v, \boldsymbol{s}_{Q}) \right]\\
	&= H_O(\boldsymbol{z}) + \min \limits_{\boldsymbol{s}_{Q^\prime}} \left[H_{Q^\prime}(z_u, z_v, \boldsymbol{s}_{Q^\prime}) + c\right]\\
	&= \min \limits_{\boldsymbol{s}_{Q^\prime}} \left[ H_{I^\prime}(\boldsymbol{z}, \boldsymbol{s}_{Q^\prime}) \right] + c
  \end{align*}
Thus,
    \begin{align*}
      \boldsymbol{z}^\text{gs} &= \underset{\boldsymbol{z}}{\mathrm{argmin}} \left[ \min \limits_{\boldsymbol{s}_{Q}} \left[ H_I(\boldsymbol{z}, \boldsymbol{s}_{Q}) \right] \right] = \underset{\boldsymbol{z}}{\mathrm{argmin}} \left[ \min \limits_{\boldsymbol{s}_{Q^\prime}} \left[ H_{I^\prime}(\boldsymbol{z}, \boldsymbol{s}_{Q^\prime}) \right] + c \right]\\
                                &= \boldsymbol{z}^{\prime \text{gs}}.
    \end{align*}
  \\
  \textbf{Condition 2}: Let $(z_u, z_v)$ range over $\{-1, 1\}^2$ and let $(z^*_u, z^*_v)$ range over the spin states from unsatisfied states of~$Q$,
    \begin{align*}
      \text{BP}_{Q} &= \min \limits_{(z^*_u, z^*_v, \boldsymbol{s}_{Q})} [H_{Q}(z^*_u, z^*_v, \boldsymbol{s}_{Q})] - E_{\text{Q}}^{\text{gs}}\\
      &= \min \limits_{(z^*_u, z^*_v, \boldsymbol{s}_{Q})} [H_{Q}(z^*_u, z^*_v, \boldsymbol{s}_{Q})] - \min \limits_{(z_u, z_v, \boldsymbol{s}_{Q})}[H_{Q}(z_u, z_v, \boldsymbol{s}_{Q})] \\
                     &= \min \limits_{(z^*_u, z^*_v, \boldsymbol{s}_{Q^\prime})} [H_{Q^\prime}(z^*_u, z^*_v, \boldsymbol{s}_{Q^\prime})] + c - \min \limits_{(z_u, z_v, \boldsymbol{s}_{Q^\prime})} [H_{Q^\prime}(z_u, z_v, \boldsymbol{s}_{Q^\prime})] - c \\
                     &= \min \limits_{(z^*_u, z^*_v, \boldsymbol{s}_{Q^\prime})} [H_{Q^\prime}(z^*_u, z^*_v, \boldsymbol{s}_{Q^\prime})] + c - E^{\text{gs}}_{\text{Q}^\prime} - c \\
                     &= \text{BP}_{Q^\prime},
    \end{align*}
where $E_{\text{Q}}^{\text{gs}}$ and $E^{\text{gs}}_{\text{Q}^\prime}$ are the ground state energies of $H_{Q}$ and $H_{Q^\prime}$.\\

Therefore the spin chains $Q$ and $Q^\prime$ are equivalent.
\end{proof}

The following theorem describes how any spin chain $Q$ without local fields is equivalent to an effective spin chain consisting of a single edge with coupling strength equal to the spin chain $Q$'s effective coupling strength as defined in~Definition~\ref{def:effective-coupling-strength}. A consequence of this is that any two spin chains are equivalent when they share the same effective coupling strength.

\begin{theorem}[Equivalence with Effective Spin Chain]
  Let $Q$ be a spin chain with leaf vertices $u$ and $v$, $k$ couplings, $p \leq k$ of which are negative, so that its corresponding Hamiltonian is
  \begin{align}
    H_{Q}(z_u,z_v,\mathbf{s}_{Q}) = -J_{u1}z_us_1 - J_{12}s_1s_2 - \dots - J_{(k-1)v} s_{k-1}z_v. \label{eq:thm1hamil}
  \end{align}
Then $Q$ is equivalent to an effective spin chain $Q_\mathrm{eff}$ defined by the Ising Hamiltonian 
\begin{equation}
H_{Q_\mathrm{eff}}(z_u, z_v) = -J^\mathrm{eff}_{Q} z_u z_v,
\end{equation}
where $J^\mathrm{eff}_{Q}$ is the effective coupling strength of $Q$ as defined in Definition~\ref{def:effective-coupling-strength}.

\end{theorem}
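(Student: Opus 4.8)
The plan is to reduce everything to Lemma~\ref{theorem:qubit-chain-equivalence}. Since the effective chain $Q_{\mathrm{eff}}$ has no inner spins, its partially minimised energy is just $H_{Q_{\mathrm{eff}}}(z_u,z_v) = -J_Q^{\mathrm{eff}} z_u z_v$ itself. Hence it suffices to exhibit a constant $c$, independent of the leaf states, with
\[
  \min_{\mathbf{s}_Q} H_Q(z_u, z_v, \mathbf{s}_Q) = -J_Q^{\mathrm{eff}} z_u z_v + c
  \quad\text{for all } z_u, z_v \in \{-1,1\},
\]
after which the hypothesis of the lemma holds with $Q' = Q_{\mathrm{eff}}$ and the conclusion $Q \leftrightarrow Q_{\mathrm{eff}}$ follows immediately. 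Producing this closed form is the entire content of the theorem.

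First I would reparametrise the chain. Writing the couplings along $Q$ as $J_1, \dots, J_k$ and setting $s_0 := z_u$, $s_k := z_v$, the Hamiltonian in Eq.~\ref{eq:thm1hamil} is $H_Q = -\sum_{i=1}^k J_i\, s_{i-1} s_i$. I introduce the bond variables $\sigma_i := s_{i-1} s_i \in \{-1,1\}$. Because $Q$ is a path (no cycles), the interior spins $s_1, \dots, s_{k-1}$ are in bijection with bond assignments subject only to the telescoping constraint $\prod_{i=1}^k \sigma_i = s_0 s_k = z_u z_v$, since each interior spin appears in exactly two bonds and squares to one. Minimising over inner spins therefore becomes the constrained bond minimisation $\min\{-\sum_i J_i \sigma_i : \prod_i \sigma_i = z_u z_v\}$.

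The crux is solving this constrained problem by a parity argument. The unconstrained minimiser sets $\sigma_i = \operatorname{sign}(J_i)$, achieving $-\sum_i |J_i|$ with bond product $\prod_i \operatorname{sign}(J_i) = (-1)^p$, where $p$ is the number of negative couplings. If $z_u z_v = (-1)^p$ the constraint is already satisfied and the minimum is $-\sum_i|J_i|$. Otherwise the product has the wrong sign, so an odd number of bonds must be flipped relative to the unconstrained optimum; each flip of $\sigma_j$ raises the energy by $2|J_j|$, so the cheapest repair flips a single bond of smallest magnitude, giving $-\sum_i|J_i| + 2\min_i|J_i|$. Writing $S := \sum_i|J_i|$ and $m := \min_i|J_i|$, the two cases combine into the single expression
\[
  \min_{\mathbf{s}_Q} H_Q(z_u, z_v, \mathbf{s}_Q) = (m - S) - m\,(-1)^p z_u z_v ,
\]
which one verifies by substituting $z_u z_v = \pm(-1)^p$ and using $((-1)^p)^2 = 1$.

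Finally, since $J_Q^{\mathrm{eff}} = (-1)^p m$ by Definition~\ref{def:effective-coupling-strength}, the above reads $\min_{\mathbf{s}_Q} H_Q = -J_Q^{\mathrm{eff}} z_u z_v + c$ with $c = m - S$ constant in $z_u, z_v$, which is exactly the required form; Lemma~\ref{theorem:qubit-chain-equivalence} then delivers the equivalence. I expect the main obstacle to be rigorously justifying the single-flip step of the parity argument -- namely that flipping exactly one minimal-magnitude bond is optimal when the target parity is wrong -- together with confirming that the correspondence between interior spin configurations and constraint-respecting bond configurations is a genuine bijection in both directions once the leaves $s_0, s_k$ are fixed.
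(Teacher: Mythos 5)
Your proposal is correct and follows the same overall strategy as the paper: both reduce the theorem to Lemma~\ref{theorem:qubit-chain-equivalence} by exhibiting the closed form $\min_{\mathbf{s}_Q} H_Q(z_u,z_v,\mathbf{s}_Q) = \bigl(\min_{ij}|J_{ij}| - \sum_{ij}|J_{ij}|\bigr) + H_{Q_\mathrm{eff}}(z_u,z_v)$, split into the two parity cases $z_uz_v = \pm(-1)^p$. The one place you genuinely diverge is the device used to compute the constrained minimum: the paper assigns inner spins recursively via $s_i = \mathrm{sign}(J_{(i-1)i})\,s_{i-1}$ and handles the broken case by a contrapositive argument (the satisfying assignment is unique, so mismatched leaf parity forces at least one unsatisfied edge, and the cheapest is the minimal-magnitude one), whereas you pass to bond variables $\sigma_i = s_{i-1}s_i$ with the telescoping constraint $\prod_i \sigma_i = z_uz_v$ and minimise over bonds directly. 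The gauge change makes the parity obstruction and the single-flip optimality completely explicit (an odd number of flips is required and each flip costs $2|J_j|\geq 0$), which is exactly the step you flagged as the main obstacle; it is already fully justified by your own argument, so no gap remains. The paper's version avoids introducing the bond bijection but leans slightly more on the reader to see why one minimal-magnitude break is optimal among all broken configurations.
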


\begin{proof}
  We aim to show that
  \begin{align}
    \min_{\mathbf{s}_Q} H_Q(z_u,z_v,\mathbf{s}_Q) = \left( \min_{ij}|J_{ij}| - \sum_{ij} |J_{ij}|\right)+H_{\mathrm{eff}}(z_u,z_v)
  \end{align}
  which by Lemma \ref{theorem:qubit-chain-equivalence} would imply that $Q$ and $Q_{\mathrm{eff}}$ are equivalent spin chains. Note that $z_uz_v \in \lbrace -1,+1\rbrace$, so an expression for the minimum energy must be found for both cases. For convenience, define $s_0 := z_u$ and $s_k := z_v$, so that Eq.~\ref{eq:thm1hamil} may be written as
  \begin{align}
    H_Q(z_u, z_v, \mathbf{s}_Q) = -\sum_{i=1}^k J_{(i-1)i}s_{i-1}s_i
  \end{align}

  \begin{itemize}
  \item \textbf{Case 1}: All edges satisfied. If we assign spins along the spin chain recursively as $s_i = \text{sign}(J_{(i-1)i})s_{i-1}$ for $1 \leq i \leq k$, leaving $s_0=z_u$ undetermined, then all edges along the spin chain are satisfied, since
    \begin{align}
      -J_{(i-1)i} s_{i-1}s_i = -J_{(i-1)i} \text{sign}(J_{(i-1)i})(s_{i-1})^2 =  - |J_{(i-1)i}|.
    \end{align}
    So the spin chain attains its global minimum energy
    \begin{align}
      E_Q^{\mathrm{gs}} = -\sum_{ij} |J_{ij}|,
    \end{align}
    and the relation between $z_v$ and $z_u$ is,
    \begin{align}
      z_v = \left(\prod_{ij}\text{sign}(J_{ij})\right)z_u = (-1)^p z_u. \label{eq:parity}
    \end{align}

    \
    
  \item \textbf{Case 2:} Smallest magnitude edge unsatisfied. If we start from the previous spin assignment but break a single edge along the spin chain by picking $i'$ such that $|J_{(i'-1)i'}| = \min_{ij} |J_{ij}|$ and setting
    \begin{align}
      s_i &= \text{sign}(J_{(i-1)i})s_{i-1},\ \ \ \ \ \ \ \ \ i \neq i' \\
      s_{i'} &= -\text{sign}(J_{(i'-1)i'})s_{i'-1},
    \end{align}
    then all edges are still satisfied except for $J_{(i'-1)i'}$, which will contribute $(+\min_{ij}|J_{ij}|)$ to the energy. With this assignment, the energy of the spin chain is
    \begin{align}
      E = E_{Q}^{\mathrm{gs}} + 2\min_{ij}|J_{ij}|, \label{eq:minunsat}
    \end{align}
    and the relation between $z_v$ and $z_u$ is $z_v = -(-1)^p z_u$. Note that by construction, the assignment in Case 1: $s_i = \text{sign}(J_{(i-1)i})s_{i-1}\ ,\ 1 \leq i \leq k$ is the {\em only} spin assignment that can simultaneously satisfy every edge, and it implies Equation~\ref{eq:parity}. Thus, by the contrapositive, if $z_v \neq (-1)^p z_u$, then the spin assignment must break at least one edge, and since we have chosen the edge with the minimum coupling strength, $E_Q^{\mathrm{gs}} + 2\min_{ij}|J_{ij}|$ must be the minimum energy attainable with $z_v = -(-1)^p z_u$.
  \end{itemize}

  These two cases cover all possible assignments to $(z_u,z_v)$, and can be summarized as
  \begin{align}
    \min_{\mathbf{s}_Q} H_Q(z_u,z_v,\mathbf{s}_Q) &= \begin{cases} \displaystyle -\sum_{ij} |J_{ij}| & z_v = (-1)^p z_u \\ \displaystyle - \sum_{ij} |J_{ij}| + 2\min_{ij} |J_{ij}| & z_v = -(-1)^p z_u\end{cases} \\
      &= -\sum_{ij} |J_{ij}| + \min_{ij}|J_{ij}| - (-1)^p \min_{ij}|J_{ij}|z_uz_v \\
      &= \left(\min_{ij}|J_{ij}|-\sum_{ij} |J_{ij}|\right) + H_{\mathrm{eff}}(z_u,z_v)
  \end{align}
  So by Lemma \ref{theorem:qubit-chain-equivalence}, $Q$ and $Q_{\mathrm{eff}}$ are equivalent spin chains.
\end{proof}

This theorem allows one to verify that two spin chains are equivalent by showing that their effective coupling strengths are equal. In particular, it can be used to prove the following properties of spin chains.
{\setlength{\belowcaptionskip}{-15pt}
  \begin{figure}[h!]
    \centering
    \includegraphics[scale=0.60]{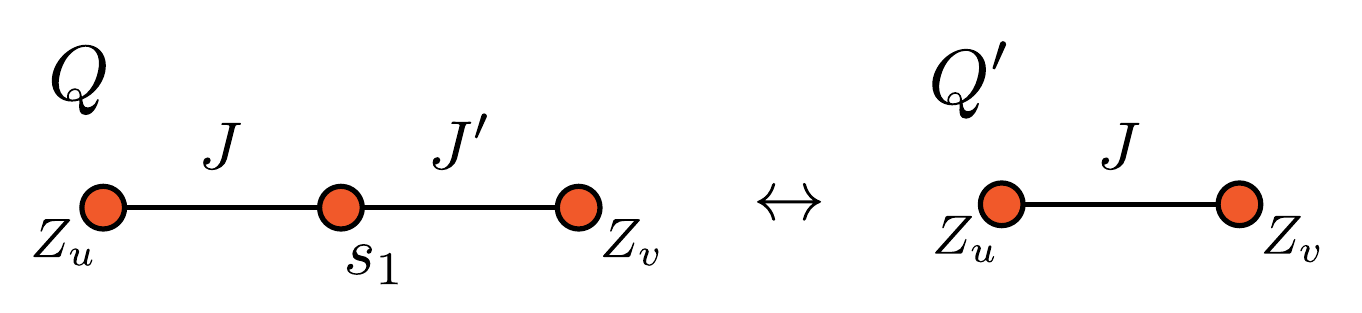}
    \caption{Spin chain identity property. A spin chain with a single coupling of strength $J$ and another with two couplings of strengths $J$ and $J^\prime$, where $J^\prime \geq |J|$, are equivalent under Definition~\ref{def:equivalent} (denoted ``$\leftrightarrow$'').}
    \label{fig:identity-edges}\par\medskip
  \end{figure}}\noindent
\begin{corollary}[Identity Property -- Figure~\ref{fig:identity-edges}] \label{theorem:chain-extension}
  Let~$Q$ and~$Q^\prime$ be spin chains with corresponding spin chain Hamiltonians
  \begin{align}
    H_{Q}(z_u, z_v, \boldsymbol{s}_{Q}) &= -J z_u s_1 - J^\prime  s_1 z_v, \text{ and}\\
    H_{Q^\prime}(z_u, z_v, \boldsymbol{s}_{Q^\prime}) &= -J z_u z_v,
  \end{align}
  where $J^\prime \geq |J|$. Then $Q$ and $Q^\prime$ are equivalent.
\end{corollary}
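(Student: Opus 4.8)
The plan is to invoke the preceding theorem (Equivalence with Effective Spin Chain), which reduces the claim to verifying that $Q$ and $Q^\prime$ possess the same effective coupling strength. Both chains are free of local fields, so Definition~\ref{def:effective-coupling-strength} applies directly, and once I show $J^\mathrm{eff}_{Q} = J^\mathrm{eff}_{Q^\prime}$ the equivalence $Q \leftrightarrow Q^\prime$ follows immediately from that theorem (which itself rests on Lemma~\ref{theorem:qubit-chain-equivalence}).

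First I would compute the effective coupling strength of the single-edge chain $Q^\prime$. Here $\min_{ij}|J_{ij}| = |J|$ trivially, and the parity $p$ of negative edges is $0$ when $J \geq 0$ and $1$ when $J < 0$; in either case $(-1)^p|J| = J$, so $J^\mathrm{eff}_{Q^\prime} = J$. Next I would compute the effective coupling strength of the two-edge chain $Q$, whose edges carry strengths $J$ and $J^\prime$. The crucial ingredient is the hypothesis $J^\prime \geq |J| \geq 0$: it forces $J^\prime \geq 0$, so $J^\prime$ contributes nothing to the parity $p$, and it guarantees $\min\{|J|, |J^\prime|\} = |J|$. Hence $p$ is again determined solely by the sign of $J$, and $J^\mathrm{eff}_{Q} = (-1)^p \min_{ij}|J_{ij}| = (-1)^p|J| = J$. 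Since $J^\mathrm{eff}_{Q} = J^\mathrm{eff}_{Q^\prime} = J$, the theorem yields the result.

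The main point requiring care is precisely the role played by the inequality $J^\prime \geq |J|$, which does double duty: it ensures $J^\prime$ is non-negative so that it does not flip the parity, and it ensures the minimum-magnitude edge of $Q$ is the one carrying strength $J$ rather than $J^\prime$. Were this hypothesis dropped, the minimum in Definition~\ref{def:effective-coupling-strength} could instead be attained at $|J^\prime|$, and the two effective coupling strengths would generally disagree, so the corollary would fail. Beyond tracking this inequality carefully through the two sign cases, there is no genuine obstacle: the heavy lifting has already been carried out by the Equivalence with Effective Spin Chain theorem, and this corollary is essentially an application of it together with a short sign-and-parity bookkeeping argument.
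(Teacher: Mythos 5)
Your proposal is correct and follows exactly the paper's route: the paper's proof of this corollary is the one-line observation that both $H_{Q}$ and $H_{Q^\prime}$ have effective coupling strength $J_{\mathrm{eff}} = J$, with equivalence then following from the Equivalence with Effective Spin Chain theorem. Your additional bookkeeping --- that $J^\prime \geq |J|$ forces $J^\prime \geq 0$ so it neither flips the parity nor attains the minimum --- is just a more explicit spelling-out of the same computation.
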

\begin{proof}
  For both $H_{Q}$ and $H_{Q^\prime}$, the effective coupling strength is $J_{\mathrm{eff}} = J$.
\end{proof}

  \begin{figure}[h!]
    \centering
    \includegraphics[scale=0.60]{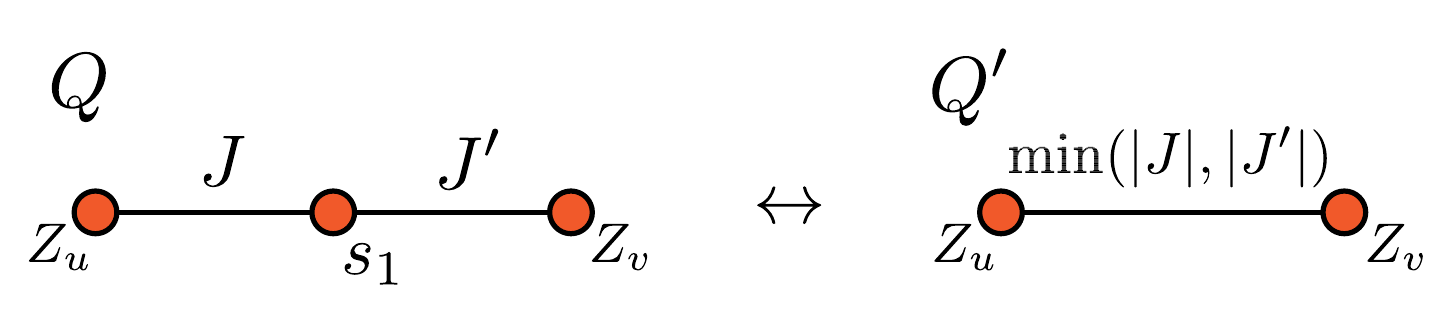}
    \caption{Spin chain inverse property. A spin chain with two couplings of strengths $J$ and $J^\prime$ and another with a single coupling of strength $\min{(|J|,|J^\prime|)}$, where $J$, $J^\prime \leq 0$, are equivalent under Definition~\ref{def:equivalent} (denoted ``$\leftrightarrow$'').}
    \label{fig:inverse-edges}
    \par\medskip
  \end{figure}\noindent
\begin{corollary}[Inverse Property -- Figure~\ref{fig:inverse-edges}] \label{theorem:inverse-edges}
  Let~$Q$ and~$Q^\prime$ be spin chains with corresponding spin chain Hamiltonians
  \begin{align}
    H_{Q}(z_u, z_v, \boldsymbol{s}_{Q}) &= -J z_u s_1 - J^\prime s_1 z_v, \text{ and}\\
    H_{Q^\prime}(z_u, z_v, \boldsymbol{s}_{Q^\prime}) &= -\min(|J|,|J^\prime|) z_u z_v,
  \end{align}
  where $J,J^\prime \leq 0$, then~$Q$ and~$Q^\prime$ are equivalent.
\end{corollary}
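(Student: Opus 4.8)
The plan is to invoke the preceding theorem (Equivalence with Effective Spin Chain), which shows that any spin chain without local fields is equivalent to a single effective edge whose coupling strength equals the effective coupling strength of Definition~\ref{def:effective-coupling-strength}. As noted immediately after that theorem, a direct consequence is that two spin chains are equivalent whenever they share the same effective coupling strength. So the whole task reduces to computing $J^{\mathrm{eff}}_{Q}$ and $J^{\mathrm{eff}}_{Q^\prime}$ and checking that they agree; no fresh energy-minimisation argument is needed, since that work is already packaged inside the theorem.

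First I would evaluate the effective coupling strength of $Q$. The chain $Q$ has exactly two couplings, $J$ and $J^\prime$, both negative by hypothesis, so the parity count is $p = 2$ and $\min_{ij}|J_{ij}| = \min(|J|, |J^\prime|)$. Hence $J^{\mathrm{eff}}_{Q} = (-1)^2 \min(|J|, |J^\prime|) = \min(|J|, |J^\prime|)$, a positive (ferromagnetic) value.

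Next I would evaluate the effective coupling strength of $Q^\prime$. Reading the Hamiltonian $H_{Q^\prime} = -\min(|J|,|J^\prime|) z_u z_v$ against the convention $H = -\sum_{ij} J_{ij} Z_i Z_j$, the single edge of $Q^\prime$ carries coupling strength $+\min(|J|, |J^\prime|)$, which is positive, so $p = 0$ and $J^{\mathrm{eff}}_{Q^\prime} = (-1)^0 \min(|J|, |J^\prime|) = \min(|J|, |J^\prime|)$. Therefore $J^{\mathrm{eff}}_{Q} = J^{\mathrm{eff}}_{Q^\prime}$, and the two chains are equivalent, completing the proof.

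The computation itself is immediate; the only thing to watch is the bookkeeping of the parity $p$ together with the overall minus sign in the Ising convention, so that the composite of two antiferromagnetic couplings is correctly read as a net ferromagnetic effective edge. This sign tracking --- two negatives contributing $(-1)^2 = +1$, matched by the single positive edge contributing $(-1)^0 = +1$ --- is the entire content of the \emph{inverse} label, and is the only place where a careless error could creep in.
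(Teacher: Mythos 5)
Your proposal is correct and follows exactly the paper's own route: the paper's proof of this corollary is the one-line observation that both chains have effective coupling strength $\min(|J|,|J^\prime|)$, which is precisely your parity computation ($p=2$ for the two antiferromagnetic edges of $Q$, $p=0$ for the single ferromagnetic edge of $Q^\prime$) combined with the preceding equivalence theorem. Nothing further is needed.
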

\begin{proof}
  For both $H_{Q}$ and $H_{Q^\prime}$, the effective coupling strength is $J_{\mathrm{eff}}= \min(|J|,|J^\prime|)$.
\end{proof}

  \begin{figure}[h!]
    \centering
    \includegraphics[scale=0.60]{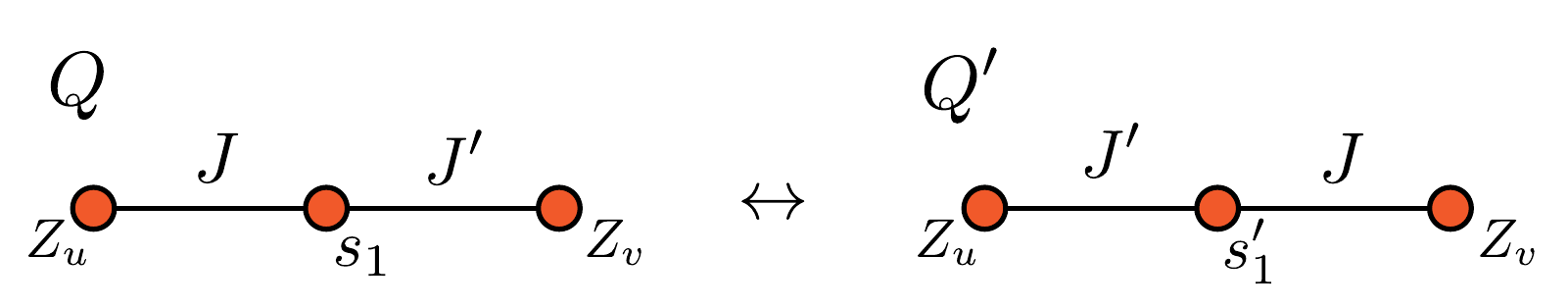}
    \caption{Spin chain commutativity property. A spin chain with two couplings of strengths $J$ and $J^\prime$ and another with two couplings of reversed strengths are equivalent under Definition~\ref{def:equivalent} (denoted ``$\leftrightarrow$'').}
    \label{fig:commuting-edges}\par\medskip
  \end{figure}\noindent
\begin{corollary}[Commutativity Property -- Figure~\ref{fig:commuting-edges}]
\label{theorem:commuting-edges}
  Let~$Q$ and~$Q^\prime$ be spin chains with corresponding spin chain Hamiltonians
  \begin{align}
    H_{Q}(z_u, z_v, \boldsymbol{s}_{Q}) &= -J z_u s_1 - J^\prime s_1 z_v, \text{ and}\\
    H_{Q^\prime}(z_u, z_v, \boldsymbol{s}_{Q^\prime}) &= -J^\prime z_u s^\prime_1 - J s^\prime_1 z_v,
  \end{align}
  then $Q$ and $Q^\prime$ are equivalent.
\end{corollary}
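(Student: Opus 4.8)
The plan is to mirror the proofs of the Identity and Inverse properties (Corollaries~\ref{theorem:chain-extension} and~\ref{theorem:inverse-edges}): rather than manipulating the Hamiltonians directly, I would reduce the claim to a single computation of the effective coupling strength of each chain and then appeal to the Equivalence with Effective Spin Chain theorem proved just above. That theorem guarantees that any local-field-free spin chain is equivalent to the single-edge effective chain carrying its effective coupling strength, so two chains sharing the same effective coupling strength are each equivalent to the same $Q_{\mathrm{eff}}$, and hence equivalent to one another in the sense of Definition~\ref{def:equivalent}.

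Concretely, I would apply Definition~\ref{def:effective-coupling-strength} to each chain. The chain $Q$ has edge weights $\{J, J'\}$, so $J_{Q}^{\mathrm{eff}} = (-1)^{p}\min(|J|,|J'|)$, where $p$ is the number of negative weights among $\{J,J'\}$. The chain $Q'$ has edge weights $\{J', J\}$ --- the same multiset, merely listed in the opposite order along the chain. Since both $\min_{ij}|J_{ij}|$ and the count $p$ of negative weights depend only on the multiset of edge weights and not on their ordering, I obtain $J_{Q'}^{\mathrm{eff}} = (-1)^{p}\min(|J'|,|J|) = J_{Q}^{\mathrm{eff}}$.

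The only point requiring (minor) care is precisely this observation that the effective coupling strength is a symmetric function of the edge weights: both the minimum magnitude $\min_{ij}|J_{ij}|$ and the parity $(-1)^{p}$ of the sign product are invariant under permuting the edges of the chain. Once that is noted, equality of the two effective coupling strengths is immediate and the Equivalence with Effective Spin Chain theorem closes the argument. There is no genuine obstacle here --- commutativity is an essentially immediate corollary, which is why, exactly as in the two preceding corollaries, the displayed proof need only record that $J_{Q}^{\mathrm{eff}} = J_{Q'}^{\mathrm{eff}} = (-1)^{p}\min(|J|,|J'|)$ and invoke the equivalence result.
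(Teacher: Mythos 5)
Your proposal is correct and matches the paper's proof, which likewise just records that both chains have effective coupling strength $\mathrm{sign}(J)\,\mathrm{sign}(J^\prime)\min(|J|,|J^\prime|)$ (identical to your $(-1)^{p}\min(|J|,|J^\prime|)$) and appeals to the Equivalence with Effective Spin Chain theorem. Your added remark that the effective coupling strength is a symmetric function of the edge weights is exactly the implicit observation the paper relies on.
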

\begin{proof}
  For both $H_{Q}$ and $H_{Q^\prime}$, the effective coupling strength is
  \begin{align}
    J_{\mathrm{eff}} = \text{sign}(J)\text{sign}(J^\prime)\min(|J|,|J^\prime|).
  \end{align}
\end{proof}

\begin{figure}[htb!]
    \centering
    \includegraphics[scale=0.60]{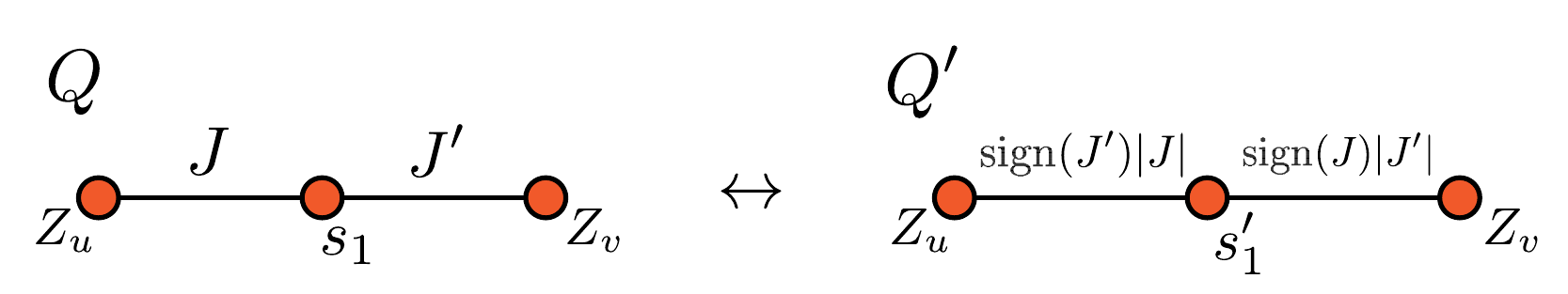}
    \caption{Spin chain parity commutativity property. A spin chain with two couplings of strengths $J$ and $J^\prime$ and another with couplings of strengths $\text{sign}(J^\prime)|J|$ and $\text{sign}(J)|J^\prime|$ are equivalent under Definition~\ref{def:equivalent} (denoted ``$\leftrightarrow$'').}
    \label{fig:commuting-parity-edges}\par\medskip
  \end{figure}
  
\begin{corollary}[Parity Commutativity Property -- Figure~\ref{fig:commuting-parity-edges}] \label{theorem:commuting-parity-edges}
  Let~$Q$ and~$Q^\prime$ be spin chains with corresponding spin chain Hamiltonians
  \begin{align}
    H_{Q}(z_u, z_v, \boldsymbol{s}_{Q}) &= -J z_u s_1 - J^\prime s_1 z_v, \text{ and}\\
    H_{Q^\prime}(z_u, z_v, \boldsymbol{s}_{Q^\prime}) &= - \mathrm{sign}(J^\prime)|J| z_u s^\prime_1 - \mathrm{sign}(J)|J^\prime| s^\prime_1 z_v,
  \end{align}
  then $Q$ and $Q^\prime$ are equivalent.
\end{corollary}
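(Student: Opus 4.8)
The plan is to follow exactly the template established for the three preceding corollaries: invoke the \emph{Equivalence with Effective Spin Chain} theorem, which guarantees that any spin chain with no local fields is equivalent to a single-edge effective spin chain whose coupling strength equals the chain's effective coupling strength $J^{\mathrm{eff}}_{Q}$. Because equivalence of spin chains (Definition~\ref{def:equivalent}) is transitive through this shared effective chain, it suffices to verify that $Q$ and $Q^\prime$ possess the \emph{same} effective coupling strength.

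First I would compute $J^{\mathrm{eff}}_{Q}$ using Definition~\ref{def:effective-coupling-strength}. The chain $Q$ carries the two couplings $J$ and $J^\prime$, so the minimum-magnitude edge contributes $\min(|J|,|J^\prime|)$, while the sign factor $(-1)^{p}$, with $p$ the number of negative couplings among $\{J,J^\prime\}$, equals $\mathrm{sign}(J)\,\mathrm{sign}(J^\prime)$ (checking the three cases of zero, one, or two negatives). Hence
\[
J^{\mathrm{eff}}_{Q} = \mathrm{sign}(J)\,\mathrm{sign}(J^\prime)\,\min(|J|,|J^\prime|).
\]
Next I would compute the effective coupling strength of the relabelled chain $Q^\prime$, whose couplings are $\mathrm{sign}(J^\prime)|J|$ and $\mathrm{sign}(J)|J^\prime|$. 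Their magnitudes are precisely $|J|$ and $|J^\prime|$, so the minimum magnitude is again $\min(|J|,|J^\prime|)$ and is left untouched by the sign rearrangement. The signs of these two couplings are $\mathrm{sign}(J^\prime)$ and $\mathrm{sign}(J)$, so the parity factor $(-1)^{p^\prime}$ equals their product $\mathrm{sign}(J^\prime)\,\mathrm{sign}(J) = \mathrm{sign}(J)\,\mathrm{sign}(J^\prime)$. Therefore $J^{\mathrm{eff}}_{Q^\prime} = \mathrm{sign}(J)\,\mathrm{sign}(J^\prime)\,\min(|J|,|J^\prime|) = J^{\mathrm{eff}}_{Q}$, and by the theorem both chains reduce to the same effective chain and are thus equivalent to each other.

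There is essentially no serious obstacle here; the substance is a bookkeeping check that the parity-commutativity operation merely permutes signs between the two edges without altering either the multiset of magnitudes or the overall sign parity. The only point requiring minor care is the degenerate case $J=0$ or $J^\prime=0$: then $\min(|J|,|J^\prime|)=0$ and every effective coupling strength involved collapses to zero, so the equality $J^{\mathrm{eff}}_{Q}=J^{\mathrm{eff}}_{Q^\prime}$ still holds and the conclusion is unaffected.
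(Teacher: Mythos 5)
Your proposal is correct and follows exactly the paper's approach: both argue that $Q$ and $Q^\prime$ share the effective coupling strength $\mathrm{sign}(J)\,\mathrm{sign}(J^\prime)\min(|J|,|J^\prime|)$ and then invoke the Equivalence with Effective Spin Chain theorem, with your version merely spelling out the sign-parity bookkeeping that the paper leaves implicit.
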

\begin{proof}
  For both $H_{Q}$ and $H_{Q^\prime}$, the effective coupling strength is
  \begin{align}
    J_{\mathrm{eff}} = \text{sign}(J)\text{sign}(J^\prime)\min(|J|,|J^\prime|).
  \end{align}
\end{proof}

\newpage

\section{A proof of the correctness of degree reduction} \label{sec:proof-of-degree-reduction}
\renewcommand\thefigure{\thesection.\arabic{figure}}
\setcounter{figure}{0} 
\setcounter{definition}{0} 
\setcounter{lemma}{0}
\setcounter{theorem}{0}
\setcounter{corollary}{0}
In this section, we will proceed to prove the correctness of complete degree reduction described in Sec.~\ref{QUBO}, which enables an arbitrary problem graph to be 
reduced to an input graph suitable for subdivision-embedding. For the following definitions, let~$G$ be an induced subgraph 
of an Ising graph~$I$ and let~$V(\cdot)$ and~$E(\cdot)$ denote vertex and edge sets of a graph respectively, and let an edge $e$ incident to vertices $u$ and $v$ be denoted by $uv$.

\begin{figure}[h!]
  \centering
  \subcaptionbox{Induced subgraph\label{fig:subgraph}}[0.32\textwidth]{\includegraphics[width=0.32\textwidth]{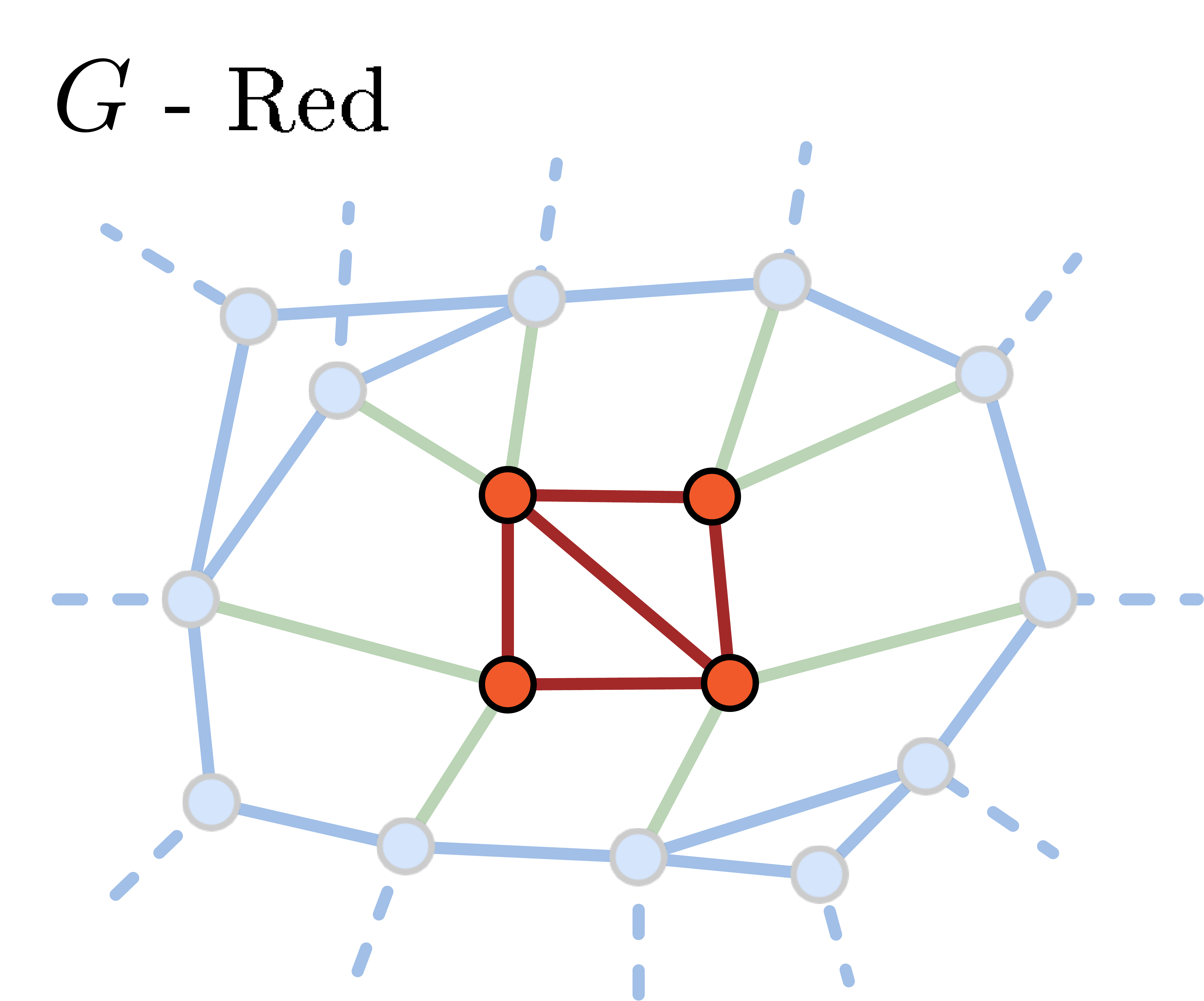}}
  \subcaptionbox{Environment of $G$\label{fig:environment}}[0.32\textwidth]{\includegraphics[width=0.32\textwidth]{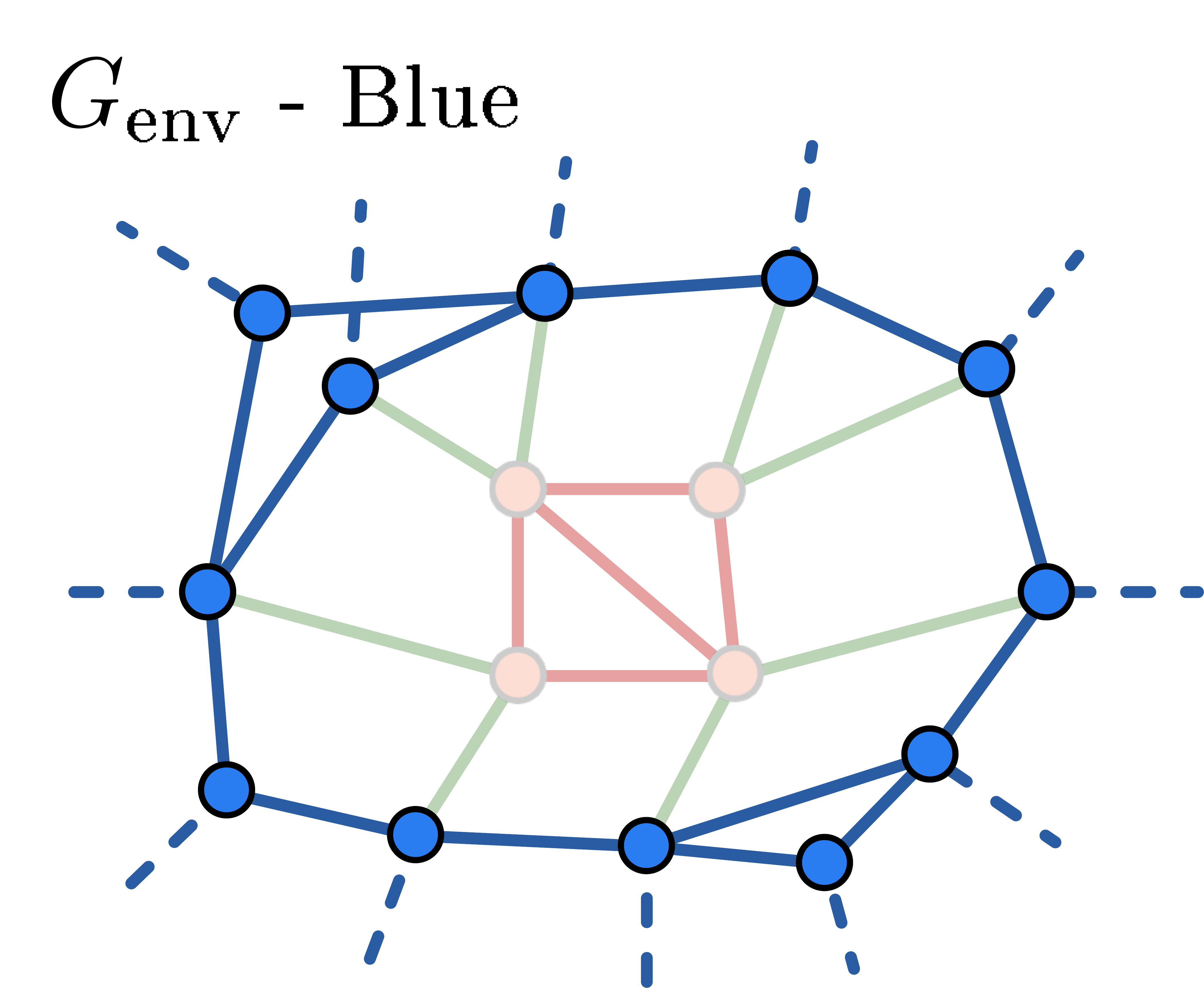}}
  \subcaptionbox{External edges of $G$\label{fig:adjacent-edges}}[0.32\textwidth]{\includegraphics[width=0.32\textwidth]{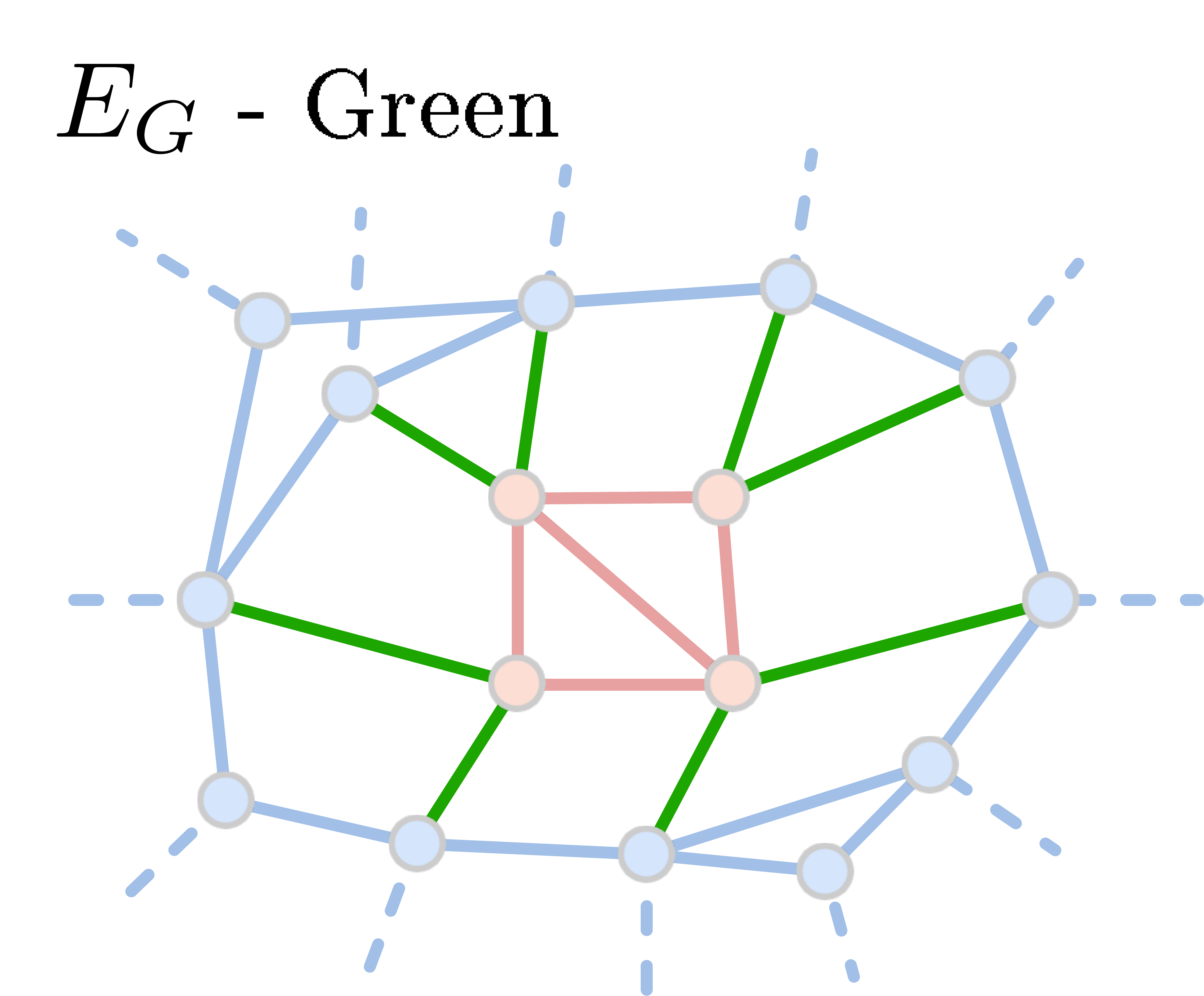}}
  \caption{\textbf{(a)}~An induced subgraph~$G$ (shown in red) of an Ising graph~$I$ is a vertex subset of~$I$ along with
  all edges of~$I$ which have both endpoints in $G$. \textbf{(b)}~The environment of~$G$ (shown in blue) is an Ising graph 
  consisting of all vertices of~$I$ that are not in~$G$ and all edges of~$I$ with no endpoint in~$G$. \textbf{(c)}~The 
  external edges of~$G$ (shown in green) are the edges of $I$ that have only a single endpoint within $G$. \label{fig:subgraph-definitions}}
\end{figure}

\begin{definition}[Environment ($G_{\text{env}}$) -- Figure~\ref{fig:environment}]
  The \emph{environment} of $G$ with respect to an Ising graph $I$ is the induced subgraph of $I$ defined on the vertex set $V(I) - V(G)$ and is denoted by $G_\text{env}$. \label{def:environment}
\end{definition}

\begin{definition}[External Edges -- Figure~\ref{fig:adjacent-edges}]
  The \emph{external edges} of~$G$ with respect to an Ising graph~$I$, denoted $E_G$, are defined as the 
  set of edges~$\{uv \in E(I) \;|\; u \in V(G) \text{ and } v \in V(G_\text{env})\}$.
  The \emph{external edges of a vertex} $u$ of $G$, denoted $E_u$, are the edges of $E_G$ that include $u$ as an endpoint.
\end{definition}

\begin{definition}[Internal Edges]
  The edges of $G$ are called the \emph{internal edges} of $G$, in contrast to external edges.
\end{definition}

When minimising the energy of an Ising graph, spin states of any induced subgraph can be influenced by its environment. We formalise this concept with the following definitions.

\begin{figure}[h!]
  \centering
  \subcaptionbox{Edge influence\label{fig:edge-influence}}[0.45\textwidth]{\hspace{1.2cm}\includegraphics[width=0.35\textwidth]{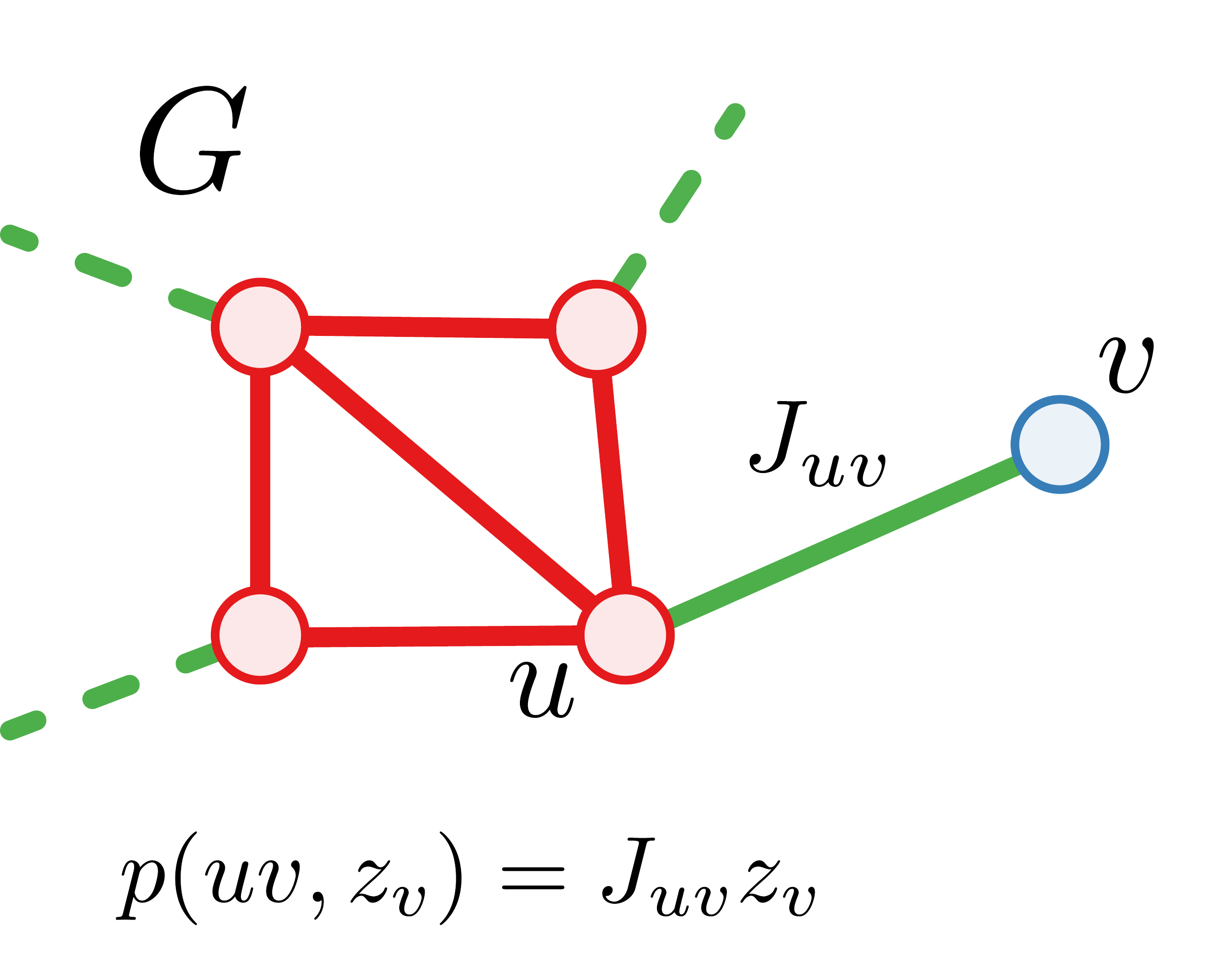}}
  \subcaptionbox{Single vertex example\label{fig:influence-example}}[0.45\textwidth]{\includegraphics[width=0.35\textwidth]{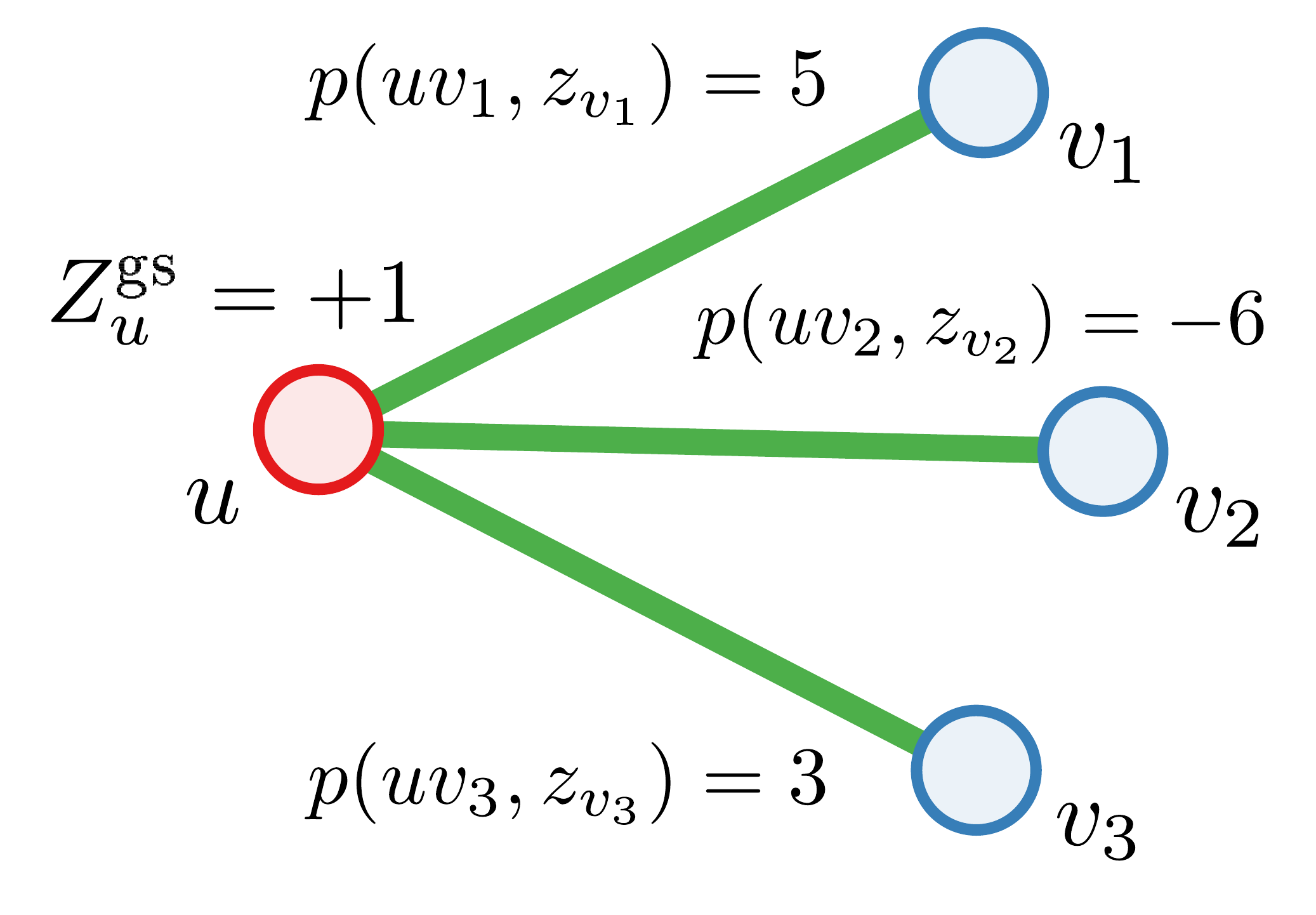}}
  \caption{Graphics to illustrate how edge influence concepts are calculated. \textbf{(a)}~An example of how to calculate the edge influence $p(uv, z_v)$. Let~$uv$ be an external edge of~$G$ and~$z_v$ be the spin state of the environment vertex~$v$, then~$p(uv, z_v)$ is $J_{uv}$ multiplied by $z_v$.
 \textbf{(b)}~An example of a single vertex logical device in the ground state. The net influence of~$u$
  is~$P_u(z_{v_1}, z_{v_2}, z_{v_3}) = p(uv_1, z_{v_1}) + p(uv_2, z_{v_2}) + p(uv_3, z_{v_3}) = 5 - 6 + 3 = 2$ resulting in a positive net majority influence. Hence the majority sign is $+1$ and the minority sign is $-1$ which means the majority edges are~$\{v_1, v_3\}$ and the minority edge is~$\{v_2\}$. Thus the net majority influence is~$P_u^{\text{maj}}(z_{v_1}, z_{v_2}, z_{v_3}) = 8$ and the net minority influence is~$P_u^{\text{min}}(z_{v_1}, z_{v_2}, z_{v_3}) = -6$. Since there is a positive net majority influence and the logical device consists of only a single vertex, the spin state $Z_u$ equal to $+1$ minimises the Ising energy.\label{fig:influence}}
\end{figure}

\begin{definition}[Edge Influence ($p$) -- Figure~\ref{fig:edge-influence}]
  Let~$uv$ be an external edge of $G$ with coupling strength $J_{uv}$, such that~$v$ is a vertex of the environment 
  with spin state $z_v$. The \emph{edge influence} $p$ of the edge~$uv$ with respect to~$G$ is defined as
  \begin{equation}
  p(uv, z_v) := J_{uv} z_v \in \{J_{uv}, -J_{uv}\}.
  \end{equation}
\end{definition}
\begin{definition}[Net Influence ($P_G$) -- Figure~\ref{fig:influence-example}]
  The \emph{net influence} of $G$ is the sum over the edge influences of $G$, that is,
    \begin{equation}
      P_G(\boldsymbol{z}) := \sum_{uv \in E_G} p(uv, z_v),
    \end{equation} 
    where $E_G$ are the external edges of $G$, $z_v \in \boldsymbol{z}$ are the spin states of the environment of $G$ and $p(e)$ is the edge influence of $e$ with respect to $G$. Similarly the \emph{net influence of a vertex} $u$ is defined as
    \begin{equation}
      P_u(\boldsymbol{z}) := \sum_{uv \in E_u} p(uv, z_v),
    \end{equation} 
    where $E_u$ are the external edges of $u$.
\end{definition}
\begin{definition}[Majority and Minority Sign -- Figure~\ref{fig:influence-example}]
Let $P_G(\boldsymbol{z})$ be the net influence of $G$. If $P_G(\boldsymbol{z}) \neq 0$ then~$\text{sign}[P_G(\boldsymbol{z})]$ is the \emph{majority sign} and~$-\text{sign}[P_G(\boldsymbol{z})]$ is the \emph{minority sign}. If $P_G(\boldsymbol{z})=0$ then there is no majority or minority sign.
\end{definition}

\begin{definition}[Majority and Minority Edges -- Figure~\ref{fig:influence-example}]
Let $P_G(\boldsymbol{z})$ be the net influence of $G$ where $P_G(\boldsymbol{z}) \neq 0$. 
The \emph{majority and minority edges} of $G$, denoted $E_G^\text{maj}(\boldsymbol{z})$ and $E_G^\text{min}(\boldsymbol{z})$, are the sets of edges where the signs of their edge 
influences are the majority and minority signs respectively. That is,
\begin{align*}
E_G^\text{maj}(\boldsymbol{z}) &:= \{uv \in E_G \;|\; \text{sign}[p(uv, z_v)] = \text{sign}[P_G(\boldsymbol{z})] \}, \text{ and}\\
E_G^\text{min}(\boldsymbol{z}) &:= \{uv \in E_G \;|\; \text{sign}[p(uv, z_v)] = -\text{sign}[P_G(\boldsymbol{z})] \},
\end{align*}
where $E_G$ are the external edges of $G$ and $p$ is the edge influence. 
The \emph{majority and minority edges of a vertex} $u$, denoted $E_u^\text{maj}$ and~$E_u^\text{min}$, are the edges of~$E_G^\text{maj}$ and~$E_G^\text{min}$ incident 
to vertex~$u$ respectively. In the case where~$P_G(\boldsymbol{z})=0$, 
the majority and minority edges are defined to be the null set, i.e. $E_G^\text{maj}(\boldsymbol{z}) = E_G^\text{min}(\boldsymbol{z}) = \emptyset$.
\end{definition}

\begin{definition}[Net Majority and Minority Influence -- Figure~\ref{fig:influence-example}]
The \emph{net majority and minority influences} are the sums of the majority and minority edge influences respectively, that is,
  \begin{align}
    P_G^\text{maj}(\boldsymbol{z}) &:= \sum_{uv \in E_G^{\text{maj}}(\boldsymbol{z})} p(uv, z_v), &P_G^\text{min}(\boldsymbol{z}) &:= \sum_{uv \in E_G^{\text{min}}(\boldsymbol{z})} p(uv, z_v),
  \end{align}
  where $E_G^{\text{maj}}$ and $E_G^{\text{min}}$ are the majority and minority edges of $G$ respectively and $p(e)$ is the edge influence of $e$ with respect to $G$.
  Similarly, the \emph{net majority and minority influences of a vertex} $u$, denoted $P^\text{maj}_{u}$ and $P^\text{min}_{u}$, is the sum 
  of edge influences corresponding to majority and minority edges incident to $u$ respectively, that is
  \begin{align}
    P_u^\text{maj}(\boldsymbol{z}) &:= \sum_{uv \in E_u^{\text{maj}}(\boldsymbol{z})} p(uv, z_v), &P_u^\text{min}(\boldsymbol{z}) &:= \sum_{uv \in E_u^{\text{min}}(\boldsymbol{z})} p(uv, z_v),
  \end{align}
  where $E_u^{\text{maj}}$ and $E_u^{\text{min}}$ are the majority and minority edges of $u$ respectively.
  Note that the net influence of a graph or vertex is equal to the sum of the net majority and minority influences, i.e. $P_G(\boldsymbol{z}) = P_G^\text{maj}(\boldsymbol{z}) + P_G^\text{min}(\boldsymbol{z})$ and $P_u(\boldsymbol{z}) = P_u^\text{maj}(\boldsymbol{z}) + P_u^\text{min}(\boldsymbol{z})$.
\end{definition}

\begin{definition}[Logical Device -- Figure~\ref{fig:logical-device}]\label{def:logical-device}
A \emph{logical device} is an ordered pair~$(G, E_G)$ where $E_G$ are the external edges of an induced subgraph~$G$ of Ising graph~$I$.
\end{definition}

Note that the edge and vertex sets of $I$ are $E(I) = E(G) \cup E_G \cup E(G_{\text{env}})$ and $V(I) = V(G) \cup V(G_{\text{env}})$ respectively, where $G_{\text{env}}$ is the environment 
of~$G$.

\begin{definition}[Comparable -- Figure~\ref{fig:comparable}]
Two logical devices on distinct Ising graphs are \textit{comparable} if they isomorphically have the same environment and edge influences. More precisely, 
let $(G, E_G)$ and $(G^\prime, E_{G^\prime})$ be logical devices of Ising graphs~$I$ and~$I^\prime$ such that their environments~$G_{\text{env}}$ 
and~$G^\prime_{\text{env}}$ are isomorphic with isomorphism~$\phi : G_{\text{env}} \rightarrow G^\prime_{\text{env}}$. The logical 
devices~$(G, E_G)$ and~$(G^\prime, E_{G^\prime})$ are \emph{comparable} if and only if for all vertices $v$ in the environment $G_\text{env}$, 
the edge sets~$\{u_iv_j\in E_G \;|\; v_j = v\}$ and~$\{u^\prime_i v^\prime_j \in E_{G^\prime} \;|\; v^\prime_j=\phi(v)\}$, including their weights, are isomorphic.
\end{definition}

\begin{figure}[h!]
  \centering
  \subcaptionbox{Logical device example 1\label{fig:logical-device}}[0.49\textwidth]{\includegraphics[width=0.45\textwidth]{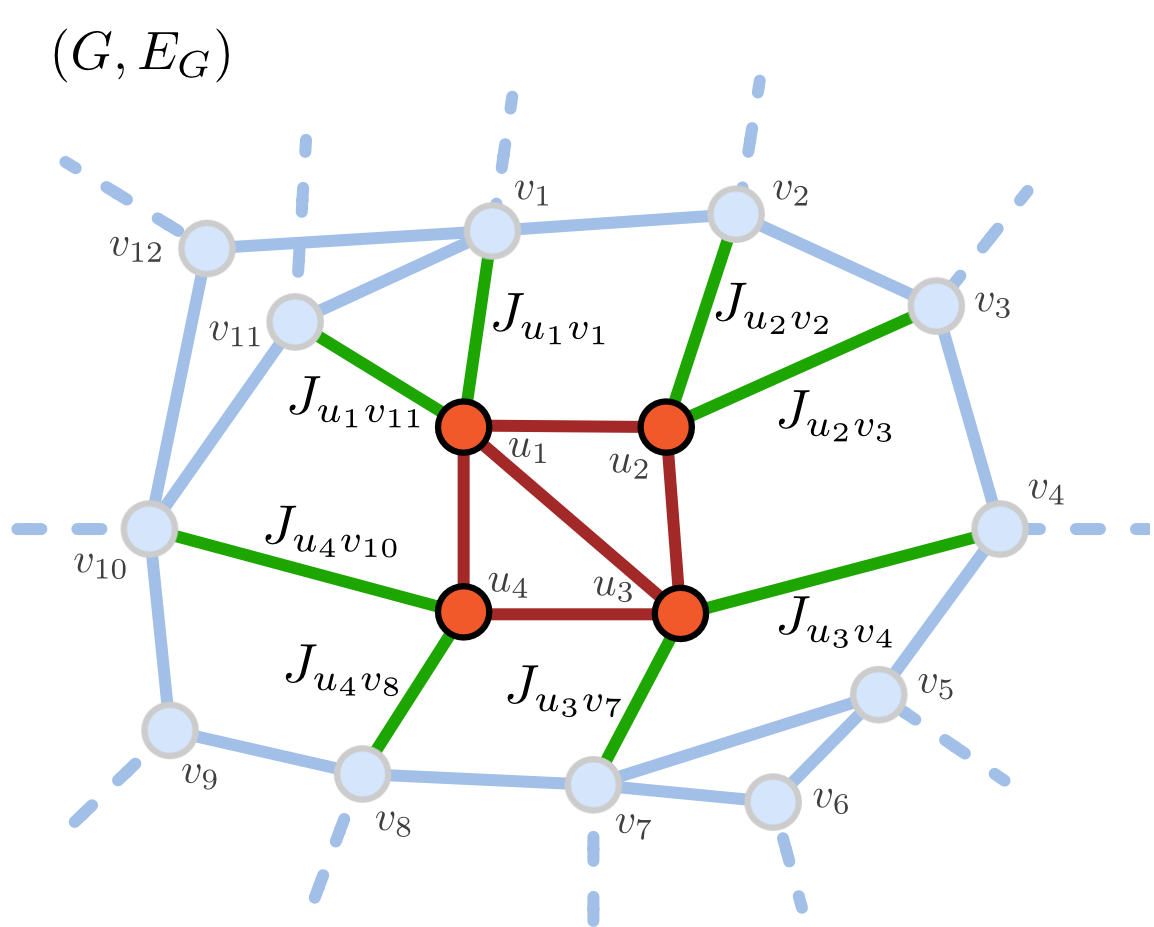}}
  \subcaptionbox{Logical device example 2}[0.49\textwidth]{\includegraphics[width=0.45\textwidth]{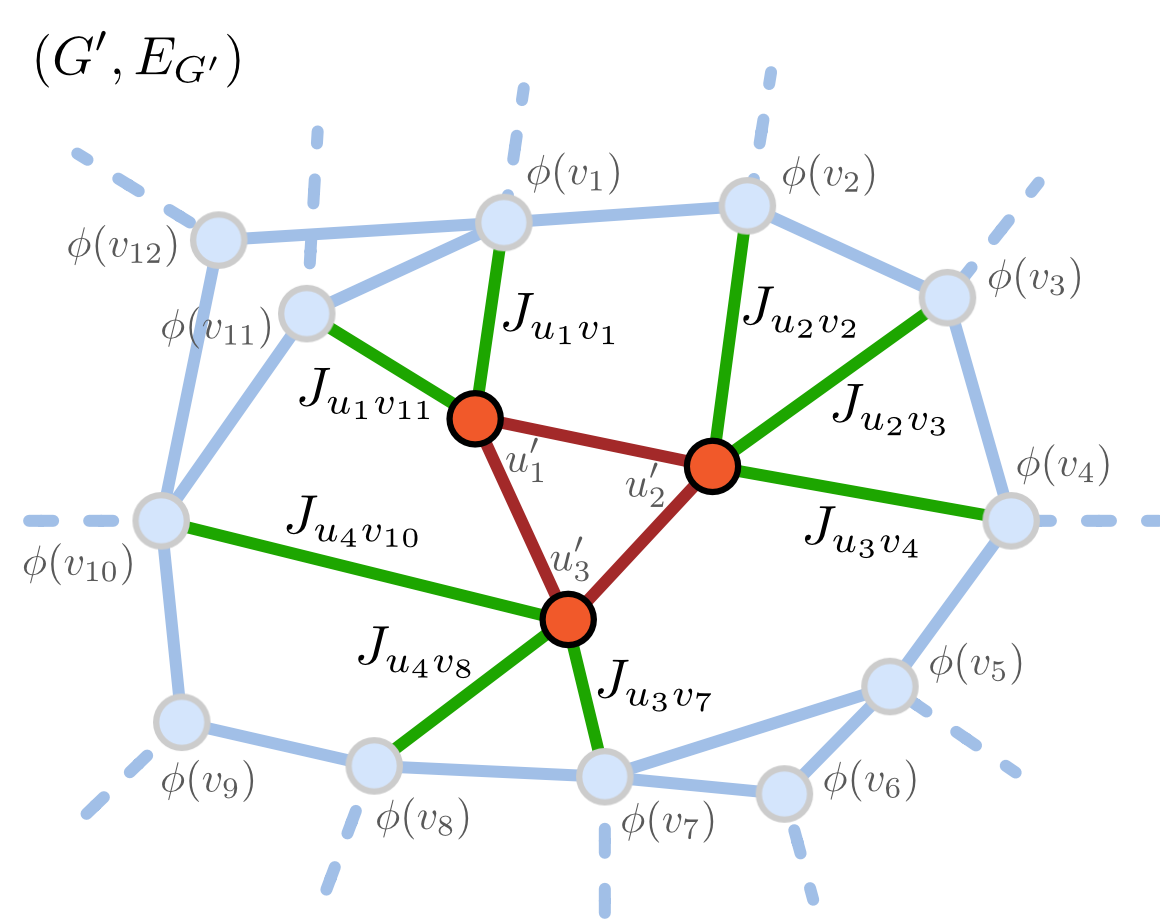}}
  \caption{Two comparable logical devices. These devices are comparable because~$G$ and~$G^\prime$ have isomorphic environments and the external edge 
  sets~$E_G$ and~$E_{G^\prime}$, including their weights, are the same under the isomorphism up to vertices within~$G$ and~$G^\prime$. 
  \textbf{(a)} An example of a logical device~$(G, E_G)$ consisting of the graph~$G$ of four vertices and its external edges~$E_G$. \textbf{(b)} An 
  example of a logical device~$(G^\prime, E_{G^\prime})$ consisting of the graph~$G^\prime$ of three vertices and its external edges~$E_{G^\prime}$ with 
  the same corresponding coupling strengths as $E_G$.\label{fig:comparable}}
\end{figure}

\begin{definition}[Logically Equivalent] \label{def:logically-equivalent}
  Let $(G, E_G)$ and $(G^\prime, E_{G^\prime})$ be comparable logical devices of Ising graphs~$I$ and~$I^\prime$. Let $u$ and $u^\prime$ be single vertices of $G$ and $G^\prime$ respectively. The 
  spin states of vertices within $I$ and $I^\prime$ are denoted as follows. Let~$Z_u$ and~$Z_{u^\prime}$ be spin states of $u$ and $u^\prime$, 
  let~$\mathbf{Z}_{\overline{u}}$ and~$\mathbf{Z}_{\overline{u}^\prime}$ be spin states of~$V(G) - \{u\}$ and~$V(G^\prime) - \{u^\prime\}$ and let~$\mathbf{z}$ and $\mathbf{z}^\prime$
  be spin states of~$~V(G_{\text{env}})$ and $V(G^\prime_{\text{env}})$ respectively.

  The two vertices $u$ and $u^\prime$ of logical devices~$(G, E_G)$ and~$(G^\prime, E_{G^\prime})$ respectively are \emph{logically equivalent} if and only if 
  \begin{equation}
    (Z_u^\text{gs}, \mathbf{z}^\text{gs}) = (Z_{u^{\prime}}^\text{gs}, \mathbf{z}^{\prime \text{gs}}),
  \end{equation}
  where 
  \begin{align}
    (Z_u^\text{gs}, \mathbf{Z}_{\overline{u}}^\text{gs}, \mathbf{z}^\text{gs}) &:= \underset{(Z_u, \mathbf{Z}_{\overline{u}}, \mathbf{z})}{\mathrm{argmin}}[H_I(Z_u, \mathbf{Z}_{\overline{u}}, \mathbf{z})], \text{ and}\\
    (Z_{u^\prime}^{\text{gs}}, \mathbf{Z}_{\overline{u}^\prime}^{\text{gs}}, \mathbf{z}^{\prime \text{gs}}) &:= \underset{(Z_{u^\prime}, \mathbf{Z}_{\overline{u}^\prime}, \mathbf{z}^\prime)}{\mathrm{argmin}}[H_{I^\prime} (Z_{u^\prime}, \mathbf{Z}_{\overline{u}^{\prime }}, \mathbf{z}^\prime)],
  \end{align}
  where $H_I$ and $H_{I^\prime}$ are Ising Hamiltonians of $I$ and~$I^\prime$ respectively.
\end{definition}

The following theorem describes a degree reduction transformation. The intuitive idea is that the spins in the logical devices are coupled strongly enough such that the spin states act as a single logical spin.

\begin{theorem}[Complete Degree Reduction -- Figure~\ref{fig:complete-degree-reduction-hamiltonian}]\label{theorem:complete-degree-reduction}
  Let $(G, E_G)$ be a logical device of a ZZ Ising graph (i.e. with only 2-local interaction terms) where the largest magnitude of 
  all coupling strengths of external edges is denoted by~$K_\text{ext}$. If~$G$ is a complete graph of order~$N$, with each edge coupling 
  strength greater than or equal to~$K_\text{ext}$ and with each vertex incident with up to~$t \leq N/2$ external edges, then all vertices in $(G, E_G)$ are 
  logically equivalent to the vertex of a comparable logical device $(G^\prime, E_{G^\prime})$ consisting of a single vertex.
\end{theorem}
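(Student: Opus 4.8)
The plan is to fix the spins $\mathbf{z}$ of the environment $G_{\text{env}}$ and show that, for every such choice, the energy $H_I$ is minimised by a configuration in which all $N$ vertices of the complete graph $G$ carry the same spin value; the resulting aligned block then plays exactly the role of the single vertex $u'$ of $G'$. First I would decompose $H_I = H_{\text{int}} + H_{\text{ext}} + H_{\text{env}}$, where $H_{\text{int}} = -\sum_{\text{internal }ij} J_{ij} Z_i Z_j$ runs over the $\binom{N}{2}$ edges of $K_N$, the external term rewrites as $H_{\text{ext}} = -\sum_{u\in V(G)} Z_u P_u(\mathbf{z})$ using the net influence $P_u(\mathbf{z}) = \sum_{uv\in E_u} J_{uv} z_v$, and $H_{\text{env}}$ depends only on $\mathbf{z}$. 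Since $H_{\text{env}}$ is constant once the environment is fixed, minimising over the spins of $G$ involves only $H_{\text{int}} + H_{\text{ext}}$.

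The core step is a two-sided flip comparison. Suppose a configuration splits $V(G)$ into a nonempty up-set $A$ and a nonempty down-set $B$ with $|A| = a$, $|B| = b$, $a+b=N$. Flipping all of $B$ up to reach the all-up aligned state changes the energy by $\Delta E_+ = -2\sum_{i\in A,\, j\in B} J_{ij} - 2\sum_{u\in B} P_u(\mathbf{z})$, while flipping all of $A$ down changes it by $\Delta E_- = -2\sum_{i\in A,\, j\in B} J_{ij} + 2\sum_{u\in A} P_u(\mathbf{z})$. I would then bound the internal crossing cost below by $\sum_{i\in A,\, j\in B} J_{ij} \ge K_{\text{ext}}\, ab$, since $K_N$ has $ab$ crossing edges each of strength at least $K_{\text{ext}}$, and bound each external influence by $|P_u(\mathbf{z})| \le t\,K_{\text{ext}} \le (N/2)\,K_{\text{ext}}$, since each vertex has at most $t$ external edges of magnitude at most $K_{\text{ext}}$. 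Assuming both $\Delta E_+ \ge 0$ and $\Delta E_- \ge 0$ would force $\sum_{u\in A} P_u \ge K_{\text{ext}}\, ab$ and $-\sum_{u\in B} P_u \ge K_{\text{ext}}\, ab$, which combined with the influence bound gives $b \le t$ and $a \le t$; but then $N = a+b \le 2t \le N$, so every inequality must be tight. Hence at least one aligned state is no worse than any split configuration, and the minimum of $H_I$ is always attained on an aligned configuration of $G$.

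Having established that a ground state with $G$ fully aligned always exists, I would identify this aligned block with the single vertex $u'$ of the comparable device $(G', E_{G'})$. Writing the common spin value on $G$ as $Z_G$, the aligned energy is $H_I = -\sum_{\text{internal }ij} J_{ij} - Z_G \sum_{uv\in E_G} J_{uv} z_v + H_{\text{env}}(\mathbf{z})$, where the first term is a fixed constant. By comparability, the aggregate external coupling from $G$ to each environment vertex $v$ equals the coupling of $u'$ to $\phi(v)$, so the remaining terms coincide with $H_{I'}$ evaluated at $Z_{u'} = Z_G$ and environment $\phi(\mathbf{z})$. Thus minimising $H_I$ over aligned configurations is the same optimisation as minimising $H_{I'}$ up to an additive constant, so the argmins agree on $(Z_G, \mathbf{z}) = (Z_{u'}, \mathbf{z}')$; since $Z_u = Z_G$ for every $u \in V(G)$ in such a ground state, each vertex of $(G, E_G)$ is logically equivalent to $u'$ in the sense of Definition~\ref{def:logically-equivalent}.

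The hard part will be the tight boundary case $t = N/2$ with $a = b = N/2$, where the flip comparison only yields $\Delta E_\pm = 0$ and a split configuration can be energetically degenerate with the aligned ones. Here I would argue that the aligned configurations nonetheless remain among the ground states and realise exactly the same marginal values $(Z_G, \mathbf{z})$ as the single-vertex device, so the logical value stays well defined and the equivalence is unaffected; strict separation excluding broken logical spins is recovered whenever some internal coupling strictly exceeds $K_{\text{ext}}$ or some vertex has fewer than $N/2$ external edges. Making this degeneracy bookkeeping precise, rather than the energy counting itself, is where I expect the argument to require the most care.
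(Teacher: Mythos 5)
Your proposal is correct and reaches the theorem's conclusion, but the central energy argument is organised genuinely differently from the paper's. The paper fixes the majority-sign aligned configuration as the reference, writes any configuration with $F$ flipped spins as $H_I^{(0)}+\Delta_I^{(F)}$, and proves $\Delta_I^{(F)}\ge 0$ by a five-way case analysis on $F$ versus $N/2$ and on whether the net influence $P_G(\boldsymbol{z})$ vanishes; this requires identifying in advance which aligned state is the better one. Your two-sided comparison --- flipping the down-set $B$ up and the up-set $A$ down, and showing that $\Delta E_+\ge 0$ and $\Delta E_-\ge 0$ together force $a\le t$, $b\le t$, hence $N=a+b\le 2t\le N$ with every bound tight --- sidesteps the majority/minority bookkeeping entirely and collapses the paper's cases into a single argument; the underlying estimates ($\sum_{i\in A,\,j\in B}J_{ij}\ge K_{\mathrm{ext}}\,ab$ over the $ab$ crossing edges of $K_N$, and $|P_u(\boldsymbol{z})|\le tK_{\mathrm{ext}}$) are the same ones the paper uses. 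Your treatment of the degenerate boundary also matches the paper's in substance: your tightness conditions force not only $a=b=t=N/2$ but also $P_G(\boldsymbol{z})=0$, in which case both $\pm 1$ are likewise ground states of the single-vertex device, so the argmin sets in Definition~\ref{def:logically-equivalent} still coincide even though a split ground state may be degenerate with the aligned ones (this is exactly the paper's Case~4 situation, where only $H_I^{(F)}\ge H_I^{(0)}$ holds). The final identification of the aligned block with $u'$ via comparability and the additive constant $c=-\sum_{u_iu_j\in E(G)}J_{u_iu_j}$ reproduces the paper's Conditions~1 and~2. One small point worth making explicit: the cancellation of $a$, $b$ and $K_{\mathrm{ext}}$ in your tightness step requires $K_{\mathrm{ext}}>0$ (the paper assumes this implicitly as well); if $K_{\mathrm{ext}}=0$ all external influences vanish and the claim is immediate.
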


\begin{proof}
Let $H_I$ and $H_{I^\prime}$ be ZZ Ising Hamiltonians of Ising graphs~$I$ and $I^\prime$ and let $G$ and $G^\prime$ be induced 
subgraphs of $I$ and $I^\prime$ respectively, as shown in Fig.~\ref{fig:complete-degree-reduction-hamiltonian}. For readability, we use a convention of labelling vertices of $G$ with $u$, the single vertex 
of $G^\prime$ with $u^\prime$, and 
vertices of the environments of $G$ and $G^\prime$ with $v$ and $v^\prime$ respectively, where subscripts are used to distinguish between them when needed. The Hamiltonians of $I$ and $I^\prime$ can be written as
  \begin{align}
    H_I(\boldsymbol{Z}_G, \boldsymbol{z}) &= H_G(\boldsymbol{Z}_G) + H_{E_G}(\boldsymbol{Z}_G, \boldsymbol{z}) + H_{G_{\text{env}}}(\boldsymbol{z}), \text{ and}\\
    H_{I^\prime} (Z_{u^\prime}, \boldsymbol{z}^\prime) &= H_{E_{G^\prime}}(Z_{u^\prime}, \boldsymbol{z}^\prime) + H_{G^\prime_\text{env}}(\boldsymbol{z}^\prime),
  \end{align}
  where~$H_G$,~$H_{E_G}$,~$H_{E_{G^\prime}}$,~$H_{G_\text{env}}$, and~$H_{G^\prime_\text{env}}$ are the~ZZ~Ising Hamiltonians of~$G$, the external edges~$E_G$ and~$E_{G^\prime}$, and 
  the environments $G_\text{env}$ and $G^\prime_\text{env}$ respectively,~$\boldsymbol{Z}_G$ and~$Z_{u^\prime}$ are spin states of the vertices of $G$ and the single 
  vertex $u^\prime$ of~$G^\prime$ respectively, and~$\boldsymbol{z}$ and~$\boldsymbol{z}^\prime$ are 
  the spin states of the environments $G_\text{env}$ and $G^\prime_\text{env}$ respectively. 
  
  Referring to Fig.~\ref{fig:complete-degree-reduction-hamiltonian}, we can expand~$H_I$ and $H_{I^\prime}$ as (Noting that $H_{G_\text{env}}(\boldsymbol{z}) = H_{G^\prime_\text{env}}(\boldsymbol{z})$ since the environments are isomorphic.)
  \begin{align}
    H_I(\boldsymbol{Z}_G, \boldsymbol{z}) &= -\underbrace{\sum_{u_iu_j\in E(G) } J_{u_iu_j} Z_{u_i} Z_{u_j}}_{\text{internal edges}} - \underbrace{\sum_{i=1}^{N} Z_{u_i} \sum_{v \in n_{\overline{G}}(u_i)} J_{u_i v}z_{v}}_{\text{external edges}} + \underbrace{H_{G_{\text{env}}}(\boldsymbol{z})}_{\text{environment}},\nonumber \\
    H_{I^\prime}(Z_{u^\prime}, \boldsymbol{z^\prime}) &= - \underbrace{ Z_{u^\prime} \sum_{v^\prime \in n(u^\prime)} J_{u^\prime v^\prime}z_{v^\prime}}_{\text{external edges}} + \underbrace{H_{G_{\text{env}}}(\boldsymbol{z}^\prime)}_{\text{environment}},\label{eq:degree-reduction-hamiltonians}
  \end{align}
  where~$N$ is the number of vertices of $G$,~$J_{ij}$ is the coupling strength between spins $i$ and $j$,~$Z_{u_i} \in \boldsymbol{Z}_G$, $z_{v} \in \boldsymbol{z}$, 
  $z_{v^\prime} \in \boldsymbol{z^\prime}$,~$n(u^\prime)$ are the neighbouring vertices of~$u^\prime$, and~$n_{\overline{G}}(u_i)$ are neighbouring 
  vertices of~$u_i$ that are not within $G$.

  By considering the edge influence of $G^\prime$ we can determine an expression for the form of its ground state Ising Hamiltonian. Let~$P_{G^\prime}(\boldsymbol{z}^\prime)$ be 
  the net influence of~$G^\prime$,~$|P_{G^\prime}^{\text{maj}}(\boldsymbol{z}^\prime)|$ be the magnitude of the 
  net majority influence and~$|P_{G^\prime}^{\text{min}}(\boldsymbol{z}^\prime)|$ be the magnitude of the net minority influence. In the case of~$P_{G^\prime}(\boldsymbol{z}^\prime)=0$, we 
  set~$|P_{G^\prime}^{\text{maj}}(\boldsymbol{z}^\prime)| = |P_{G^\prime}^{\text{min}}(\boldsymbol{z}^\prime)|$ and note that both spin states $Z_{u^\prime} \in \{-1, 1\}$ of $G^\prime$ are ground 
  states. In the case of~$P_{G^\prime}(\boldsymbol{z}^\prime)\neq 0$, the ground state of~$G^\prime$ will always assume the majority sign (example shown in 
  Fig.~\ref{fig:influence-example}). In both cases we can write the form of the ground state Ising Hamiltonian of $I^\prime$ as
  \begin{align}
    \min_{Z_{u^\prime}}[H_{I^\prime} (Z_{u^\prime}, \boldsymbol{z}^\prime)] &= \min_{Z_{u^\prime}}[-Z_{u^\prime} \sum \limits_{v^\prime \in n(u^\prime)} J_{u^\prime v^\prime} z_{v^\prime}] + H_{G_\text{env}}(\boldsymbol{z}^\prime)\\
    &= \min_{Z_{u^\prime}}[-Z_{u^\prime} P_{G^\prime}(\boldsymbol{z}^\prime)] + H_{G_\text{env}}(\boldsymbol{z}^\prime)\\
    &= \min_{Z_{u^\prime}}[-Z_{u^\prime} P_{G^\prime}^\text{min}(\boldsymbol{z}^\prime) - Z_{u^\prime} P_{G^\prime}^\text{maj}(\boldsymbol{z}^\prime)] + H_{G_\text{env}}(\boldsymbol{z}^\prime)\\
    &= |P_{G^\prime}^\text{min}(\boldsymbol{z}^\prime)| - |P_{G^\prime}^\text{maj}(\boldsymbol{z}^\prime)| + H_{G_\text{env}}(\boldsymbol{z}^\prime).
  \end{align}
  
  To compare this with the form of the ground state Ising Hamiltonian of $G$ we will set the spin states $\boldsymbol{Z}_G$ of~$G$ to 
  be equal to the majority sign (where we arbitrarily set the majority sign to $+1$ for $P_G(\boldsymbol{z})=0$) and then show that for any assignment of environmental spin states $\boldsymbol{z}$ minimises the Ising Hamiltonian~$H_I$. 
  We will show that this configuration for~$P_G(\boldsymbol{z}) \neq 0$, gives a unique minimum 
  of~$H_I(\boldsymbol{Z}_G, \boldsymbol{z})$, and for~$P_G(\boldsymbol{z})=0$, gives a non-unique minimum.
  The Hamiltonian~$H_I(\boldsymbol{Z}_G, \boldsymbol{z})$, after assigning each element of $\boldsymbol{Z}_G$ to the majority sign is
    \begin{equation} \label{eq:proof_E0}
      H_I^{(0)}(\boldsymbol{z}) := - \sum_{u_iu_j\in E(G) }J_{u_iu_j} + |P_{G}^\text{min}(\boldsymbol{z})| - |P_G^{\text{maj}}(\boldsymbol{z})| + H_{G_\text{env}}(\boldsymbol{z}),
    \end{equation}\noindent
    where $H_I^{(F)}$ represents the Ising Hamiltonian of $I$ with $F$ spin states of $G$ differing from the majority sign.
    To analyse this expression in more depth, it is useful to define the following. Let $P_{u_i}^{\text{maj}}(\boldsymbol{z})$ and $P_{u_i}^{\text{min}}(\boldsymbol{z})$ be the 
    net majority edges of $u_i$. Then, 
    \begin{equation} \label{eq:aplusb}
    |P_{u_i}^\text{maj}(\boldsymbol{z})| + |P_{u_i}^\text{min}(\boldsymbol{z})|  = \sum_{v \in n_{\overline{G}}(u_i)}|p(u_iv, z_v)| \leq t K_\text{ext}
    \end{equation}
    where each vertex of $G$ is incident with up to $t$ external edges such that $t\leq N/2$.
    By analysing how flipping spins away from the majority sign changes the value of $H_I$, we will show that $H_I^{(0)}$ is 
    smaller than or equal to any $F$ number of spin flips away from the majority sign and hence is a ground state. Without loss of 
    generality, we order the vertex labels of $G$ such that spin states corresponding to 
  vertices~$u_i = u_1, \ldots, u_F$ are flipped.
  The Ising Hamiltonian~$H_I(\boldsymbol{Z}_G, \boldsymbol{z})$ with~$F$ flipped spin states of $G$ from the majority sign can be expressed as

  \begin{equation}\label{eq:spin-flip-energy}
    H_I^{(F)}(\boldsymbol{z}) := H_I^{(0)}(\boldsymbol{z}) + \Delta_I^{(F)}(\boldsymbol{z}),
  \end{equation} 
  where $\Delta_I^{(F)}(\boldsymbol{z})$ is the energy change from $H_I^{(0)}(\boldsymbol{z})$ caused by the flipped states. We observe that (see Figure~\ref{fig:energy-change-example})
    \begin{align}
      \Delta_I^{(F)}(\boldsymbol{z}) &= 2\sum_{i=1}^F \sum_{j=F+1}^{N}J_{u_iu_j} + 2\sum_{i = 1}^{F}|P_{u_i}^\text{maj}(\boldsymbol{z})| - 2 \sum_{i=1}^{F} |P_{u_i}^\text{min}(\boldsymbol{z})| \\ \label{eq:proof_DeltaE}
                 &\geq 2 \left[ K_\text{ext} F(N-F) + \sum_{i = 1}^{F}|P_{u_i}^\text{maj}(\boldsymbol{z})| - \sum_{i=1}^{F} |P_{u_i}^\text{min}(\boldsymbol{z})| \right],
    \end{align}\noindent
  where the first term corresponds to the energy change due to internal edges and the second and third terms correspond to the 
  energy change due to the external edge influences, and~$K_\text{ext}$ is the largest magnitude of external edge coupling strengths and 
  by construction is less than or equal to all internal edge coupling strengths. 

  \begin{figure}
    \centering
    \subcaptionbox{All majority sign spin states\label{fig:energy-change-example-1}}[0.45\textwidth]{\includegraphics[width=0.30\textwidth]{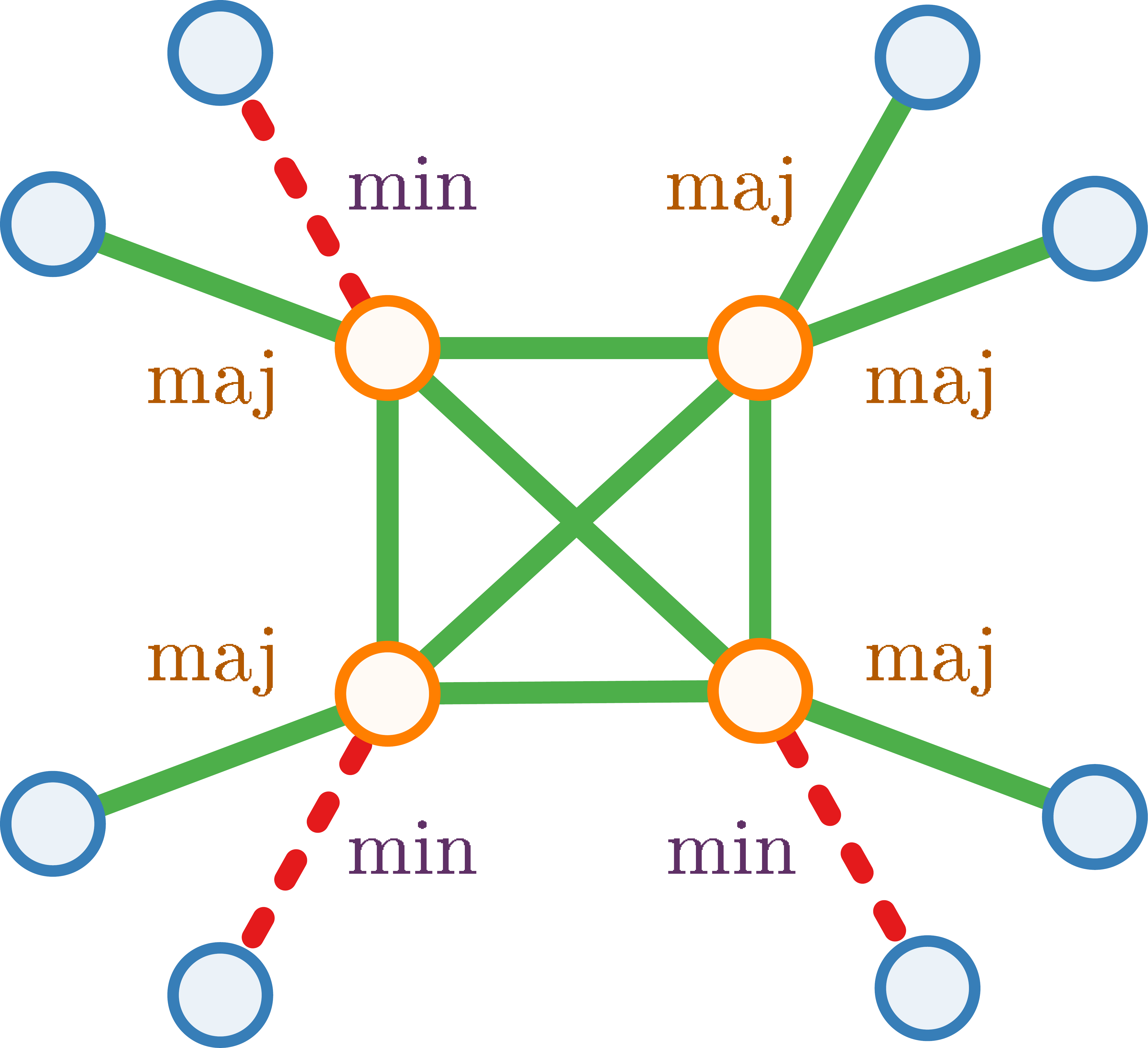}}
    \subcaptionbox{Two flipped spin states\label{fig:energy-change-example-2}}[0.45\textwidth]{\includegraphics[width=0.30\textwidth]{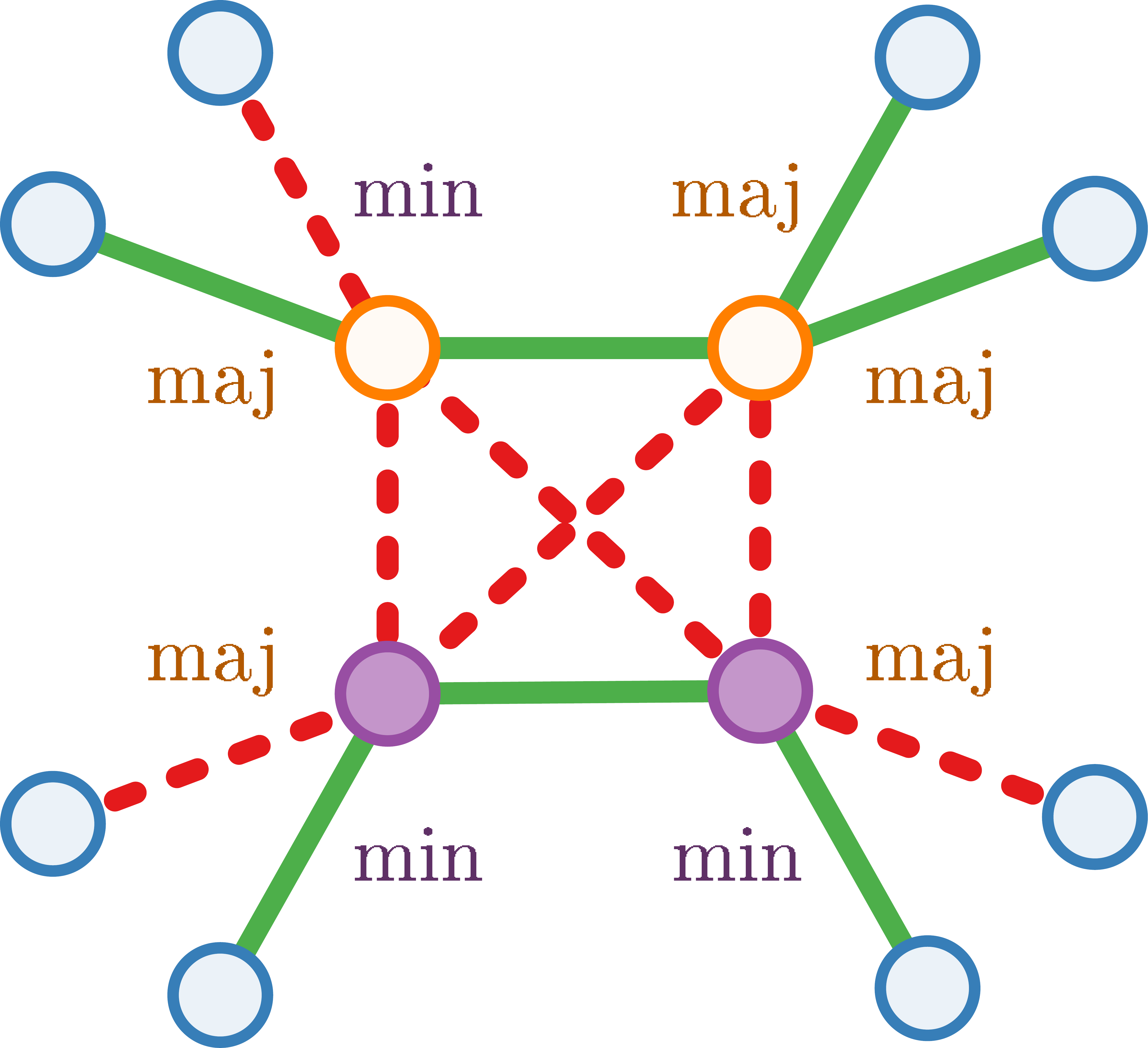}}
    \vspace{0.5 cm}
    \caption{An example of satisfied (green solid) and unsatisfied (red dashed) edges for two spin state configurations (a) and (b), orange (light) vertices within the complete graph indicate spin states in the majority sign while purple (dark) vertices indicate the minority sign. The edge 
    labels ``min" and ``maj" indicate whether the edge is a majority 
    or minority edge respectively. An edge~$uv$ contributes~$-|J_{uv}|$ energy when it is satisfied and~$|J_{uv}|$ when it is unsatisfied. Thus an energy difference of $2|J_{uv}|$ between the spin satisfying the edge and unsatisfying it.
    \textbf{(a)} All spin states in the complete graph are set to the majority sign of the edge influences. \textbf{(b)} Two spin states in the complete graph are flipped to the minority sign.}
    \label{fig:energy-change-example}\par\medskip
  \end{figure}\noindent
  To better understand the ground states of $H_I^{(F)}(\boldsymbol{z})$ for all $0 < F \leq N$, we will show that for $P_G(\boldsymbol{z}) \neq 0$,
  \begin{align}\label{eq:p-not-equal-to-zero}
    H_I^{(F)}(\boldsymbol{z}) > H_I^{(0)}(\boldsymbol{z}),
  \end{align}
  and for $P_G(\boldsymbol{z}) = 0$,
  \begin{align}
    H_I^{(F)}(\boldsymbol{z}) &> H_I^{(0)}(\boldsymbol{z}), \;\;\;\;\;\text{ for }F \neq N/2 \text{ and } F \neq N, \label{eq:p-equal-zero-1}\\
    H_I^{(F)}(\boldsymbol{z}) &\geq H_I^{(0)}(\boldsymbol{z}), \;\;\;\;\;\text{ for }F = N/2, \label{eq:p-equal-zero-2}\\
    H_I^{(F)}(\boldsymbol{z}) &= H_I^{(0)}(\boldsymbol{z}), \;\;\;\;\;\text{ for }F = N. \label{eq:p-equal-zero-3}
  \end{align}
  To show the above statements, the following cases of $F$ and $P_G(\boldsymbol{z})$ are considered separately:
  \begin{align*}
    &\text{\textbf{Case 1}: } 0 < F < N/2, & &\text{ for }P_G(\boldsymbol{z}) \neq 0 \text{ and }P_G(\boldsymbol{z}) = 0,\\
    &\text{\textbf{Case 2}: } N/2 \leq F \leq N, & &\text{ for }P_G(\boldsymbol{z}) \neq 0,\\
    &\text{\textbf{Case 3}: } N/2 < F < N, & &\text{ for }P_G(\boldsymbol{z}) = 0,\\
    &\text{\textbf{Case 4}: } F = N/2, & &\text{ for }P_G(\boldsymbol{z}) = 0, \\
    &\text{\textbf{Case 5}: } F = N & &\text{ for }P_G(\boldsymbol{z}) = 0\\
  \end{align*}
  
  \noindent\textbf{Case 1}: Assume~$0 < F < N/2$. Since $K_\text{ext} > 0$ we can write
  \begin{equation}
    0 < K_{\text{ext}}F(N/2 - F).
  \end{equation}
  Now since each vertex of $G$ is incident with up to $t$ external edges such that~$t\leq N/2$, it follows
  \begin{align}
    0 &< K_{\text{ext}}F(N-F) - K_{\text{ext}}FN/2 \\
    &\leq K_{\text{ext}}F(N-F) - tK_{\text{ext}}F.
  \end{align}
  By Eq~\ref{eq:aplusb} and Eq.~\ref{eq:proof_DeltaE} we can deduce
  \begin{equation}
    0 < K_{\text{ext}}F(N-F) - \sum_{i=1}^F |P_{u_i}^\text{min}(\boldsymbol{z})| < \Delta_I^{(F)}(\boldsymbol{z}).
  \end{equation}
  Therefore by~Eq~\ref{eq:spin-flip-energy}, we have shown~$H_I^{(F)}(\boldsymbol{z}) > H_I^{(0)}(\boldsymbol{z})$ for any value of~$P_G(\boldsymbol{z})$.\\
  \\
  \textbf{Case 2}: Assume that $N/2 \leq F \leq N$ and $P_G(\boldsymbol{z}) \neq 0$. Since $|P_G^\text{min}(\boldsymbol{z})| < |P_G^\text{maj}(\boldsymbol{z})|$, $0 < t \leq N/2$ and $0 < K_{\text{ext}}$, we can justify the following
  \begin{align}
    0 &< K_{\text{ext}}(F-t)(N-F) + |P_G^\text{maj}(\boldsymbol{z})| - |P_G^\text{min}(\boldsymbol{z})| \nonumber \\
    &= K_{\text{ext}}F(N-F) - tK_{\text{ext}}(N-F) + |P_G^\text{maj}(\boldsymbol{z})| - |P_G^\text{min}(\boldsymbol{z})|. \label{eq:case2-inequality}
  \end{align}
  From Eq~\ref{eq:aplusb}, we observe that
    \begin{align}
      \sum_{i = F+1}^{N}|P_{u_i}^\text{maj}(\boldsymbol{z})| = \sum_{i = 1}^{N - F}|P_{u_{F + i}}^\text{maj}(\boldsymbol{z})| \leq tK_{\text{ext}}(N - F),
    \end{align}
    \noindent which can be used, along with $\sum_{i=1}^{N}|P_{u_i}^\text{maj}(\boldsymbol{z})| = |P_G^\text{maj}(\boldsymbol{z})|$, to show 
    \begin{align}
      \sum_{i = 1}^{F}|P_{u_i}^\text{maj}(\boldsymbol{z})| - \sum_{i = 1}^{F}|P_{u_i}^\text{min}(\boldsymbol{z})| &\geq \sum_{i = 1}^{F}|P_{u_i}^\text{maj}(\boldsymbol{z})| - |P_G^\text{min}(\boldsymbol{z})| \nonumber \\
      &= \sum_{i=1}^{N}|P_{u_i}^\text{maj}(\boldsymbol{z})| - \sum_{i = F+1}^{N}|P_{u_i}^\text{maj}(\boldsymbol{z})| - |P_G^\text{min}(\boldsymbol{z})| \nonumber \\
      &\geq |P_G^\text{maj}(\boldsymbol{z})| - tK_{\text{ext}}(N-F) - |P_G^\text{min}(\boldsymbol{z})|.
    \end{align}\noindent
  Thus by Equations~\ref{eq:case2-inequality} and~\ref{eq:proof_DeltaE},
  \begin{equation}
    0 < K_{\text{ext}}F(N-F) + \sum_{i = 1}^{F}|P_{u_i}^\text{maj}(\boldsymbol{z})| - \sum_{i = 1}^{F}|P_{u_i}^\text{min}(\boldsymbol{z})| < \Delta_I^{(F)}(\boldsymbol{z}).
  \end{equation}
  Therefore by Eq.~\ref{eq:spin-flip-energy} we have shown $H_I^{(F)}(\boldsymbol{z}) > H_I^{(0)}(\boldsymbol{z})$ for $N/2 \leq F \leq N$ and $P_G(\boldsymbol{z}) \neq 0$.
  \newline{}\newline{}
  \textbf{Case 3}: Assume that $N/2 < F < N$ and $P_G(\boldsymbol{z}) = 0$. Since $0 < t \leq N/2$ and $0 < K_{\text{ext}}$, we can justify the following
  \begin{align}
    0 &< K_\text{ext}(F-t)(N-F) \nonumber\\
    &= K_\text{ext}F(N-F) - K_\text{ext}t(N-F). \label{eq:case3-statement}
  \end{align}
  Since $|P_G^\text{maj}(\boldsymbol{z})| = |P_G^\text{min}(\boldsymbol{z})|$ and $|P_{u_i}^\text{maj}(\boldsymbol{z})| - |P_{u_i}^\text{min}(\boldsymbol{z})| \leq t K_\text{ext}$ we have,
  \begin{align}
    \sum_{i=F+1}^N |P_{u_i}^\text{maj}(\boldsymbol{z})| - \sum_{i=F+1}^N |P_{u_i}^\text{min}(\boldsymbol{z})| &= \sum_{i=1}^F |P_{u_i}^\text{min}(\boldsymbol{z})| - \sum_{i=1}^F |P_{u_i}^\text{maj}(\boldsymbol{z})| \\
    &\leq K_\text{ext} t (N-F),
  \end{align}
  hence
  \begin{equation}
    - K_\text{ext}t(N-F) \leq \sum_{i=1}^F |P_{u_i}^\text{maj}(\boldsymbol{z})| - \sum_{i=1}^F |P_{u_i}^\text{min}(\boldsymbol{z})|.
  \end{equation}
  Thus by Equations~\ref{eq:case3-statement} and~\ref{eq:proof_DeltaE}, 
  \begin{equation}
    0 < K_\text{ext}F(N-F) + \sum_{i=1}^F |P_{u_i}^\text{maj}(\boldsymbol{z})| - \sum_{i=1}^F |P_{u_i}^\text{min}(\boldsymbol{z})| < \Delta_I^{(F)}(\boldsymbol{z}).
  \end{equation}
  Therefore by Eq.~\ref{eq:spin-flip-energy} we have shown $H_I^{(F)}(\boldsymbol{z}) > H_I^{(0)}(\boldsymbol{z})$ for $N/2 < F < N$ and $P_G(\boldsymbol{z}) = 0$.
  \newline{}\newline{}
  \textbf{Case 4}: Assume that $F = N/2$ and $P_G(\boldsymbol{z}) = 0$. Since $|P_{u_i}^\text{min}(\boldsymbol{z})| - |P_{u_i}^\text{maj}(\boldsymbol{z})| \leq tK_\text{ext}$ and $t \leq N/2$, we have
  \begin{align}
    \sum_{i=1}^{F}|P_{u_i}^\text{min}(\boldsymbol{z})| - \sum_{i=1}^{F}|P_{u_i}^\text{maj}(\boldsymbol{z})| &\leq t K_\text{ext} F\\
    &\leq (N/2)K_\text{ext}F \\
    & = K_\text{ext}F (N-F).
  \end{align}
  Thus by Equation~\ref{eq:proof_DeltaE},
  \begin{align}
    0 &\leq K_\text{ext} F (N-F) + \sum_{i=1}^{F}|P_{u_i}^\text{maj}(\boldsymbol{z})| - \sum_{i=1}^{F}|P_{u_i}^\text{min}(\boldsymbol{z})| \leq \Delta_I^{(F)}(\boldsymbol{z}).
  \end{align}
  Therefore by Eq.~\ref{eq:spin-flip-energy} we have shown $H_I^{(F)}(\boldsymbol{z}) \geq H_I^{(0)}(\boldsymbol{z})$ for $F = N/2$ and $P_G(\boldsymbol{z}) = 0$.
  \newline{}\newline{}
  \textbf{Case 5}: Assume that $F = N$ and $P_G(\boldsymbol{z}) = 0$. Since $|P_G^\text{maj}(\boldsymbol{z})| = |P_G^\text{min}(\boldsymbol{z})|$,
  \begin{align}
    \Delta_I^{(F)}(\boldsymbol{z}) &= K_\text{ext} F (N-F) + \sum_{i=1}^{F}|P_{u_i}^\text{maj}(\boldsymbol{z})| - \sum_{i=1}^{F}|P_{u_i}^\text{min}(\boldsymbol{z})| \\
    &= |P_G^\text{maj}(\boldsymbol{z})| - |P_G^\text{min}(\boldsymbol{z})| \\
    &= 0.
  \end{align}
  Therefore by Eq.~\ref{eq:spin-flip-energy} we have shown $H_I^{(F)}(\boldsymbol{z}) = H_I^{(0)}(\boldsymbol{z})$ for $F = N$ and $P_G(\boldsymbol{z}) = 0$.
  
  We have successfully shown Eq.~\ref{eq:p-not-equal-to-zero} for $P_G(\boldsymbol{z}) \neq 0$ and Eqs.~\ref{eq:p-equal-zero-1}, \ref{eq:p-equal-zero-2} and \ref{eq:p-equal-zero-3} for $P_G(\boldsymbol{z}) = 0$. Hence $F=0$ corresponds to a ground state.
  Next we use the above findings to show that each of the vertices of $G$ are logically equivalent to the vertex of $G^\prime$ according to Definition~\ref{def:logically-equivalent}, that is, all of the spin states in ground states are the same. To do this we show the following two conditions:
  \begin{align*}
    \textbf{Condition 1: } &\text{For all environmental spin states } \boldsymbol{z}=\boldsymbol{z^\prime} \text{, the value of any} \\
    & \text{spin state } Z_{u_i} \text{ in } G \text{ that }\text{minimises } H_I \text{ with respect to } \boldsymbol{Z}_G  \\
    &\text{is the same as the value of the single spin state } Z_{u^\prime}\text{ in }G^\prime \\
    &\text{that minimises }H_{I^\prime} \text{ with respect to } Z_{u^\prime}.\\
    \textbf{Condition 2: } &\text{The environmental spin states }\boldsymbol{z}^\text{gs} \text{ and } \boldsymbol{z}^{\prime \text{gs}} \text{ are equal after} \\
    &\text{minimising }H_I \text{ and } H_{I^\prime}.
  \end{align*}
  \newline{}\newline{}
  \noindent \textbf{Condition 1}: Assume $\boldsymbol{z} = \boldsymbol{z}^\prime$. Note that this implies $P_G(\boldsymbol{z}) = P_{G^\prime}(\boldsymbol{z}^\prime)$ since~$(G, E_G)$ 
  and~$(G^\prime, E_{G^\prime})$ are comparable logical devices. For~$P_G(\boldsymbol{z}) \neq 0$, 
  minimising~$H_I$ and~$H_{I^\prime}$ requires all spin 
  states of $G$ and $G^\prime$ to be equal to the majority sign, by Equation~\ref{eq:p-not-equal-to-zero}. In the case 
  that~$P_G(\boldsymbol{z})=0$, both~$H_I$ and~$H_{I^\prime}$ are minimised for both spin state assignments of vertices
  in~$G$ and~$G^\prime$ respectively by Eqs.~\ref{eq:p-equal-zero-1}, \ref{eq:p-equal-zero-2} and \ref{eq:p-equal-zero-3}, hence for all~$Z_{u_i} \in \boldsymbol{Z}_G$, 
  $Z_{u_i}=Z_{u^\prime} = \{+1, -1\}$. Thus if~$Z_{u_i}^\text{gs}$ is the assignment of~$Z_{u_i}$ minimising~$H_I$, and~$Z_{u^\prime}^\text{gs}$ is the assignment of~$Z_{u^\prime}$ minimising~$H_{I^\prime}$, then~$Z_{u_i}^\text{gs} = Z_{u^\prime}^\text{gs}$ for any value of~$P_G(\boldsymbol{z})$, satisfying condition~1.
  \newline{}\newline{}
  \noindent \textbf{Condition 2}: First, we observe that
  \begin{align}
    \min_{\boldsymbol{Z}_G}[H_I(\boldsymbol{Z}_G, \boldsymbol{z})] &= H_I^{(0)}(\boldsymbol{z}) \\
                                                                     &= |P_G^\text{min}(\boldsymbol{z})| - |P_G^\text{maj}(\boldsymbol{z})| - \sum_{u_iu_j \in E(G)}J_{u_iu_j} + H_{G_\text{env}}(\boldsymbol{z}) \\
                                                                     &= \min_{Z_{u^\prime}}[H_{I^\prime}(Z_{u^\prime}, \boldsymbol{z})] - \sum_{u_iu_j \in E(G)}J_{u_iu_j}\\
                                                                     &=  \min_{Z_{u^\prime}}[H_{I^\prime}(Z_{u^\prime}, \boldsymbol{z})] + c,
  \end{align}
  where~$c$ is constant. So that
  \begin{align}
    \boldsymbol{z}^\text{gs} &= \underset{\boldsymbol{z}}{\mathrm{argmin}} \left[ \min_{\boldsymbol{Z}_G}[H_I(\boldsymbol{Z}_G, \boldsymbol{z})]\right]\\
                              &= \underset{\boldsymbol{z}}{\mathrm{argmin}} \left[ \min_{Z_{u^\prime}}[H_{I^\prime}(Z_{u^\prime}, \boldsymbol{z})] + c\right]\\
                              &= \boldsymbol{z}^{\prime \text{gs}}.
  \end{align}
  Thus $\boldsymbol{z}^\text{gs} = \boldsymbol{z}^{\prime \text{gs}}$ after minimising~$H_I$ and~$H_{I^\prime}$, satisfying condition~2.
  
  These two conditions can be summarised as follows. Minimising the two Hamiltonians $H_I$ and $H_{I^\prime}$ with respect to all spin states of $I$ and $I^\prime$ respectively implies that~$\boldsymbol{z}^\text{gs} = \boldsymbol{z}^{\prime \text{gs}}$ by condition~2. This implies that for all $Z_{u_i}^\text{gs} \in \boldsymbol{Z}^\text{gs}_G$, ~$Z_{u_i}^\text{gs} = Z_{u^\prime}^\text{gs}$ by condition~1.
  Hence $(Z_{u_i}^\text{gs}, \boldsymbol{z}^{\text{gs}}) = (Z_{u^\prime}^{\text{gs}}, \boldsymbol{z}^{\prime \text{gs}})$ for 
  all $Z_{u_i}^\text{gs} \in \boldsymbol{Z}^\text{gs}_G$ and therefore all vertices of~$(G, E_G)$ are logically equivalent to the 
  single vertex of~$(G^\prime, E_{G^\prime})$ according to Definition~\ref{def:logically-equivalent}.
\end{proof}

For an Ising graph, any vertex of high degree can be replaced with a graph of vertices of lower degrees such that a ground state of the original graph is determined by a ground state of the transformed graph. This can be recursively applied to reduce a vertex of any finite degree to a graph consisting of vertices with at most degree five. To further reduce the maximum degree of a reduced graph to three, two different degree reduction transformations can be used, shown in Figure~\ref{fig:degree-reduction}.

\newpage
\section{Complexity of complete degree reduction} \label{sec:degree-reduction-complexity}
\setcounter{figure}{0} 
\setcounter{definition}{0} 
\setcounter{lemma}{0}
\setcounter{theorem}{0}
\setcounter{corollary}{0}
\begin{figure}[h!]
  \centering
  \includegraphics[scale=1.1]{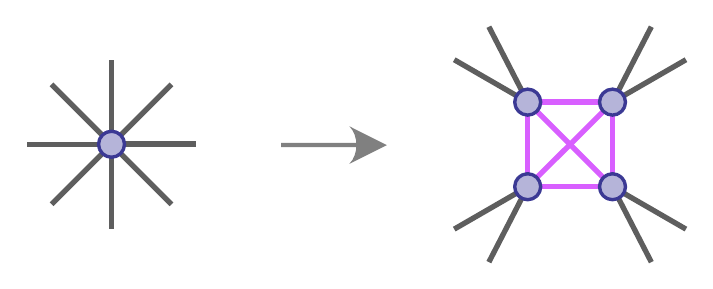}
  \caption{Complete degree reduction example (degree $8$ $\rightarrow$ degree $5$). The initial graph has $N_0=1$ vertex incident to $D_0=8$ edges and the transformed graph has $N_1=4$ vertices which are each incident to $D_1=5$ edges. All internal edges of the transformed graph have coupling strength greater than or equal to the largest coupling strength magnitude of external edges.}
  \label{fig:degree-reduction-example}
\end{figure}\noindent
\par\medskip
We will now derive the scaling relation for the reduction of a single vertex of arbitrary degree to a graph of vertices with at most degree three. This relation expresses how the number of vertices after degree reduction scales with the initial degree of the vertex. We assume that the following degree reduction procedure is applied. Given a vertex with degree~$D_0$, apply the complete degree reduction to all vertices recursively until the maximum degree in the graph is at most five. Then, use the degree~$5 \rightarrow 3$ reduction to reduce all vertices to a maximum degree of three. 

The objective in the following derivations is to find relatively tight upper bounds for how the degree reduction transformations scale. The chosen upper bounds in these calculations are not the tightest possible, because determining the tightest upper bounds further complicates the calculations and is not necessary for achieving a working approximation for the order of scaling.

Let~$n$ be the number of degree-reduction steps applied to the graph,~$D_n$ be the maximum possible degree of each vertex after~$n$ reduction steps, and~$N_n$ be the maximum possible number of vertices in each complete graph of the~$n^\text{th}$ reduction (see Fig.~\ref{fig:degree-reduction-example} for an example). For each complete degree reduction step, assume that the degree of all vertices after reducing is at least five, hence~$D_n \geq 5$ and~$N_n \geq 4$ for~$n\geq 1$.

When applying a complete degree reduction step, the degree of the vertex before reducing it is equal to the number of external edges of the subgraph replacing it, giving the relation
\begin{equation}\label{eq:recursive-relation-1}
  \left \lfloor{\frac{N_{n+1}}{2}}\right \rfloor N_{n+1} = D_n,
\end{equation}
noting that the first factor is the maximum number of external edges incident to a vertex according to complete degree reduction in Theorem~\ref{theorem:complete-degree-reduction}. From examining the structure of the complete graph after the $n^\text{th}$ reduction, $D_n$ can also be expressed as
\begin{equation}\label{eq:recursive-relation-2}
  D_n = \left \lfloor{\frac{N_n}{2}}\right \rfloor  + N_n - 1,
\end{equation}
where the first term corresponds to external edges and the second and third correspond to the other vertices of the complete graph.
Equating Eq.~\ref{eq:recursive-relation-1} and Eq.~\ref{eq:recursive-relation-2} while letting~$\left \lfloor{\frac{N_n}{2}}\right \rfloor = \frac{N_n}{2} - \alpha$ and~$\left \lfloor{\frac{N_{n+1}}{2}}\right \rfloor = \frac{N_{n+1}}{2} - \beta$ for some~$\alpha, \beta \in [ 0,1 )$ gives
\begin{equation}\label{eq:recursive-relation-2-2}
  N_{n+1} = \beta + \sqrt{\beta^2 + 3 N_n - 2\alpha - 2} \leq \sqrt{5 N_n},
\end{equation}
for $N_n \geq 4$. Let~$T^k_d$ be the total number of vertices resulting from~$k$ degree reduction steps such that after reduction the vertices each have a maximum possible degree of $d$. By using Eq.~\ref{eq:recursive-relation-2-2} recursively, noting the geometric series in the exponents, we get through simplification
\begin{equation}\label{eq:Tk}
  T^k_d = N_1 N_2 \ldots N_k \leq 5^{k} \left( N_1/5\right)^{2(1-2^{-k})}.
\end{equation}

Although $T_d^k$ is an expression for the total number of vertices in terms of $k$, the goal is to find it in terms of $D_0$, the degree of the initial vertex. This can be achieved by expressing $k$ in terms of $D_n$ by deriving a recursive relation for~$D_n$. Solving Eq.~\ref{eq:recursive-relation-1} for~$N_{n+1}$ gives,

\begin{equation}\label{eq:recursive-relation-3}
  N_{n+1} = \beta + \sqrt{ \beta^2 + 2 D_n} \leq \sqrt{ 5 D_n},
\end{equation}
for $D_n \geq 5$. By writing Eq.~\ref{eq:recursive-relation-2} with respect to $D_{n+1}$, we get

\begin{equation}\nonumber
  D_{n+1} = \frac{3N_{n+1}}{2} - \beta - 1,
\end{equation}
and substituting Eq.~\ref{eq:recursive-relation-3} in gives
\begin{equation}\nonumber
  D_{n+1} = \frac{3(\beta + \sqrt{ \beta^2 + 2 D_n})}{2} - \beta - 1 \leq \frac{3}{2}\sqrt{3D_n},
\end{equation}
for $D_n \geq 5$. So again, using recursion and the geometric series in the exponent, we get
  \begin{equation}\nonumber
    D_k \leq \left( \frac{27}{4}\right)^{1-2^{-k}} D_0^{2^{-k}}.
  \end{equation}\noindent
We use the complete degree reduction until every vertex has degree at most five and then apply the more optimized $5\rightarrow 3$ degree reduction which introduces a factor of nine to the total number of vertices.
So, letting~$D_k = 5$ in the above equation and solving for~$k$, we get
\begin{equation}\nonumber
  k \leq \log_2 \left(\frac{\ln(4D_0/27)}{\ln(20/27)} \right).
\end{equation}
After substituting this into Eq.~\ref{eq:Tk}, we can apply bounding arguments and Eq.~\ref{eq:recursive-relation-3} to reduce it to $T_5[D_0] \leq A D_0 \ln(D_0)^{\log_2 5}$ for some constant $A$ and where $k$ has been suppressed due to its dependence on $D_0$. Finally, using the~$5 \rightarrow 3$ degree reduction, the total number of vertices needed to degree reduce the vertices to degree three is
\begin{equation}\nonumber
  T_3[D_0] = 9 T_5[D_0] \leq A^\prime D_0 \ln(D_0)^{\log_2 5},
\end{equation}
for some constant $A^\prime$. Thus the number of vertices produced by the degree reduction procedure with respect to the degree of the initial vertex is of order
\begin{equation} \label{eq:reduction-complexity}
  \mathcal{O}\left( D_0\log(D_0)^{\log_2 5}\right).
\end{equation}

\end{document}